\theoremstyle{definition}
\newtheorem{theorem}{Theorem}[section]
\newtheorem{lemma}{Lemma}
\newtheorem{manualassumptioninner}{Assumption}
\newenvironment{manualassumption}[1]{%
    \begin{manualassumptioninner}
}{%
    \end{manualassumptioninner}
}
\crefname{manualassumptioninner}{assumption}{assumptions}
\Crefname{manualassumptioninner}{Assumption}{Assumptions}
\theoremstyle{definition}
\theoremstyle{remark}
\newtheorem{remark}[theorem]{Remark}
\DeclareMathOperator*{\argmax}{arg\,max}
\definecolor{midnightgreen}{rgb}{0.0, 0.29, 0.33}
\definecolor{lavender}{RGB}{147, 112, 219}
\newcommand{\lz}[1]{\textcolor{blue}{\bf\small [#1 --lz]}}
\title{Fairshare Data Pricing via Data Valuation for Large Language Models}
\author{
  Luyang Zhang \textbf{*} \\ 
  Carnegie Mellon University\\
  \texttt{luyangz@andrew.cmu.edu} \\
  \And
  Cathy Jiao \textbf{*} \\
  Carnegie Mellon University\\
  \texttt{cljiao@cs.cmu.edu} \\
  \AND
  Beibei Li \textdagger \\
  Carnegie Mellon University\\
  \texttt{beibeili@andrew.cmu.edu} \\
  \And
  Chenyan Xiong \textdagger \\
  Carnegie Mellon University\\
  \texttt{cx@cs.cmu.edu} \\
}
\begin{document}

\maketitle

\begin{abstract}
Training data is the backbone of large language models (LLMs), yet today’s data markets often operate under exploitative pricing -- sourcing data from marginalized groups with little pay or recognition. This paper introduces a theoretical framework for LLM data markets, modeling the strategic interactions between buyers (LLM builders) and sellers (human annotators). 
We begin with theoretical and empirical analysis showing how exploitative pricing drives high-quality sellers out of the market, degrading data quality and long-term model performance. 
Then we introduce \textit{fairshare}, a pricing mechanism grounded in \textit{data valuation} that quantifies each data’s contribution. It aligns incentives by sustaining seller participation and optimizing utility for both buyers and sellers. 
Theoretically, we show that \textit{fairshare} yields mutually optimal outcomes: maximizing long-term buyer utility and seller profit while sustaining market participation. 
Empirically when training open-source LLMs on complex NLP tasks, including math problems, medical diagnosis, and physical reasoning,
\textit{fairshare} boosts seller earnings and ensures a stable supply of high-quality data, while improving buyers’ performance-per-dollar and long-term welfare. Our findings offer a concrete path toward fair, transparent, and economically sustainable data markets for LLM. 

\end{abstract}

\section{Introduction}



High-quality training data is foundational to building effective and reliable large language models (LLMs). As LLMs take on increasingly complex tasks today -- such as coding\cite{chen2021humaneval}, reasoning \cite{wei2022chain}, and AI4Science \cite{taylor2022galactica} -- they rely heavily on carefully curated, human-annotated data. This growing demand has triggered a "generative data gold rush", with major tech companies racing to acquire training data, fueling the rise of a nascent AI data market \cite{reuters2024bigtech}. In this market, AI firms create networks of short-term contract workers to generate data labels, resembling an Uber-like gig economy for data \cite{reuters2024bigtech}. 

However, the current AI data market operates with limited oversight and is widely criticized for a lack of transparency and fairness in pricing \cite{paul2024bigtech, zhang2024surveydatamarkets,metcalf2021algorithmic,akerlof1978market}. Data prices are largely low and fail to reflect the quality or effort involved,  threatening the sustainability and quality of data supply \cite{mason2009financialcrowds, hara2018datadrivenamt,cbsnews2024}. In particular, for data sellers, such as human annotators or content creators, the prevailing market routinely undervalues their labor, offering compensation that neglects the skill, effort, and downstream value of their contributions. These harms are especially concentrated in low-wage labor markets, where annotators often face overwork, underpayment, and exclusion from decision-making \cite{time2023kenya}. This reflects a broader ethical concern known as ``AI parachuting'', where developers extract data from marginalized communities \cite{jo2020archives, birhane2021impossibility, qz_africa_parachute}, tieing to wider debates on epistemic injustice and data colonialism \cite{paullada2021datadiscontents,sambasivan2021datawork,gray2019ghost}. As one civil rights advocate noted on 60 Minutes, ``They don't pay well… they could pay whatever, and have whatever working conditions.''\footnote{\href{https://www.cbsnews.com/news/labelers-training-ai-say-theyre-overworked-underpaid-and-exploited-60-minutes-transcript/}{\emph{CBS, 60 Minutes}}}

Motivated by these issues, we present a fair pricing framework for the LLM training data market to promote equitable and sustainable generative AI ecosystems. Economic theory suggests that prices should reflect the value delivered to the buyer -- signaling quality and alignment \cite{spence1978job}. Guided by this, our framework introduces \textit{fairshare} pricing based on established \textit{data valuation} techniques for LLMs \cite{pruthi2020estimating, kwon2023datainf, xia2024less, yu2024mates}, which quantify each dataset’s contribution to model performance. Buyers and sellers both have access to data valuation scores, which guide decision-making on both sides: buyers use them to select datasets under budget constraints to maximize utility (i.e., a standard measure of welfare or satisfaction \cite{mas1995microeconomic,von2007theory}), while sellers use them to set prices based on the anticipated demand from buyers. This shared access -- enabled by our assumption of \textit{information transparency} -- supports procedural fairness \citep{konovsky2000understanding}, improves seller participation, and enhances overall data quality.





Our theoretical and empirical findings show that \textit{fairshare} pricing offers clear advantages compared to existing methods. 
First,  we show that existing exploitative pricing leads to a \textit{lose-lose} outcome for the data market. 
For data buyers, underpaying data sellers might cut costs in the short term -- but it comes at a cost of drive sellers away, shrinking the supply of high-quality training data. 
This weakens the data pipeline and limits model improvement, even as investments grow.
In contrast, we show theoretically that \textit{fairshare} pricing leads to a \textit{win-win} outcome: sellers maximize profit while remaining engaged, and buyers secure long-term utility by maintaining access to high-quality data.

Second, we empirically validate our approach through simulations of buyer-seller interactions in data markets.
We focus on training open-source LLMs on complex NLP tasks, including math problems \cite{amini2019mathqa}, medical diagnosis \cite{jin2020medqa}, and physical reasoning \cite{bisk2020piqa}. 
Analyzing both pricing and valuation outcomes, we find that under \textit{fairshare} pricing, buyers achieve higher model performance per dollar spent, making it particularly beneficial for those with limited budgets. 
In addition, our simulations of long-term market dynamics demonstrate that \textit{fairshare} pricing encourages sustained seller participation, resulting in a stable and sufficient supply of training data over time compared to exploitative pricing. These findings show that our framework's data-valuation-based pricing not only improves short-term training efficiency, but also ensures the long-term viability of the data market.

Finally, to evaluate the robustness and method-agnostic applicability of our framework, we conduct an ablation study using a diverse set of \textit{data valuation} methods (including BM25 \cite{trotman2014improvements}, $\text{Infl}_\text{IP}$ \cite{xia2024less}, and Datainf\cite{kwon2023datainf}) -- selected for their scalability and efficiency in LLM tasks. 
Across all variants, our \textit{fairshare} pricing framework consistently delivers mutually beneficial results for both buyers and sellers in the LLM data market, confirming that its performance is not tied to any specific data valuation technique. This analysis underscores a key strength of our approach: buyers and sellers can flexibly choose valuation methods tailored to their downstream needs without compromising the incentive-aligned structure of the market, and demonstrates the broad applicability of our solution.

To summarize our contributions are as follows:
\begin{enumerate}
\item We present a novel theoretical framework to model the LLM training data market utilizing data valuation, economics of market, and game theory. 
\item We propose a \textit{fairshare} pricing mechanism that captures the strategic interplay between buyers and sellers. Both theory and empirical analyses show that exploitative pricing results in lose-lose outcomes, while \textit{fairshare} promotes long-term stability and mutual benefits.
\item Our proposed pricing framework is highly generalizable across LLM models and tasks, and robust to the data valuation methods and utility measures used by market participants.
\end{enumerate}

Our results demonstrate the versatility of our approach in data markets by aligning incentives, sustaining data quality, and maximizing long-term value. This highlights our \textit{fairshare} mechanism as a guide for designing sustainable data procurement practices in public and private sectors.

\section{Related Work}\label{sec:related-work}

\textbf{Data Market Design and Exploitative Pricing.}  
Recent studies on ML data markets focus on platform-based pricing and allocation mechanisms such as auctions and personalized pricing \cite{agarwal2019marketplace, chen2019towards, chen2022selling, yang2022selling, zhang2024survey}. These frameworks typically aim to maximize profit or efficiency, assuming myopic buyers and fixed seller participation. They also consider \textit{information transparency} ensures that sellers have access to the same \textit{data valuation} scores used by buyers \cite{ghorbani2020distributional,chen2022selling}. 
The study was mainly conducted on classic machine learning models rather than large language models.

Concurrently, research has highlighted ethical concerns in LLM data acquisition, particularly around exploitative pricing. Annotation work -- central to LLM training -- is often outsourced to low-wage workers with minimal labor protections \cite{mason2009financialcrowds, hara2018datadrivenamt, toxtli2021quantifying, remotasks_wapo_2023}, who are frequently exposed to harmful content \cite{paul2024bigtech, time2023kenya, weidinger2021ethicalsocialrisksharm}. Several studies document low pay, precarious labor, and lack of protections in global data supply chains \cite{paullada2021datadiscontents, sambasivan2021datawork,gray2019ghost}. Although these studies highlight systemic exploitation in data labor, their findings have yet to be integrated into formal models of LLM data pricing. Economic theory suggests that misaligned compensation undermines efficiency, discourages participation, and lowers data quality. In LLM contexts, this reduces data availability and degrades model performance \cite{techcrunch2024, seetharaman2024gpt5, javed2025aws}, underscoring the need for value-aligned pricing mechanisms.

\textbf{Game-Theoretic Models of Pricing and Participation.}  In many domains, pricing problems are modeled thorugh game theory. Particularly, those involving decision-making are frequently modeled using \textit{Stackelberg games} and \textit{bilevel optimization} \cite{brander1985export}. Although these frameworks have not yet been applied to LLM data market, they have been extensively used in contexts such as resource allocations \cite{bard2013practical}, energy market \cite{von2010market, pei2020survey}, supply chains \cite{esmaeili2009game}, distributed systems \cite{maharjan2013dependable}.  In these settings, agents typically optimize revenue, efficiency, or utility while anticipating the strategic responses of others. In dynamic settings, these models extend to repeated interactions, where players balance immediate gains against long-term benefits \cite{stokey1989recursive,brander1985export}.  In repeated interactions, fairness perceptions are critical: studies show that perceived pricing unfairness can erode trust and reduce participation \cite{folger1998organizational, folger2001fairness, colquitt2013organizational, adamovic2023organizational}. 

\textbf{Data Valuation Methods}: \textit{Data valuation} estimates the contribution of training examples to model performance~\cite{hammoudeh2024training}. Examples include influence-function-based methods \cite{koh2017understanding}, which estimate data utility via model gradient computations. Variations such of this method enhance efficiency by approximating/bypassing the inverse-Hessian \cite{pruthi2020tracin}, or utilizing lower dimension model gradients (e.g., Datainf \cite{kwon2023datainf}), which balance efficiency and accuracy, and make them suitable for the LLM realm. Beyond the influence of loss functions, previous research \cite{wang2022understanding} has examined data valuation with respect to fractional utility functions. 

Data valuation approaches are effective for LLM training data selection \citep{xia2024less, yu2024mates, jiao2025datelm} and other applications including toxicity detection \citep{isonuma2024unlearning}, memorization analysis \citep{kounavis2023influence}, training optimization \citep{van2023hint}, and and active sampling \cite{pang2024fairness}. Shapley-based method is another direction \citep{shapley1953value,jia2019towards,ghorbani2019data,ghorbani2020distributional}, such as CS-Shaply\citep{schoch2022cs}, In-run Shapley\citep{wang2024data}, DU-Shapley\citep{garrido2024shapley}, etc. Simpler methods like BM25 \citep{trotman2014improvements} offer a model-agnostic baseline based on lexical similarity\citep{akyurek-etal-2022-towards, wu-etal-2024-enhancing-training}. While exact data value estimation is computationally costly at LLM scale, recent studies show that influence-based approximations correlate meaningfully with actual performance outcomes \cite{jiao2024context,wang2024greats}. For example, Jiao 
 et al.~\cite{jiao2024context} shows that influence-based methods have high correlation between their estimated data valuation scores and the oracle value, which is obtained through model re-training.

\section{A Theoretical Framework of LLM Data Market}\label{sec:market}

This section formalizes the LLM data market. We model buyer–seller interactions using economic utility theory and game theory, show that exploitative pricing leads to long-term market failure, and introduce the \textit{fairshare} pricing strategy.

\subsection{Data Market Definition}\label{sec:market_setup}

We formalize the LLM data market by modeling the decision-making processes of sellers and buyers in a non-cooperative \textit{Stackelberg game} \cite{zhang2024survey, von2010market, maharjan2013dependable}. In this setup, sellers first set prices by anticipating buyer demand.  Buyers then respond by selecting datasets to maximize their \textit{utility}.\footnote{\textit{Utility} is defined as a general economic concept that measures an individual’s welfare, benefit, or satisfaction \cite{mas1995microeconomic,von2007theory}, for example, monetary value, model fairness, or other domain-specific benefits.} Each player makes decisions based solely on the current state of the market, without considering future updates or interactions.

Notably, our choice of a non-cooperative \textit{Stackelberg} game fits the LLM data market’s reality: many independent, self-interested transactions where binding grand coalitions are infeasible, such as gig-work platforms \citep{gray2019ghost}. In contrast, cooperative bargaining targets few-party, contractible surplus splits, which is a different regime and question. In this non-cooperative game, we model fairness behaviorally via seller participation/exit that keeps the model tractable, and the leader–follower timing mirrors real price setting and demand response.
\begin{figure}[t]
    \centering
    \includegraphics[width=\textwidth]{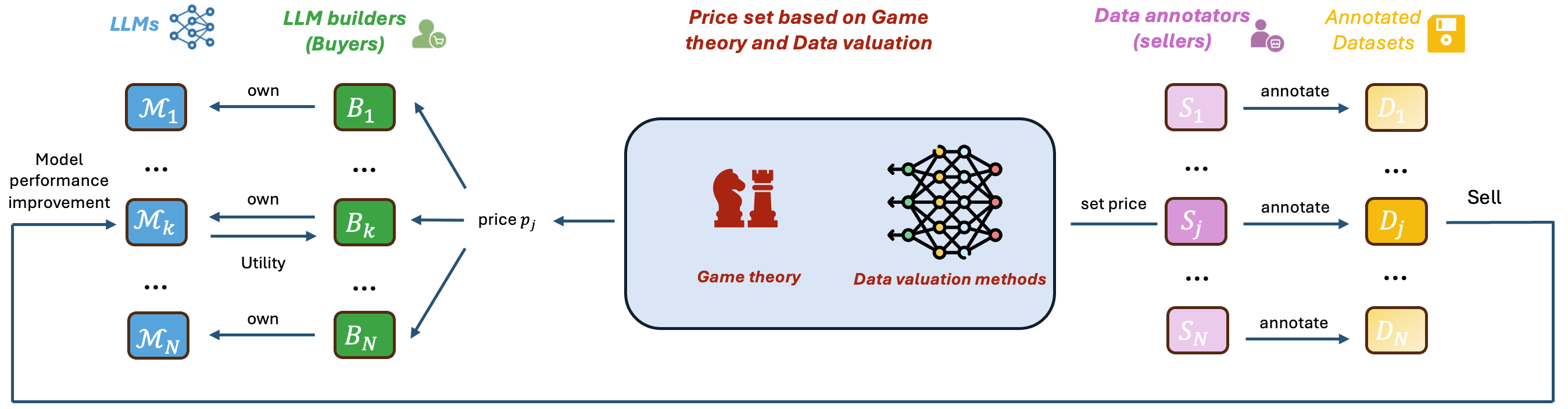}
    \caption{Illustration of the LLM data market, showing buyers (LLM builders) and sellers (annotators) interacting via a pricing mechanism based on game theory and data valuation.}
    \label{fig:market_graph}
\end{figure}

Following prior work~\cite{tadelis2016reputation}, we assume \textit{information transparency}: all participants observe each dataset's value.
Practically, this can be estimated using \textit{data valuation} methods that quantify its contribution to LLM performance. These valuation scores -- potentially generated by the platform or a trusted third party -- serve as common signals informing both pricing and purchasing decisions. 

Next, we define key factors determining each player's decision-making.
In practice, this can be estimated using \textit{data valuation} methods that quantify its contribution to LLM performance. These scores -- potentially generated by the platform or a trusted third party -- serve as common signals guiding both pricing and purchasing decisions. 

In the rest of this section, we formally define key factors determining each player's decision-making.



\textbf{Data Market Setup.} We begin by defining the players and the notion of \textit{utility}. The market comprises a set of buyers $\{B_{k}\}_{k=1}^{M}$, each with an LLM $\mathcal{M}_k$, and a set of sellers $\{S_{j}\}_{j=1}^{N}$, each with a dataset $D_j$. 
Each buyer gains a certain level of economic \textit{utility} from acquiring a new dataset based on a data value, 
which is captured by a \textit{data valuation} function, $v_k: D \rightarrow \mathbb{R}_{\geq 0}$, measuring the marginal contribution of each dataset $D$ to the performance of buyer $B_k$'s LLM $\mathcal{M}_k$.

The value $v_k$ can be estimated using a \textit{data valuation} method (e.g., \textit{Influence Function} \cite{koh2017understanding}, DataInf \cite{kwon2023datainf}, etc.). 
Once the \textit{data valuation} score $v_k$ is estimated, the corresponding \textit{utility} gain $u_k$ can be derived as a function of $v_k$. 
We use a task-specific mapping between data value $v_{k}$ and utility $u_k: v_{k} \rightarrow \mathbb R_{\geq 0}$.
\cref{sec:downstream_applications} lists common mappings between $u_k$ and $v_k$ in downstream tasks. 

\textbf{Buyer's Decision-Making.} Each buyer $B_k$ selects a subset of datasets to acquire, maximizing the \textit{net utility}, defined as the \textit{utility} gain minus total acquisition cost.

$B_k$'s decision, represented by a binary vector $\boldsymbol{x}$ (where each entry is 1 if the corresponding dataset is selected, and 0 otherwise), depends on three components: (i) the current dataset prices, (ii) buyer's budget $b_k$, and (iii) the \textit{utility} gain $u_k(\boldsymbol{x})$ from acquiring the selected datasets. 

This \textit{utility} gain reflects the downstream value, resulting from performance improvements in the buyer’s model after training on the dataset bundle $\boldsymbol{x}$.\footnote{Datasets may exhibit dependencies. The set utility $u_k(\boldsymbol{x})$ is not necessarily additive over individuals.}
The \textit{net utility} is defined as:
\begin{equation}\label{eqn:buyer_net_utility}
    g_{k, N} \left(\boldsymbol{x}\right) := u_{k}\left( \boldsymbol{x}\right) - \boldsymbol{x}^{T} \boldsymbol{p},
\end{equation}
where $\boldsymbol{p} := [p_1, \dots, p_{N}]$ is the price vector for current avaliable datasets. 

Finally, $B_{k}$'s \textit{purchasing problem} is formulated as selecting an optimal collection of datasets to maximize its \textit{net utility}:
\begin{equation}
    \Tilde{\boldsymbol{x}}^{k, N} 
    := \argmax_{\boldsymbol{x} \in \mathcal{X}_{k,N}}  \; g_{k, N}(\boldsymbol{x}), \quad \text{s.t.} \quad
    \mathcal{X}_{k,N} 
    :=  \{ \boldsymbol{x} \mid g_{k, N}(\boldsymbol{x}) \geq 0, \boldsymbol{x}^{T} \boldsymbol{p} \leq b_{k}  \}, \label{prob:pu_base_abs} 
\end{equation}
where $\Tilde{\boldsymbol{x}}^{k, N}$ is the optimal solution, and $\mathcal{X}_{k,N}$ includes all feasible solutions,  $\boldsymbol{x}^{T} \boldsymbol{p} \leq b_{k}$ ensures that the total acquisition cost is within the budget $b_{k}$, and $g_{k, N}(\boldsymbol{x}) \geq 0$ ensures a non-negative \textit{net utility}.

\textbf{Seller's Decision-Making.} When the seller $S_{j}$ offers dataset $D_j$, it sets a price to maximize its \textit{net profit}, defined as (i) the anticipated sales from all buyers, minus (ii) a fixed cost to create the dataset. Anticipated sales from buyers are estimated by solving the previous buyer's \textit{purchasing problem} (\Cref{prob:pu_base_abs}) known in the full transparent market. Formally, the seller's \textit{net profit} function is:
\begin{equation}\label{eqn:seller_obj}
    r(p_{j}) := \sum_{k = 1}^{M} \Tilde{\boldsymbol{x}}^{k, N}_{j}(\boldsymbol{p}) p_{j} - c_j,
\end{equation}
where $\Tilde{\boldsymbol{x}}^{k, N}_{j}(\boldsymbol{p})$ is $j$-th entry of buyer $B_k$'s decision vector, indicating if $B_k$ purchases $D_j$ ($j$-th entry is $1$) or not ($j$-th entry is $0$), given $\boldsymbol{p}$ the prices of all datasets in the market;
$c_j$ denotes the cost. The seller $S_{j}$ solves the following pricing problem:
\begin{equation}
   p^{*}_{j} 
   := \argmax_{{p}_{j} \in \mathcal{P}_{j, M}} \; r(p_{j}), 
   \quad \text{s.t.} \quad
   \mathcal{P}_{j, M} 
   := \{{p}_{j} \in \mathbb R_{+} \mid r(p_{j}) \geq 0\}, \label{prob:diff_pricing}
\end{equation}
where $ p^{*}_{j} $ is the optimal price and $r(p_{j}) \geq 0$ ensures that seller $S_{j}$'s \text{net profit} must be non-negative. It is noted that we assume that selling data for annotator is profitable, i.e., $p^*_j \geq c_j$.

In the above formulation, buyer purchase decisions depend on prices and data value; seller pricing decisions depend on anticipated sales from buyers. 
Together, these decisions define the equilibrium dynamics of a transparent and incentive-aligned LLM data market. This formulation can be extended to a royalty-based scheme as presented in \Cref{apdx:royalty_model}.

\subsection{Exploitative Pricing: A Lose-Lose Outcome}\label{sec:theory_exploitative_pricing}




We now present a theoretical analysis showing that exploitative pricing in data markets is ultimately unsustainable and detrimental to all participants. 

\textbf{Theoretical Setup.} We analyze a simplified dynamic setting involving a single buyer and a single seller over an infinite time horizon. At each time step $t$, both players decide whether to transact based on the proposed price. Importantly, to align with real-world market behavior, the seller’s future participation depends on the history of past transaction prices.


We first introduce some assumptions on the seller behavior.
\begin{manualassumption}{1}[\textit{Declining Participation Probability}]\label{ass:seller_participation} 
    When offered a price $p_t$ below the seller’s ideal price $p^*_{t}$, the probability of seller's continued participation declines. This decline is captured by a strictly increasing function 
    \begin{equation}
        \pi: (p_{t}, p^*_{t}) \rightarrow [0,1], \text{ satisfying } \pi(0, p^*_{t}) = 0 \text{ and } \pi(p^*_{t}, p^*_{t}) = 1.
    \end{equation}
    In addition, the probability that the seller $S$ remains active at time $T$ is $P_T := \prod_{t = 0}^{T - 1} \pi(p_{t}, p^*_{t})$.
\end{manualassumption}
\Cref{ass:seller_participation} models how the seller respond to exploitative pricing over time. It states that when offered a price below its ideal level, the seller become less likely to stay in the market, consistent with prior studies \cite{gale1994bottom,backus2020sequential}. For any exploitative pricing, the participation probability $P_T$ declines multiplicatively, leading to a sustained and irreversible drop in engagement.

To capture the degree of participation decline, we assume sellers respond sensitively to underpayment:
\begin{manualassumption}{1.1}[\textit{Sensitivity of Participation Is Lower-Bounded}]\label{ass:prob_function}
    The Lipschitz continuity of $\pi$ over exploitative pricing is lower-bounded:
    \begin{equation}
        \left| \pi(p_{t, 1}, p^*_{t}) - \pi(p_{t, 2}, p^*_{t})\right| \geq L \left| p_{t, 1} - p_{t, 2}\right|,
    \end{equation}
    
    for some constant $L > 0$ and all exploitative pricing $p_{t, 1}, p_{t, 2} \in [0, p^*_{t})$ for all $t$.
\end{manualassumption}

We also introduce the following assumptions on buyers:
\begin{manualassumption}{2}[\textit{Discount Factor Is Lower-Bounded}]\label{ass:discount_factor}
    The buyer evaluates long-term utility using a discount factor  $\delta$, satisfying:
    \begin{equation}
        \delta \geq \frac{1}{1 + L \min_{t \in [0, \infty)} \mathbb E[u_{t} - p_t^*]}, \forall t.
    \end{equation}
\end{manualassumption}
A higher discount factor $\delta$ indicates greater emphasis on future utility. \Cref{ass:discount_factor} places a lower bound on $\delta$, ensuring that the buyer sufficiently values future gains when making decisions. 

\Cref{ass:discount_factor} assumes the buyer values both immediate and future utility gains from acquiring training data. Major LLM developers invest in large-scale data acquisition and model training in expectation of long-term gains in performance, deployment value, and commercial returns \cite{chow2024aws,dotan2024gpt5,li2024time100ai,mayer2025superagency}. 
\begin{figure}[t]
    \centering
    \begin{subfigure}[t]{0.23\linewidth}
        \centering
        \includegraphics[width=\linewidth]{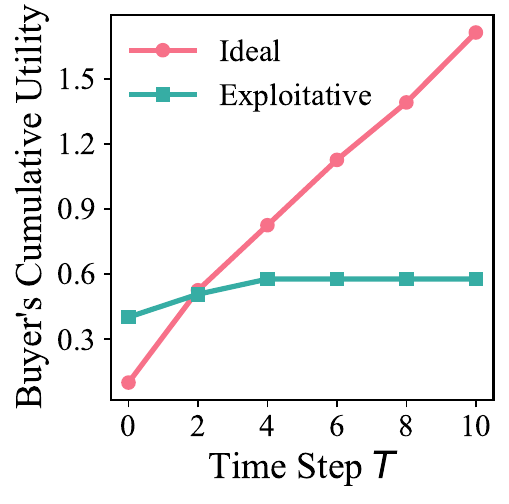}
        \caption{Buyer's utility.}
        \label{fig:explo_pricing_1}
    \end{subfigure}
    \hfill
    \begin{subfigure}[t]{0.23\linewidth}
        \centering
        \includegraphics[width=\linewidth]{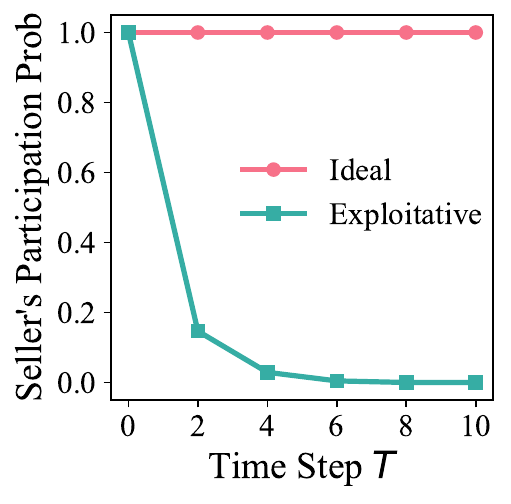}
        \caption{Seller's participation.}
        \label{fig:explo_pricing_2}
    \end{subfigure}
    \hfill
    \begin{subfigure}[t]{0.25\linewidth}
        \centering
        \includegraphics[width=\linewidth]{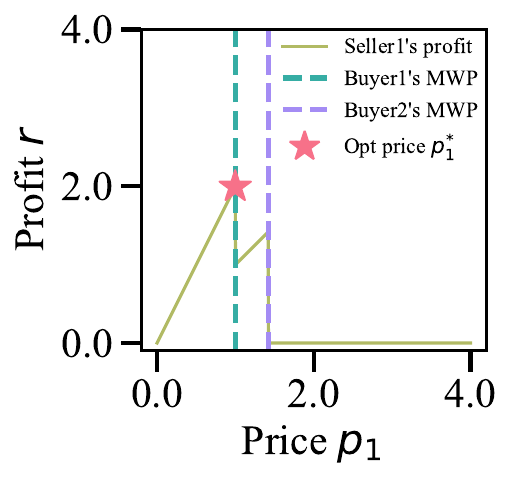}
        \caption{Seller $S_{1}$'s profit.}
        \label{fig:demo_opt_sol_1}
    \end{subfigure}
    \hfill
    \begin{subfigure}[t]{0.25\linewidth}
        \centering
        \includegraphics[width=\linewidth]{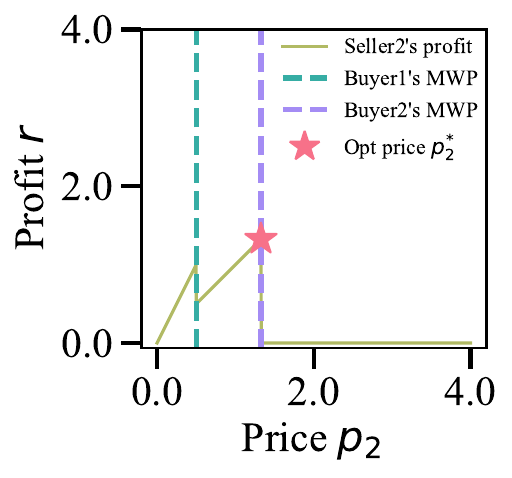}
        \caption{Seller $S_{2}$'s profit.}
        \label{fig:demo_opt_sol_2}
    \end{subfigure}
    \caption{
    Buyer's cumulative utility and seller participation under ideal and exploitative pricing (\ref{fig:explo_pricing_1},\ref{fig:explo_pricing_2}); 
    Profits of sellers $S_{1}$ and $S_{2}$ with buyer $B_{1}$'s and $B_{2}$'s MWP over price (\ref{fig:demo_opt_sol_1},\ref{fig:demo_opt_sol_2}).
    }
    \label{fig:combined_pricing_analysis}
\end{figure}


\textbf{Buyer's Objective.} With the previous setup, the buyer aims to maximize expected cumulative utility over an infinite horizon. Let $G(u_t, b_t)$ denote this value function, representing the buyer’s expected total cumulative utility at time $t$, conditioned on current utility $u_t$ and budget $b_t$. Then this objective satisfies the following Bellman equation\cite{bellman1957dynamic}:
\begin{equation}\label{eqn:bellamn}
    G(u_t, b_t) 
    = 
    \max_{p_{t} \in [0, \infty)} \left [\mathbb E \left[u_{t} - p_{t}\right] +  \delta \mathbb E\left [\pi(p_{t}, p^*_{t}) G(u_{t+1}, b_{t+1}) \mid u_{t}, b_{t} \right] \right].
\end{equation}
This Bellman equation captures the buyer’s central trade-off: offering an exploitative price $p_t$ increases immediate surplus $\mathbb{E} [u_t - p_t]$, but reduces future seller participation via a lower $\pi(p_{t}, p^*_{t})$; conversely, setting a fairer price decreases short-term gain but sustains future transactions by increasing $\pi(p_{t}, p^*_{t})$. With this insight, we then obtain the following result:
\begin{lemma}[\textit{Inevitable Failure of Exploitative Pricing}]\label{lem:exploitative_finite}
    With \Cref{ass:seller_participation,ass:prob_function,ass:discount_factor}, any exploitative pricing (i.e., $p_t < p_{t}^*, \forall t$) will only maximize cumulative utility within a finite horizon -- after which it is strictly suboptimal. 
\end{lemma}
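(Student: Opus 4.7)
The plan is to apply a one-step deviation argument to the Bellman recursion in \eqref{eqn:bellamn}. Fix any exploitative path $\{p_t\}_{t \geq 0}$ with $p_t < p_t^*$ for every $t$, and at an arbitrary step $t$ compare the value of paying $p_t$ against the value of deviating once to the ideal price $p_t^*$ while keeping all subsequent play identical. Writing $G^+ := \mathbb{E}[G(u_{t+1}, b_{t+1})]$, the deviation-minus-exploitation difference is
\begin{equation*}
    \Delta(p_t) = \delta \bigl(1 - \pi(p_t, p_t^*)\bigr) G^+ - (p_t^* - p_t),
\end{equation*}
which makes the core trade-off explicit: deviation sacrifices the immediate surplus $(p_t^* - p_t)$ in order to restore full participation and thereby recover the discounted continuation value $\delta G^+$ without the shrinkage from $\pi$.

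Next, I would use \Cref{ass:prob_function} to bound the participation gap. Taking the lower Lipschitz bound with $p_{t,1} = p_t^*$ and $p_{t,2} = p_t$, together with the boundary condition $\pi(p_t^*, p_t^*) = 1$, yields $1 - \pi(p_t, p_t^*) \geq L(p_t^* - p_t)$. Substituting and factoring out $(p_t^* - p_t) > 0$ gives $\Delta(p_t) \geq (p_t^* - p_t)\bigl(\delta L G^+ - 1\bigr)$, so it suffices to show $\delta L G^+ \geq 1$. I would lower-bound $G^+$ by the value of the sustained-ideal policy from $t+1$ onward, yielding $G^+ \geq \frac{m}{1-\delta}$ with $m := \min_t \mathbb{E}[u_t - p_t^*]$. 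Rearranging \Cref{ass:discount_factor} gives $\delta L m \geq 1 - \delta$, equivalently $\delta L \cdot \frac{m}{1-\delta} \geq 1$, which closes the chain $\delta L G^+ \geq 1$ and hence $\Delta(p_t) \geq 0$ at every $t$.

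I would then conclude as follows. Iterating the one-step deviation across all future steps converts the exploitative path into the ideal one while weakly improving the infinite-horizon objective at every step, so no exploitative policy can attain the infinite-horizon supremum. Under a truncated horizon $T$, however, the continuation value is replaced by a finite tail whose total contribution vanishes with $\delta^{T}$; then the bound $\delta L G^+ \geq 1$ can fail, and the positive per-step surplus $p_t^* - p_t$ can dominate, so exploitation may well maximize the finite-horizon objective. This yields the dichotomy the lemma asserts: exploitation can maximize cumulative utility only within some finite horizon and becomes strictly suboptimal beyond it.

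The main obstacle will be upgrading $\Delta(p_t) \geq 0$ to the strict suboptimality required by the statement, since \Cref{ass:discount_factor} is written as a weak inequality. I anticipate handling this either by invoking strict inequality in the generic regime, or, more robustly, by chaining deviations across multiple steps so that the compounded participation loss $1 - \prod_s \pi(p_s, p_s^*)$ is strictly positive whenever $p_s < p_s^*$, guaranteeing a strict improvement somewhere on the path. A secondary subtlety is ensuring $m > 0$, which I would derive from the seller's profitability constraint $p_j^* \geq c_j$ in \eqref{prob:diff_pricing} together with the buyer's non-negativity constraint in \eqref{prob:pu_base_abs}, ruling out degenerate cases where ideal pricing yields zero net utility.
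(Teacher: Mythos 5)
Your proposal follows essentially the same route as the paper's proof: a one-step comparison of the Bellman values at $p_t$ versus $p_t^*$, the lower bound $G \geq \min_t \mathbb{E}[u_t - p_t^*]/(1-\delta)$, and the chain $\delta L G \geq \delta L \cdot \frac{m}{1-\delta} \geq 1$ from \Cref{ass:prob_function,ass:discount_factor} to conclude the deviation gain is non-negative. You are in fact somewhat more careful than the paper — spelling out the rearrangement of \Cref{ass:discount_factor}, and correctly flagging that the argument as written only yields a weak inequality and implicitly requires $m > 0$, two points the paper's proof glosses over.
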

\Cref{lem:exploitative_finite} reveals a fundamental limitation: any exploitative pricing strategy is only optimal for a finite time. Over time, it becomes strictly suboptimal, due to the declining seller participation. Thus, no exploitative strategy can maximize cumulative utility in the long run. (See \Cref{apdx:proofs} for the proof, shows that exploitative pricing yields a suboptimal value function.)


\textbf{Exploitative Pricing Leads to Lose-lose:} 
While exploitative pricing clearly reduces seller welfare, our results show it also harms long-term cumulative utility for buyers. 
Although buyers may benefit initially from lower costs, reduced seller participation quickly leads to a decline in data quality and availability.
Over time, even well-resourced buyers face data shortages.
This sets off a self-defeating cycle where short-term savings come at the expense of long-term model performance.

In addition to our theoretical findings, we run a simplified dynamic market simulation with one buyer and one seller, making sequential decisions. The utility of the dataset and buyer's budget varies randomly over time, and seller participation follows $\pi(p_t, p_t^*) = p_t/p_t^*$. \Cref{fig:explo_pricing_1,fig:explo_pricing_2} provides consistent evidence that exploitative pricing causes rapid seller exit and an immediate shortage of training data. This dynamic closely resembles the classic ``market for lemons'' problem, where underpricing drives out high-quality supply, ultimately leading to market collapse \citep{akerlof1978market}.

\subsection{Fairshare Pricing: A Win-Win for Sellers and Buyers}\label{sec:fairshare_framework}



The rest of this section presents the \textit{fairshare} pricing in our LLM data market framework and show it yields a stable, \textit{mutually optimal outcome} for both buyers and sellers. 

\textbf{Seller-Side: Pricing Based on Buyers' Maximum Willingness to Pay.} 
When the seller $S_j$ prices its dataset $D_j$, it evaluates how each buyer’s decision changes depending on whether $D_j$ is available at a given price. For each buyer $B_k$, the seller compares two scenarios: (1) the buyer’s optimal dataset selection when $D_j$ is not part of the market, and (2) the buyer’s new optimal selection if $D_j$ is included at price $p_j$. The buyer will purchase $D_j$ only if the dataset $D_j$ increases its \textit{net utility} and remains within budget. 
Formally, let  $\Tilde{\boldsymbol{x}}^{k, N-1}$ be buyer $B_k$’s optimal decision without the presence of $D_j$. For any feasible decision $\boldsymbol{x} \in \mathcal{X}_{k, N-1}$, we define:
\begin{enumerate}
    \item \textit{Marginal utility} from adding $D_j$: $\Delta u_k(\boldsymbol{x}) := g_{k, N}(\boldsymbol{x} + e_j) - g_{k, N-1}(\Tilde{\boldsymbol{x}}^{k, N-1})$, where $e_j$ is the unit vector indicating $D_j$ is selected and $p_j$ is set as zero, and
    \item \textit{Budget surplus} based on the prior decision: $\Delta b_k(\Tilde{\boldsymbol{x}}^{k, N-1}) := b_k - (\Tilde{\boldsymbol{x}}^{k, N-1})^{T} \boldsymbol{p}$.
\end{enumerate}
A buyer’s maximum willingness to pay (MWP) is defined as the highest price buyer $B_k$ is willing to pay -- based on \textit{marginal utility} -- and able to pay -- based on \textit{budget surplus}:
\begin{equation}
    \text{MWP}_k := \max_{\boldsymbol{x} \in \mathcal{X}_{k,N-1}} \{\min \{\Delta u_k(\boldsymbol{x})^+, \Delta b_k(\Tilde{\boldsymbol{x}}^{k, N-1})\}\},
\end{equation}

where $\Delta u_k(\boldsymbol{x})^+ := \{\Delta u_k(\boldsymbol{x}), 0\}$ denotes the positive part of \textit{marginal utility}, ensuring buyer $B_k$'s $\text{MWP}_k$ is non-negative -- if the \textit{marginal utility} of $D_j$ is negative, the buyer will not purchase it.

Then the seller's optimal price $p_{j}^*$ is the MWP that results the largest profit across buyers:
\begin{lemma}[\textit{Characterization of Optimal Price $p_{j}^*$}]\label{lem:optimal price dymanic flat}
   Seller $S_{j}$'s optimal price for $D_{j}$ is characterized as one of buyers' MWP (see \cref{apdx:proofs} for proof):
    \small
    \begin{equation}
        p_{j}^{*} \in \cup_{k = 1}^{M} \text{MWP}_k.
    \end{equation}
    \normalsize
\end{lemma}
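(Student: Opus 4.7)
My plan is to reduce the lemma to showing two facts: (i) that each buyer's purchase indicator $\tilde{x}^{k,N}_j(\boldsymbol{p})$ is a step function in $p_j$ that flips from $1$ to $0$ exactly at $p_j = \text{MWP}_k$, and (ii) that the resulting profit function $r(p_j)$ is therefore piecewise linear in $p_j$ with strictly positive slope on every piece, so its maximum must occur at one of the breakpoints $\text{MWP}_k$.

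\textbf{Step 1: characterize the single-buyer threshold.} Fix buyer $B_k$ and treat the prices of all datasets other than $D_j$ as fixed (so $\tilde{\boldsymbol{x}}^{k,N-1}$ and $\Delta b_k(\tilde{\boldsymbol{x}}^{k,N-1})$ are fixed). I would argue that $B_k$ chooses to include $D_j$ at price $p_j$ if and only if there exists a feasible $\boldsymbol{x} \in \mathcal{X}_{k,N-1}$ such that adding $D_j$ to $\boldsymbol{x}$ (i) fits inside the remaining budget, i.e.\ $p_j \leq \Delta b_k(\tilde{\boldsymbol{x}}^{k,N-1})$, and (ii) yields a strictly larger net utility than the $D_j$-free optimum, i.e.\ $p_j \leq \Delta u_k(\boldsymbol{x})^+$. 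Taking the best such $\boldsymbol{x}$ gives exactly $\text{MWP}_k$, so the purchase indicator satisfies $\tilde{x}^{k,N}_j(\boldsymbol{p}) = \mathbb{1}[p_j \leq \text{MWP}_k]$ (with the boundary case handled by choosing the convention that at the threshold the buyer is willing to purchase). This step is where I expect the main friction, because the two pieces of the $\min$ are evaluated on different arguments ($\Delta u_k$ on $\boldsymbol{x}$, $\Delta b_k$ on $\tilde{\boldsymbol{x}}^{k,N-1}$), so I have to be careful to show that no re-optimization of the other datasets at price $p_j < \text{MWP}_k$ ever makes declining $D_j$ strictly better.

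\textbf{Step 2: piecewise-linear structure of profit.} Given Step 1, the profit in \eqref{eqn:seller_obj} becomes
\begin{equation}
    r(p_j) = p_j \cdot \bigl|\{k : \text{MWP}_k \geq p_j\}\bigr| - c_j.
\end{equation}
Order the MWPs as $\text{MWP}_{(1)} \leq \cdots \leq \text{MWP}_{(M)}$ and set $\text{MWP}_{(0)} := 0$. On each half-open interval $(\text{MWP}_{(i-1)}, \text{MWP}_{(i)}]$ the buyer count $n_i := M - i + 1$ is constant, so $r(p_j) = n_i \, p_j - c_j$ is linear in $p_j$ with nonnegative slope (strictly positive whenever at least one buyer remains). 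Hence $r$ is non-decreasing on each such interval and its restriction attains its maximum at the right endpoint $\text{MWP}_{(i)}$.

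\textbf{Step 3: conclude.} Because $r$ is piecewise linear with breakpoints exactly at $\{\text{MWP}_{(i)}\}_{i=1}^M$ and is non-decreasing on each piece, any global maximizer can, without loss of generality, be taken to coincide with one of these breakpoints; candidate prices outside $\cup_k \text{MWP}_k$ are strictly dominated by the next MWP to their right. Since the feasibility constraint $\mathcal{P}_{j,M} = \{p_j \in \mathbb{R}_+ : r(p_j) \geq 0\}$ is closed under this shift (moving to the right endpoint only increases $r$), we conclude $p_j^* \in \cup_{k=1}^M \text{MWP}_k$, as claimed. The only subtlety I anticipate beyond Step 1 is making the boundary convention (strict vs.\ weak inequality at $p_j = \text{MWP}_k$) consistent with the definition of $\tilde{x}^{k,N}_j$ so that the supremum is actually attained.
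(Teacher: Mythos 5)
Your proposal is correct and follows essentially the same route as the paper's proof: characterize each buyer's purchase decision as a threshold rule at $\text{MWP}_k$, observe that $r(p_j)$ is then piecewise linear with breakpoints exactly at the MWPs, and conclude the maximizer sits at a breakpoint. If anything, your Step 2 (explicitly ordering the MWPs and noting the nonnegative slope on each piece) makes the final "piecewise linear, hence optimized at a breakpoint" step more airtight than the paper's one-line version, and the subtleties you flag in Step 1 (the $\min$ taken over different arguments, and the strict-versus-weak boundary convention) are real but are glossed over in the paper as well.
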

We see that optimal price $p_j^*$ is \textit{fairshare} for the seller: it maximizes seller's profit while aligning with dataset's \textit{utility} and buyer's budget. 

As shown in \Cref{fig:demo_opt_sol_1,fig:demo_opt_sol_2}, we simulate a data market with $2$ sellers and $2$ buyers, where players make one-shot decisions. Seller $S_1$ sets its price first, followed by seller $S_2$. Each plot shows each seller's profit function $r(p_j)$, with breakpoints at each buyer's MWP. Pricing above a buyer’s MWP leads to a drop in sales. Thus, the seller’s optimal price aligns with the MWP that yields maximum revenue.

\textbf{Buyer-Side: \textit{Fairshare} Is Overall Optimal.} We now show that the \textit{fairshare} price $p_j^*$, derived from seller optimization, is also optimal from the buyer’s perspective. As established in \Cref{sec:theory_exploitative_pricing}, exploitative pricing leads to persistent seller exit. In contrast, \textit{fairshare} pricing ensures full participation and maximizing the buyer’s overall utility. To formalize this, we consider the same infinite-horizon setting in \Cref{sec:theory_exploitative_pricing} with a single buyer and seller. Notably, the \textit{fairshare} price is the ideal price for sellers as it maximizes its profit. Therefore, under \textit{fairshare} pricing, the sellers will maintain sustained participation. Then, we obtain:
\begin{lemma}[\textit{The Optimal Price for Buyer Is Also $p^*_{t}$}]\label{lem:opt_price_seller_buyer}
    The \textit{fairshare} price for the seller $S$ under LLM data framework is $p^*_{t} = \min\{u_{t}, b_{t}\},  \forall t$.     
    With \Cref{ass:seller_participation,ass:prob_function,ass:discount_factor},
    $p^*_{t}$ is the optimal pricing strategy that maximizes buyer's expected cumulative utility (\Cref{eqn:bellamn}).
\end{lemma}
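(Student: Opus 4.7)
The plan is to split the argument into two stages.

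\textbf{Stage 1 (closed form).} I would derive $p^*_{t} = \min\{u_{t}, b_{t}\}$ by specializing \cref{lem:optimal price dymanic flat} to the single-seller, single-buyer setting. In this degenerate case the buyer has no other datasets, so the baseline decision $\Tilde{\boldsymbol{x}}^{k,N-1}$ is the zero vector; the marginal utility from acquiring the sole dataset is $\Delta u = u_{t}$, and the full budget $b_{t}$ remains as surplus. Consequently $\text{MWP} = \min\{u_{t}, b_{t}\}$, and the characterization lemma forces $p^*_{t} = \min\{u_{t}, b_{t}\}$.

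\textbf{Stage 2 (Bellman optimality).} I would show $p^*_{t}$ solves the Bellman recursion \eqref{eqn:bellamn} by partitioning candidate prices into three regimes. For $p_{t} > p^*_{t}$, either $p_{t} > u_{t}$ (so the immediate surplus $u_{t}-p_{t}$ is negative and strictly dominated by simply setting $p_{t}=p^*_{t}$) or $p_{t} > b_{t}$ (infeasible); in both sub-cases the seller's participation probability is already saturated, so the continuation term is no larger than under $p^*_{t}$ while the instantaneous term is strictly smaller, hence dominated. For $p_{t} = p^*_{t}$, \cref{ass:seller_participation} gives $\pi(p^*_{t}, p^*_{t}) = 1$, the immediate surplus $u_{t} - p^*_{t} = \max\{0, u_{t} - b_{t}\}$ is non-negative, and the recursion proceeds with no decay in future seller participation. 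For $p_{t} < p^*_{t}$, I invoke \cref{lem:exploitative_finite}: under \cref{ass:seller_participation,ass:prob_function,ass:discount_factor} any such exploitative strategy is strictly suboptimal beyond a finite horizon and thus cannot attain the supremum of the infinite-horizon value $G(u_{t}, b_{t})$. Combining the three regimes yields $p_{t} = p^*_{t}$ as the maximizer.

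\textbf{Main obstacle.} The delicate case is $p_{t} < p^*_{t}$, where the argument rests entirely on \cref{lem:exploitative_finite}. The lower bound on $\delta$ in \cref{ass:discount_factor} together with the Lipschitz lower bound $L$ in \cref{ass:prob_function} is precisely what guarantees that the discounted future loss from eroding seller participation strictly outweighs any instantaneous savings. Threading this inequality through the recursion at every time step (so dominance applies to the full infinite-horizon value function rather than a one-shot Bellman update) is the subtlest element, but it already lives inside the cited lemma, so no fresh bound needs to be proved here; the remaining work is essentially careful bookkeeping to line up the three cases with the Bellman objective.
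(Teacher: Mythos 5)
Your proposal is correct and follows essentially the same route as the paper: the paper also establishes $p^*_t = \min\{u_t, b_t\}$ by the direct observation that any higher price is rejected (negative net utility or over budget) and any lower price leaves seller profit unmaximized, and then delegates the Bellman-optimality claim to the proof of \cref{lem:exploitative_finite}, exactly as your $p_t < p^*_t$ regime does. Your routing of Stage~1 through the MWP characterization of \cref{lem:optimal price dymanic flat} is a cosmetic difference only; the substance matches.
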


\begin{figure}[t]
\centering 
    \subfloat[T = 1]{
        \adjustbox{valign=t}{\includegraphics[width=0.225\textwidth]{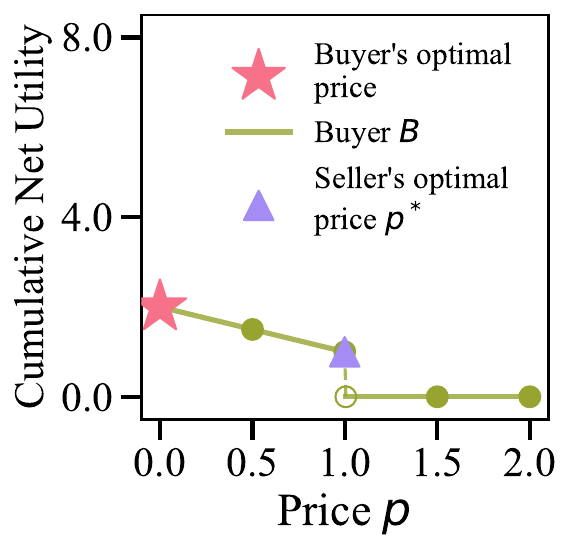}} 
    }
    \hspace{0.00\linewidth}
    \subfloat[T = 2]{
        \adjustbox{valign=t}{\includegraphics[width=0.225\textwidth]{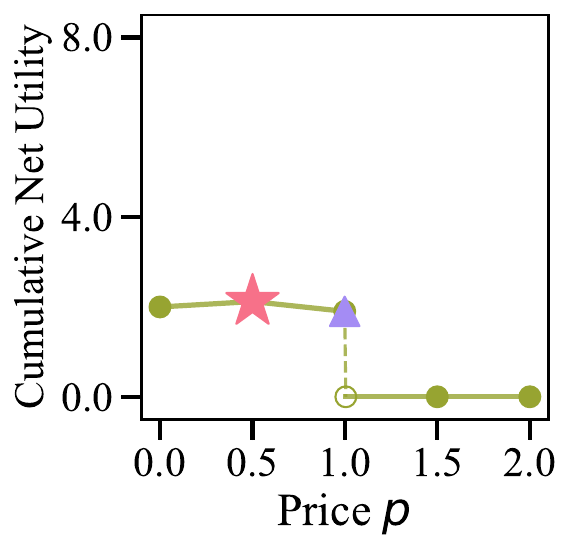}} 
    }
    \hspace{0.00\linewidth}
    \subfloat[T = 5]{
        \adjustbox{valign=t}{\includegraphics[width=0.225\textwidth]{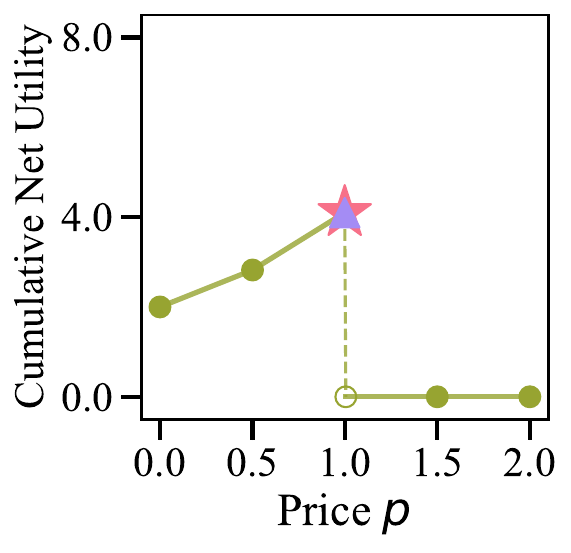}} 
    }
    \hspace{0.00\linewidth}
    \subfloat[T = 10]{
        \adjustbox{valign=t}{\includegraphics[width=0.225\textwidth]{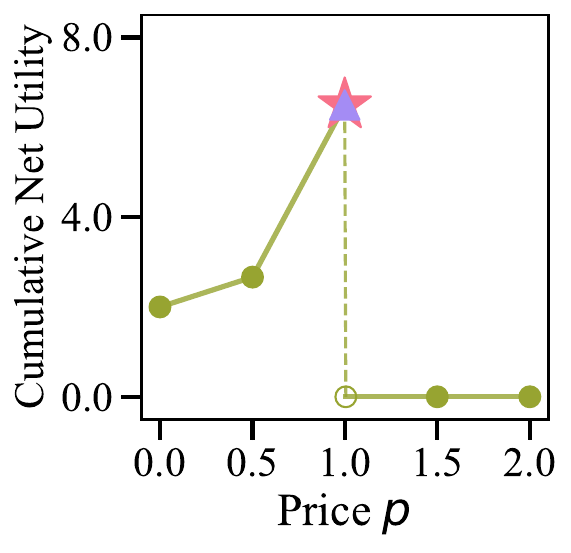}} 
    }
    \caption{Analysis of the buyer's cumulative net utility as a function of the acquisition prices over time ($T = 1, 2, 5, 10$). Note: the market has one buyer and seller.}
  \label{fig:WinWin}
\end{figure}
In \cref{fig:WinWin}, we develop a synthetic simulation illustrating that a buyer’s optimal pricing strategy rapidly converges to the \textit{fairshare} price. Consider a simulation with players receive constant utility $u_{t} = 2$ and budget $b_{t} = 1, \forall t$, with the discount factor $\delta = 0.95$, so the seller's optimal price remains $p^*_{t} = 1$. At each time step, the buyer selects a price that maximizes cumulative net utility. The buyer begins with a low, exploitative price at $T = 1$, but by $T = 5$, converges to the \textit{fairshare} price and maintains it thereafter. This illustrates our core theoretical insight: fair pricing emerges as the optimal long-run strategy when buyers account for overall market sustainability.

Next, we also explore the role of the discount factor $\delta$ in our framework:
\begin{lemma}[\textit{The Trade-Off Threshold Is Increasing as $\delta$ Decreases}]\label{lem:increasing_threshodl}
    The threshold time period where the fairshare pricing obtains higher cumulative utility than any class of exploitative pricing is:
        \begin{equation}
        t^*  := \sup_{p_t < p_t^*, \forall t} \left\{ T \in \mathbb N : \mathbb E \left[\sum_{t = 0}^{T} \delta^{t}\left(\left(u_{t} - p^*_{t} \right) - \prod_{i = 0}^{t - 1} \pi(p_{i}, p^*_{i}) \left(u_{i} - p_{i} \right) \right) \right] \leq  0 \right\}. \label{eqn:threshold}
    \end{equation}
    And $t^*$ is increasing as $\delta$ decreases.
\end{lemma}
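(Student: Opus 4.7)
The plan is to recognize $t^*(\delta)$ as the last horizon at which some exploitative schedule keeps the \textit{fairshare}-minus-exploitative surplus nonpositive, and then to prove monotonicity in $\delta$ via a sign-change argument on the per-period difference. I would introduce
\begin{equation*}
\Delta_t := (u_t - p_t^*) - P_t\,(u_t - p_t), \qquad P_t := \prod_{i=0}^{t-1}\pi(p_i,p_i^*),
\end{equation*}
and $F(T,\delta,\{p_t\}) := \mathbb{E}\bigl[\sum_{t=0}^T \delta^t \Delta_t\bigr]$, so that $t^*(\delta) = \sup\{T : \exists \text{ exploitative } \{p_t\},\ F(T,\delta,\{p_t\})\leq 0\}$ and the target is that $\delta \mapsto t^*(\delta)$ is nonincreasing, which is the precise form of ``$t^*$ increases as $\delta$ decreases.''

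First I would pin down the sign pattern of $\{\mathbb{E}[\Delta_t]\}$ along any strictly exploitative path. At $t=0$ the empty product gives $\mathbb{E}[\Delta_0] = \mathbb{E}[p_0 - p_0^*] < 0$. Applying \Cref{ass:prob_function} with $(p_{t,1}, p_{t,2}) = (p_t^*, p_t)$ together with the boundary value $\pi(p_t^*, p_t^*) = 1$ from \Cref{ass:seller_participation} yields $\pi(p_t, p_t^*) \leq 1 - L(p_t^* - p_t) < 1$, so $P_t$ decays geometrically along any exploitative schedule; combined with $u_t \geq p_t^* = \min\{u_t, b_t\}$ from \Cref{lem:opt_price_seller_buyer}, $\mathbb{E}[\Delta_t]$ is nonnegative and eventually strictly positive. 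This is exactly the mechanism already established in the proof of \Cref{lem:exploitative_finite}, which I would invoke rather than re-derive. In the regular case there is a unique threshold $t_0$ such that $\mathbb{E}[\Delta_t] < 0$ for $t < t_0$ and $\mathbb{E}[\Delta_t] \geq 0$ for $t \geq t_0$, with at least one strict positive.

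The crux is a monotonicity computation. Fix an exploitative $\{p_t\}$ and a horizon $T$, and view $F(T,\cdot)$ as a polynomial of degree at most $T$ in $\delta$. I would show $\partial_\delta F(T,\delta) > 0$ at every $\delta$ with $F(T,\delta) = 0$. Writing
\begin{equation*}
\delta\,\partial_\delta F \;=\; \sum_{t<t_0} t\,\delta^t\,\mathbb{E}[\Delta_t] \;+\; \sum_{t\geq t_0} t\,\delta^t\,\mathbb{E}[\Delta_t],
\end{equation*}
each summand in the first block satisfies $t\,\mathbb{E}[\Delta_t] \geq (t_0-1)\mathbb{E}[\Delta_t]$ (negative factors, $t \leq t_0-1$), and each summand in the second satisfies $t\,\mathbb{E}[\Delta_t] \geq t_0\,\mathbb{E}[\Delta_t]$ (nonnegative factors, $t \geq t_0$). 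At a zero of $F$ the two partial sums cancel in magnitude, so summing the two lower bounds collapses to $\delta\,\partial_\delta F \geq \sum_{t\geq t_0}\delta^t \mathbb{E}[\Delta_t] > 0$. Hence $F(T,\cdot)$ has at most one zero in $(0,1)$ and transitions from negative to positive, so $\{\delta : F(T,\delta)\leq 0\}$ is a down-interval. It follows that $T^*(\delta,\{p_t\}) := \sup\{T : F(T,\delta,\{p_t\})\leq 0\}$ is nonincreasing in $\delta$, and since the pointwise supremum over exploitative schedules preserves this monotonicity, $t^*(\delta) = \sup_{\{p_t\}} T^*(\delta,\{p_t\})$ is also nonincreasing in $\delta$, which is the claim.

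The main obstacle is guaranteeing the single sign change of $\mathbb{E}[\Delta_t]$ used in the bound. When $(u_t, b_t)$ vary adversarially in time, $\mathbb{E}[\Delta_t]$ may oscillate and the clean split at one threshold $t_0$ breaks. I would handle this either by imposing the mild regularity already implicit in \Cref{lem:exploitative_finite} (stationary or slowly-varying $(u_t, b_t)$, under which the geometric decay of $P_t$ forces $\mathbb{E}[\Delta_t]$ to be monotone in $t$), or by upgrading the bound to a discrete rearrangement: at any zero of $F$, greedily pair each negative contribution with a strictly later positive contribution so that every negative index is dominated by its paired positive index in $t$, and then add the pairwise inequalities to recover $\partial_\delta F > 0$. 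Either route preserves the final monotonicity conclusion.
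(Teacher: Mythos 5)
Your proposal is correct in its core mechanism but takes a genuinely different, and substantially more careful, route than the paper. The paper's entire proof is the assertion that the bracketed sum $\sum_{t=0}^{T}\delta^{t}\mathbb{E}[\Delta_t]$ is increasing in $\delta$, from which the monotonicity of $t^*$ follows immediately; no justification is given, and taken term-by-term the assertion is false, since $\mathbb{E}[\Delta_0]=\mathbb{E}[p_0-p_0^*]<0$ and any early negative terms are weighted \emph{more} heavily as $\delta$ grows. Your approach repairs exactly this: by localizing the monotonicity claim to the zero set of $F(T,\cdot)$ and exploiting the cancellation $S_-+S_+=0$ there, you get $\delta\,\partial_\delta F\geq S_+>0$ at every zero, which is enough to conclude that $\{\delta: F(T,\delta)\leq 0\}$ is a down-interval even though $F(T,\cdot)$ need not be globally monotone. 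This buys a valid proof where the paper's argument, read literally, has a hole. The price is the single-sign-change hypothesis on $\{\mathbb{E}[\Delta_t]\}$, which you correctly flag as the remaining obstacle; note that the paper does not address this either (its proof implicitly needs at least as much), so your explicit regularity condition or the pairing/rearrangement upgrade (equivalently, requiring the partial sums $\sum_{t\leq s}\delta^t\mathbb{E}[\Delta_t]$ to be nonpositive for $s<T$, which Abel summation converts directly into $\sum_t t\,\delta^t\mathbb{E}[\Delta_t]\geq 0$ at a zero of $F$) is an improvement rather than a defect relative to the published argument. One small caution: your claim that $\mathbb{E}[\Delta_t]$ is eventually strictly positive needs $\mathbb{E}[u_t-p_t^*]$ bounded away from zero, which is implicit in \Cref{ass:discount_factor} but worth stating.
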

We run detailed robustness check with different values of $\delta$ for our experiments in \Cref{apdx:exp_setup_pricing}.
\section{Empirical Analyses: Benefits of Fairshare Pricing}\label{sec:exp_setup}

This section evaluates the LLM data market and the proposed \textit{fairshare} pricing framework.

\subsection{Experimental Setup}\label{sec:exp_set_up}

In our experiments, a buyer is equipped with a single LLM, and each seller owns one data sample. 
Buyers seek to buy training data to improves task-specific model performance (e.g.,math problem solving, medical diagnosis, or physical reasoning), which in turn increases their \textit{utility}. 
(We assume an affine mapping between performance and utility; see \Cref{sec:linear_outcome}.)

\textbf{Buyers and Models.} We consider three buyers, each using a standard open-source LLMs: Llama-3.2-Instruct-1b \cite{grattafiori2024llama3herdmodels}, Pythia-1b, and Pythia-410m \cite{biderman2023pythia}. These models are pre-trained on different corpora and exhibit varying preferences for downstream post-training data \cite{mai2024fine}.

\textbf{Sellers and Datasets.} We focus on challenging, human-annotated tasks: MathQA and GSM8K \cite{amini2019mathqa, cobbe2021training} for {math}, MedQA \cite{jin2020medqa} for {medical diagnosis}, and PIQA \cite{bisk2020piqa} for {physical reasoning} \cite{lu2022survey,liu2023mathematical,ahn2024large}. \cref{tab:math-task-examples} in \cref{apdx:tables_figures} shows dataset splits and examples. We use the training splits as seller data and simulate the market dynamics from \Cref{sec:market}, treating each task as a market scenario.



\subsection{LLM Data Market Experiments}\label{sec:exp_pricing}
\begin{figure}[t]
\centering 
    \subfloat[High-budget buyer.]{
        \adjustbox{valign=t}{\includegraphics[width=0.225\textwidth]{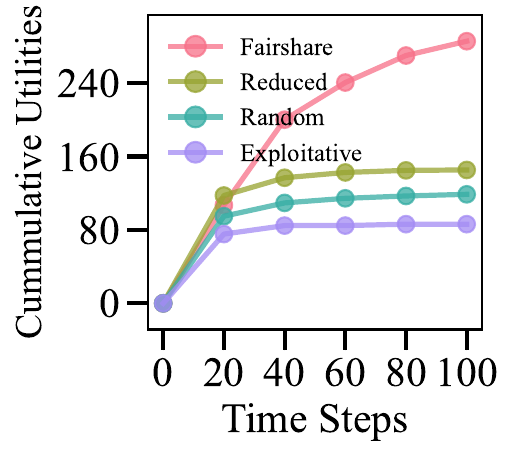}} 
        \label{fig:exp_pricing_buyer_medqa_pythia_high}
    }
    \hspace{0.00\linewidth}
    \subfloat[Low-budget buyer.]{
        \adjustbox{valign=t}{\includegraphics[width=0.225\textwidth]{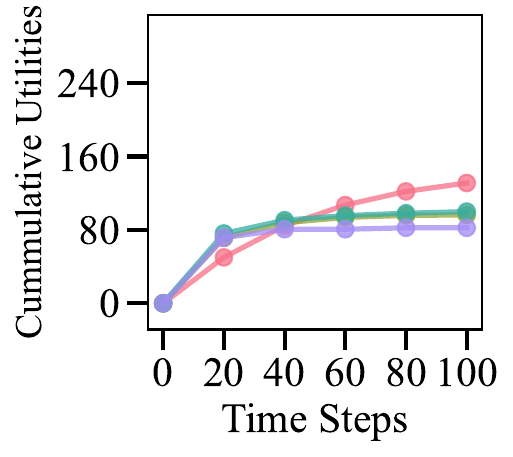}} 
        \label{fig:exp_pricing_buyer_medqa_pythia_low}
    }
    \hspace{0.00\linewidth}
    \subfloat[Sellers' profits.]{
        \adjustbox{valign=t}{\includegraphics[width=0.225\textwidth]{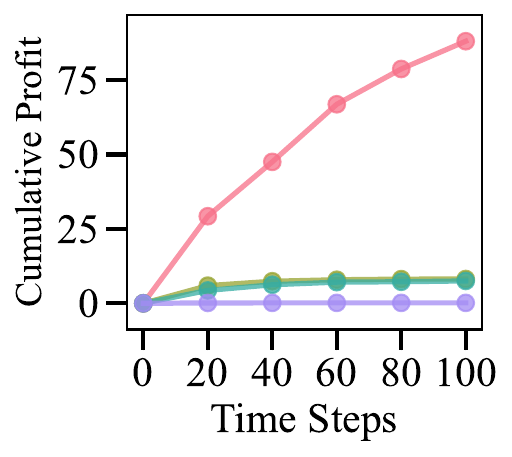}} 
        \label{fig:exp_pricing_seller_medqa_pythia_profit}
    }
    \hspace{0.00\linewidth}
    \subfloat[Sellers' participation.]{
        \adjustbox{valign=t}{\includegraphics[width=0.225\textwidth]{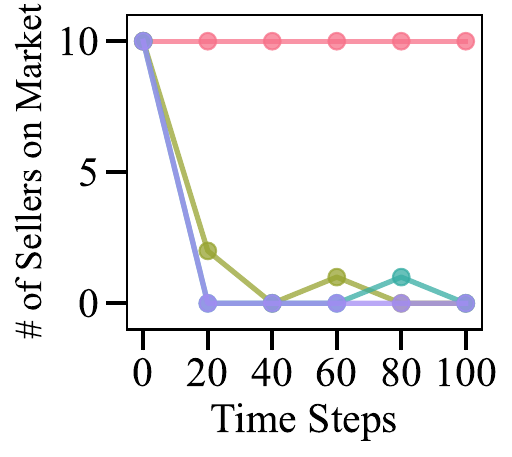}} 
        \label{fig:exp_pricing_seller_medqa_pythia_num}
    }

    \caption{(1) buyer's cumulative utilities with high- (\cref{fig:exp_pricing_buyer_medqa_pythia_high}) and low-budget buyer (\cref{fig:exp_pricing_buyer_medqa_pythia_low}), and (2) sellers' average cumulative profits (\cref{fig:exp_pricing_seller_medqa_pythia_profit}) and active seller numbers (\cref{fig:exp_pricing_seller_medqa_pythia_num,}) over $100$ time periods. Pythia-1b; MedQA; Groups: (1) fairshare, (2) reduced, (3) random, and (4) exploitative. 
    }
  \label{exp:pricing_medqa}
\end{figure}

  
We first evaluate our pricing framework in terms of buyer and seller welfare. 

\label{sec:exp_pricing_setup}

\textbf{Market Design:} Following the general setups in \Cref{sec:exp_set_up}, we simulate a market with $2$ buyers and $10$ sellers over multiple time steps. To examine buyers with varying resources, we include a high-budget buyer (well-funded LLM builder) and a low-budget buyer (under-resourced one). Each buyer's budget is randomly drawn from distributions with different mean. At each time step, (1) sellers arrive sequentially with a new dataset (~$300$ data samples) at a fixed price, and then (2) once all sellers arrive, buyers make purchases by solving \cref{prob:pu_base_abs}. Full details are in \cref{apdx:exp_setup_pricing}.


\textbf{Participation Function:}  Following \Cref{ass:seller_participation,ass:prob_function,ass:discount_factor}, we simulate $100$ time steps with a discount factor $\delta = 0.98$. We also run experiments with different value of $\delta$ showing consistent and robust results. (See \Cref{apdx:exp_setup}). Seller participation probability is defined as $\pi(p_{j,t}, p^*_{j,t}) = p_{j,t}/p^*_{j,t}$ for its simplicity and compliance with \Cref{ass:seller_participation,ass:prob_function}. Sellers receiving exploitative pricing ( $p_{j,t} < p^*_{j,t}$) are less likely to participate.

\textbf{Pricing Methods:} We consider four pricing methods: (1) \textit{Fairshare} -- $p_{j,t} = p^*_{j,t}$ (our \textit{fairshare} pricing framework); (2) \textit{Reduced} -- a fixed discount of \textit{fairshare} price, $p_{j,t} = c * p^*_{j,t}$ with $c = 0.5$; (3) \textit{Random} -- random drawn from $(0,  p^*_{j,t})$; (4) \textit{Exploitative} -- 
fixed low price ($10\%$ of the avg. utility).

\label{sec:exp_pricing_results}


\cref{exp:pricing_medqa} compares the overall welfare outcomes of different pricing methods for buyers and sellers using Pythia-1b on the MedQA task. Results on MathQA and PIQA (with Pythia-410m and Llama-3.2-Instruct-1b) in \Cref{apdx:tables_figures} show similar patterns.


\begin{wrapfigure}{r}{0.5\textwidth} 
    \centering
    \begin{subfigure}[t]{0.48\linewidth}
        \centering
        \includegraphics[width=\linewidth]{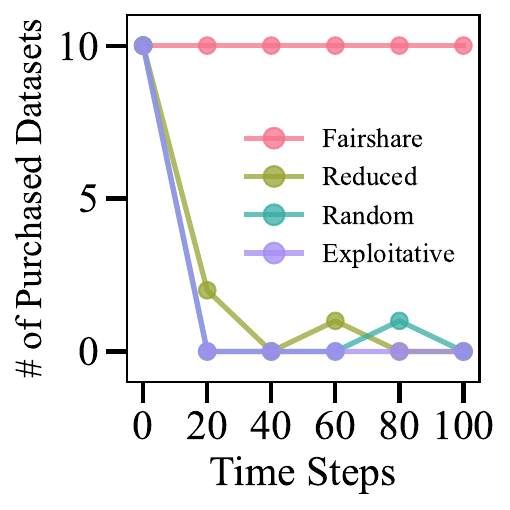}
        \caption{High-budget buyer.}\label{fig:exp_pricing_Analysis_mechansim_pythia_high}
    \end{subfigure}
    \hspace{0.01\linewidth}
    \begin{subfigure}[t]{0.48\linewidth}
        \centering
        \includegraphics[width=\linewidth]{graphs/Analysis_mechansim_medqa_pythia_purchase_2}
        \caption{Low-budget buyer.}\label{fig:exp_pricing_Analysis_mechansim_pythia_low}
    \end{subfigure}
    \caption{Purchased datasets for the buyer with high budget (\cref{fig:exp_pricing_Analysis_mechansim_pythia_high}) and low budget (\cref{fig:exp_pricing_Analysis_mechansim_pythia_low}) over $100$ time periods. Pythia-1b; MedQA.} 
    \label{fig:Analysis_mechansim}
\end{wrapfigure}
\textbf{Exploitative Pricing Leads to Lose-Lose Outcomes:} The exploitative pricing method, which sets uniformly low, fixed prices to reflect real-world practices (\cite{cbsnews2024}), offers LLM developers short-term utility gains (see \cref{fig:exp_pricing_buyer_medqa_pythia_high,fig:exp_pricing_buyer_medqa_pythia_low}). However, this approach systematically undervalues datasets and unfairly compensates annotators. As a result,  sellers exit the market over time, triggering a collapse in data supply. Even well-funded LLM developers are unable to source sufficient data -- hindering the advancement of LLMs.

\textbf{Fairshare Pricing Leads to Win-Win Outcomes:} 
For sellers, \textit{fairshare} pricing consistently yields the highest profit over time (\Cref{fig:exp_pricing_seller_medqa_pythia_profit}), aligning with predictions from our theoretical model (\Cref{sec:market_setup}). 
For buyers, \textit{fairshare} pricing framework is particularly effective for the high-budget buyer (\cref{fig:exp_pricing_buyer_medqa_pythia_high}), maximizing long-term utility. 
However, the low-budget buyer (\cref{fig:exp_pricing_buyer_medqa_pythia_low}) experiences reduced short-term utility in exchange for long-term gains. 
Its limited budget prevents them from fully leveraging the increased dataset supply ensured by \textit{fairshare} pricing, making other low-pricing methods initially more appealing. 
Yet, in the long run, \textit{fairshare} pricing sustains seller participation, ensuring data supply. 


\subsection{Ablation Study: Effect of Different Data Valuation Methods}\label{sec:exp_utility}

This experiment assesses the impact of different \textit{data valuation} strategies in the \textit{fairshare} framework.


\textbf{Setup.} Using the models and datasets introduced earlier in \cref{sec:exp_set_up}, we run separate simulations for each market (math, medical, physical reasoning), testing different \textit{data valuation} methods for the buyer. Each buyer is randomly assigned to use one of four valuation methods to select training data and fine-tune their model. The seller receives payments according to the data's assigned value. Valuation scores are normalized to $[0,1]$ for pricing compatibility.
 

\textbf{Data Valuation Methods.} We consider the following methods, where each assigns a value to every data sample: (1) \textit{Constant} baseline -- assigns the same value, mimicking flat-rate pricing on platforms; (2) \textit{Random} baseline -- values drawn uniformly from $[0, 1]$; (3) \textit{Semantic} --  uses BM25 \cite{trotman2014improvements} to compute average similarity to the representative set; (4) \textit{Influence-based:} returns a score which leverages learning gradients to estimate a data sample's avg. contribution to model learning of a representative dataset. 
Specifically, we use $\text{Infl}_\text{IP}$ \cite{pruthi2020tracin, xia2024less} and DataInf~ \cite{kwon2023datainf}, which are influence-based methods adapted for the LLM realm. See \cref{appendix:influence-functions} for additional details.

Furthermore, we adopt a one-step training approach for data valuation, where the value of each data sample is estimated by performing a single training step on it and measuring the resulting change in model performance relative to the original model. Results are presented in in \cref{appendix:oracle}. This approach serves as an “oracle” for influence-based methods \cite{yu2024mates, pruthi2020tracin}, as it directly quantifies data value through model training, although it is more computationally expensive than the previous listed methods.

\textbf{Market/Pricing Setup:} We reserve $1\%$ of the samples from each dataset's training split to represent the existing data in their respective markets. Each data sample was randomly priced between $(0,1]$. Next, for each remaining data sample in the training set, we determine whether each buyer will purchase the data sample at potential price points $[0.5, 0.625, 0.75, 0.875, 1.0]$ by solving \cref{prob:pu_base_abs}. The seller then sets their prices according to \cref{prob:diff_pricing}. We price data separately for each data valuation method. This assesses the method's ability to discern whether a new data sample is worth purchasing for each buyer given the existing market data, as noted in our analysis in \cref{sec:market}. Further details on all experiments are described in \cref{apdx:exp_setup_valuation}.





\textbf{Results.} \cref{fig:price_performance} presents results across \textit{data valuation} methods. Buyers using a valuation method (ie., BM25, $\text{Infl}_{\text{IP}}$, DataInf) in general achieved higher model performance across tasks. When considering the trade-off between cost and performance, $\text{Infl}_{\text{IP}}$ offered the best balance, delivering strong model improvements at a lower cost than constant, random, and BM25 (Figures \ref{fig:llama_price} and \ref{fig:pythia_price}). Our results highlight the benefits of learning-aware data valuation methods.
By prioritizing high-impact data, they offer a better alternative for buyers, particularly those with limited budgets.



We also run the error analysis of \textit{data valuation} methods. Following previous studies \cite{jiao2024context, wang2024greats}, we compute the correlation between Oracle and $\text{Infl}_{\text{IP}}$, reporting spearman correlation of $0.54, 0.42$ for MathQA and PIQA (see \Cref{fig:oracle} in \Cref{apdx:tables_figures}). This underscores the opportunity for future advances in data valuation accuracy and scalability, which can be seamlessly integrated into our flexible fairshare framework.

\begin{figure}[t]
\centering 
    \subfloat[Avg Price (Llama)]{
        \adjustbox{valign=t}{\includegraphics[width=0.19\textwidth]{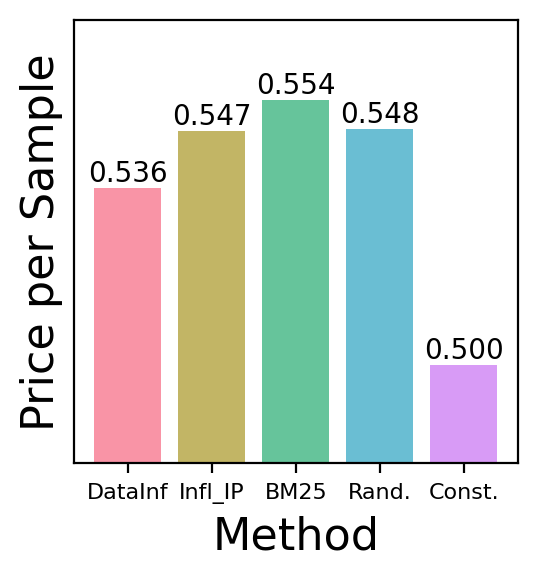}
        } 
        \label{fig:llama_price}
    }
    \subfloat[MathQA (Llama)]{
        \adjustbox{valign=t}{\includegraphics[width=0.21\textwidth]{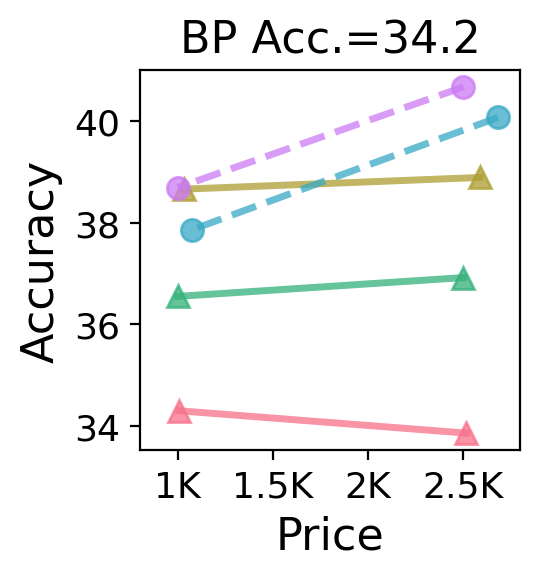}}}
    \subfloat[MedQA (Llama)]{
        \adjustbox{valign=t}{\includegraphics[width=0.23\textwidth]{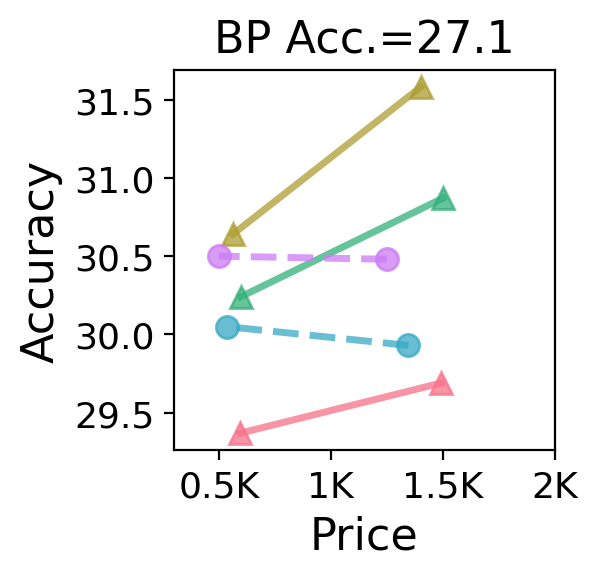}} 
    }
    \subfloat[PIQA (Llama)]{
        \adjustbox{valign=t}{\includegraphics[width=0.3\textwidth]{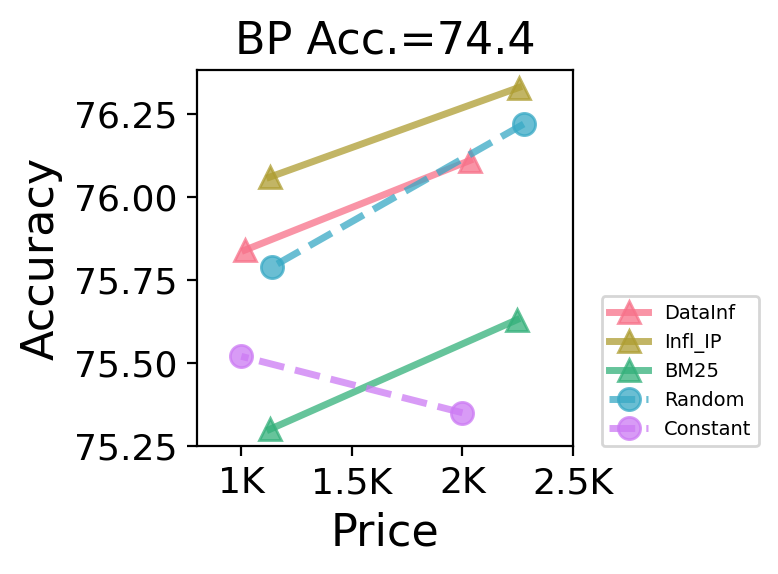}
        }
    }
    \caption{\textit{Left column:} Average price-per-sample cost of purchased data across math, medical, and physical reasoning data markets using different data valuation methods. \textit{Right, middle-columns:} Buyers’ model performance versus cost, \textit{before purchasing} (BP) data, and after purchasing 2K and 4K data samples. Additional analysis on the Pythia-1b/410m models are in \cref{apdx:tables_figures}.
    }
  \label{fig:price_performance}
\end{figure}


\section{Conclusion}
We introduced a \textit{fairshare} pricing framework that leverages \textit{data valuation} methods to enable transparent and equitable training data pricing for LLMs. Our results show that buyers achieve higher gains at lower cost. At the same time, sellers earn optimal prices for their data, fostering a win-win outcome that enhances long-term market sustainability and social welfare. To our knowledge, this is the first work to integrate game-theoretic pricing models with LLM-specific data valuation techniques, addressing the real-world dynamics of the emerging data market. Our framework offers actionable insights for policymakers and regulators aiming to ensure fairness and transparency in LLM training data markets. By fostering fair market access, our framework also empowers small businesses and startups, leading to more equitable technological advancements. Future work could extend our framework to incorporate additional \textit{data valuation} methods, incomplete information settings (e.g., Bayesian games), and diverse data domains (e.g., pretraining vs. fine-tuning). We hope this work paves a fruitful way for future research in equitable markets for AI and emerging technologies.

\bibliography{reference}  

@article{wei2022chain,
  title={Chain-of-thought prompting elicits reasoning in large language models},
  author={Wei, Jason and Wang, Xuezhi and Schuurmans, Dale and Bosma, Maarten and Xia, Fei and Chi, Ed and Le, Quoc V and Zhou, Denny and others},
  journal={Advances in neural information processing systems},
  volume={35},
  pages={24824--24837},
  year={2022}
}

@article{taylor2022galactica,
  title={Galactica: A large language model for science},
  author={Taylor, Ross and Kardas, Marcin and Cucurull, Guillem and Scialom, Thomas and Hartshorn, Anthony and Saravia, Elvis and Poulton, Andrew and Kerkez, Viktor and Stojnic, Robert},
  journal={arXiv preprint arXiv:2211.09085},
  year={2022}
}

@inproceedings{chen2022selling,
  title={Selling data to a machine learner: Pricing via costly signaling},
  author={Chen, Junjie and Li, Minming and Xu, Haifeng},
  booktitle={International Conference on Machine Learning},
  pages={3336--3359},
  year={2022},
  organization={PMLR}
}

@inproceedings{wu-etal-2024-enhancing-training,
    title = "Enhancing Training Data Attribution for Large Language Models with Fitting Error Consideration",
    author = "Wu, Kangxi  and
      Pang, Liang  and
      Shen, Huawei  and
      Cheng, Xueqi",
    editor = "Al-Onaizan, Yaser  and
      Bansal, Mohit  and
      Chen, Yun-Nung",
    month = nov,
    year = "2024",
    address = "Miami, Florida, USA",
    publisher = "Association for Computational Linguistics",
    url = "https://aclanthology.org/2024.emnlp-main.782/",
    doi = "10.18653/v1/2024.emnlp-main.782",
    pages = "14131--14143",
    booktitle = "Proceedings of the 2024 Conference on Empirical Methods in Natural Language Processing",
}

@inproceedings{akyurek-etal-2022-towards,
    title = "Towards Tracing Knowledge in Language Models Back to the Training Data",
    author = "Akyurek, Ekin  and
      Bolukbasi, Tolga  and
      Liu, Frederick  and
      Xiong, Binbin  and
      Tenney, Ian  and
      Andreas, Jacob  and
      Guu, Kelvin",
    editor = "Goldberg, Yoav  and
      Kozareva, Zornitsa  and
      Zhang, Yue",
    booktitle = "Findings of the Association for Computational Linguistics: EMNLP 2022",
    month = dec,
    year = "2022",
    address = "Abu Dhabi, United Arab Emirates",
    publisher = "Association for Computational Linguistics",
    url = "https://aclanthology.org/2022.findings-emnlp.180/",
    doi = "10.18653/v1/2022.findings-emnlp.180",
    pages = "2429--2446",
}

@article{mai2024fine,
  title={Fine-Tuning is Fine, if Calibrated},
  author={Mai, Zheda and Chowdhury, Arpita and Zhang, Ping and Tu, Cheng-Hao and Chen, Hong-You and Pahuja, Vardaan and Berger-Wolf, Tanya and Gao, Song and Stewart, Charles and Su, Yu and others},
  journal={arXiv preprint arXiv:2409.16223},
  year={2024}
}

@misc{jiao2025datelm,
      title={DATE-LM: Benchmarking Data Attribution Evaluation for Large Language Models}, 
      author={Cathy Jiao and Yijun Pan and Emily Xiao and Daisy Sheng and Niket Jain and Hanzhang Zhao and Ishita Dasgupta and Jiaqi W. Ma and Chenyan Xiong},
      year={2025},
      eprint={2507.09424},
      archivePrefix={arXiv},
      primaryClass={cs.CL},
      url={https://arxiv.org/abs/2507.09424}, 
}

@article{cobbe2021training,
  title={Training verifiers to solve math word problems, 2021},
  author={Cobbe, Karl and Kosaraju, Vineet and Bavarian, Mohammad and Chen, Mark and Jun, Heewoo and Kaiser, Lukasz and Plappert, Matthias and Tworek, Jerry and Hilton, Jacob and Nakano, Reiichiro and others},
  journal={URL https://arxiv. org/abs/2110.14168},
  year={2021}
}

@inproceedings{bisk2020piqa,
  title={Piqa: Reasoning about physical commonsense in natural language},
  author={Bisk, Yonatan and Zellers, Rowan and Gao, Jianfeng and Choi, Yejin and others},
  booktitle={Proceedings of the AAAI conference on artificial intelligence},
  volume={34},
  pages={7432--7439},
  year={2020},
  url={https://arxiv.org/abs/1911.11641}
}

@inproceedings{hu2021loralowrankadaptationlarge,
  title={LoRA: Low-Rank Adaptation of Large Language Models},
  author={Hu, Edward J and Wallis, Phillip and Allen-Zhu, Zeyuan and Li, Yuanzhi and Wang, Shean and Wang, Lu and Chen, Weizhu and others},
  booktitle={International Conference on Learning Representations},
  year={2021},
  url={https://arxiv.org/abs/2106.09685},
}

@article{zhou2023survey,
  title={A survey of large language models in medicine: Progress, application, and challenge},
  author={Zhou, Hongjian and Liu, Fenglin and Gu, Boyang and Zou, Xinyu and Huang, Jinfa and Wu, Jinge and Li, Yiru and Chen, Sam S and Zhou, Peilin and Liu, Junling and others},
  journal={arXiv preprint arXiv:2311.05112},
  year={2023}
}

@article{jin2020medqa,
  title={What disease does this patient have? a large-scale open domain question answering dataset from medical exams},
  author={Jin, Di and Pan, Eileen and Oufattole, Nassim and Weng, Wei-Hung and Fang, Hanyi and Szolovits, Peter},
  journal={Applied Sciences},
  volume={11},
  number={14},
  pages={6421},
  year={2021},
  publisher={MDPI}
}

@article{bellman1957dynamic,
  title={Dynamic programming and statistical communication theory},
  author={Bellman, Richard and Kalaba, Robert},
  journal={Proceedings of the National Academy of Sciences},
  volume={43},
  number={8},
  pages={749--751},
  year={1957},
  publisher={National Acad Sciences}
}

@inproceedings{trotman2014improvements,
  title={Improvements to BM25 and language models examined},
  author={Trotman, Andrew and Puurula, Antti and Burgess, Blake},
  booktitle={Proceedings of the 19th Australasian Document Computing Symposium},
  pages={58--65},
  year={2014}
}

@inproceedings{lu2022survey,
    title = "A Survey of Deep Learning for Mathematical Reasoning",
    author = "Lu, Pan  and
      Qiu, Liang  and
      Yu, Wenhao  and
      Welleck, Sean  and
      Chang, Kai-Wei",
    editor = "Rogers, Anna  and
      Boyd-Graber, Jordan  and
      Okazaki, Naoaki",
    booktitle = "Proceedings of the 61st Annual Meeting of the Association for Computational Linguistics (Volume 1: Long Papers)",
    month = jul,
    year = "2023",
    address = "Toronto, Canada",
    publisher = "Association for Computational Linguistics",
    url = "https://aclanthology.org/2023.acl-long.817/",
    doi = "10.18653/v1/2023.acl-long.817",
    pages = "14605--14631"
}

@inproceedings{ahn2024large,
    title = "Large Language Models for Mathematical Reasoning: Progresses and Challenges",
    author = "Ahn, Janice  and
      Verma, Rishu  and
      Lou, Renze  and
      Liu, Di  and
      Zhang, Rui  and
      Yin, Wenpeng",
    editor = "Falk, Neele  and
      Papi, Sara  and
      Zhang, Mike",
    booktitle = "Proceedings of the 18th Conference of the European Chapter of the Association for Computational Linguistics: Student Research Workshop",
    month = mar,
    year = "2024",
    address = "St. Julian{'}s, Malta",
    publisher = "Association for Computational Linguistics",
    url = "https://aclanthology.org/2024.eacl-srw.17/",
    pages = "225--237",
}

@article{liu2023mathematical,
  title={Mathematical language models: A survey},
  author={Liu, Wentao and Hu, Hanglei and Zhou, Jie and Ding, Yuyang and Li, Junsong and Zeng, Jiayi and He, Mengliang and Chen, Qin and Jiang, Bo and Zhou, Aimin and others},
  journal={arXiv preprint arXiv:2312.07622},
  year={2023}
}

@misc{grattafiori2024llama3herdmodels,
      title={The Llama 3 Herd of Models}, 
      author={Aaron Grattafiori and Abhimanyu Dubey and Abhinav Jauhri and etc.},
      year={2024},
      eprint={2407.21783},
      archivePrefix={arXiv},
      primaryClass={cs.AI},
      url={https://arxiv.org/abs/2407.21783}, 
}

@inproceedings{biderman2023pythia,
  title={Pythia: A suite for analyzing large language models across training and scaling},
  author={Biderman, Stella and Schoelkopf, Hailey and Anthony, Quentin Gregory and Bradley, Herbie and O’Brien, Kyle and Hallahan, Eric and Khan, Mohammad Aflah and Purohit, Shivanshu and Prashanth, USVSN Sai and Raff, Edward and others},
  booktitle={International Conference on Machine Learning},
  pages={2397--2430},
  year={2023},
  organization={PMLR},
  url={https://arxiv.org/abs/2304.01373}
}

@misc{sun2024edcopilotreduceemergencydepartment,
      title={ED-Copilot: Reduce Emergency Department Wait Time with Language Model Diagnostic Assistance}, 
      author={Liwen Sun and Abhineet Agarwal and Aaron Kornblith and Bin Yu and Chenyan Xiong},
      year={2024},
      eprint={2402.13448},
      archivePrefix={arXiv},
      primaryClass={cs.CL},
      url={https://arxiv.org/abs/2402.13448}, 
}

@misc{chen2021humaneval,
      title={Evaluating Large Language Models Trained on Code}, 
      author={Mark Chen and Jerry Tworek and Heewoo Jun and Qiming Yuan and Henrique Ponde de Oliveira Pinto and Jared Kaplan and Harri Edwards and Yuri Burda and Nicholas Joseph and Greg Brockman and Alex Ray and Raul Puri and Gretchen Krueger and Michael Petrov and Heidy Khlaaf and Girish Sastry and Pamela Mishkin and Brooke Chan and Scott Gray and Nick Ryder and Mikhail Pavlov and Alethea Power and Lukasz Kaiser and Mohammad Bavarian and Clemens Winter and Philippe Tillet and Felipe Petroski Such and Dave Cummings and Matthias Plappert and Fotios Chantzis and Elizabeth Barnes and Ariel Herbert-Voss and William Hebgen Guss and Alex Nichol and Alex Paino and Nikolas Tezak and Jie Tang and Igor Babuschkin and Suchir Balaji and Shantanu Jain and William Saunders and Christopher Hesse and Andrew N. Carr and Jan Leike and Josh Achiam and Vedant Misra and Evan Morikawa and Alec Radford and Matthew Knight and Miles Brundage and Mira Murati and Katie Mayer and Peter Welinder and Bob McGrew and Dario Amodei and Sam McCandlish and Ilya Sutskever and Wojciech Zaremba},
      year={2021},
      eprint={2107.03374},
      archivePrefix={arXiv},
      primaryClass={cs.LG},
      url={https://arxiv.org/abs/2107.03374}, 
}

@article{toxtli2021quantifying,
  title={Quantifying the invisible labor in crowd work},
  author={Toxtli, Carlos and Suri, Siddharth and Savage, Saiph},
  journal={Proceedings of the ACM on human-computer interaction},
  volume={5},
  number={CSCW2},
  pages={1--26},
  year={2021},
  publisher={ACM New York, NY, USA}
}

@misc{pruthi2020tracin,
      title={Estimating Training Data Influence by Tracing Gradient Descent}, 
      author={Garima Pruthi and Frederick Liu and Mukund Sundararajan and Satyen Kale},
      year={2020},
      eprint={2002.08484},
      archivePrefix={arXiv},
      primaryClass={cs.LG},
      url={https://arxiv.org/abs/2002.08484}, 
}

@misc{zhang2024surveydatamarkets,
      title={A Survey on Data Markets}, 
      author={Jiayao Zhang and Yuran Bi and Mengye Cheng and Jinfei Liu and Kui Ren and Qiheng Sun and Yihang Wu and Yang Cao and Raul Castro Fernandez and Haifeng Xu and Ruoxi Jia and Yongchan Kwon and Jian Pei and Jiachen T. Wang and Haocheng Xia and Li Xiong and Xiaohui Yu and James Zou},
      year={2024},
      eprint={2411.07267},
      archivePrefix={arXiv},
      primaryClass={cs.GT},
      url={https://arxiv.org/abs/2411.07267}, 
}

@misc{yu2024mates,
      title={MATES: Model-Aware Data Selection for Efficient Pretraining with Data Influence Models}, 
      author={Zichun Yu and Spandan Das and Chenyan Xiong},
      year={2024},
      eprint={2406.06046},
      archivePrefix={arXiv},
      primaryClass={cs.CL},
      url={https://arxiv.org/abs/2406.06046}, 
}

@misc{xia2024less,
      title={LESS: Selecting Influential Data for Targeted Instruction Tuning}, 
      author={Mengzhou Xia and Sadhika Malladi and Suchin Gururangan and Sanjeev Arora and Danqi Chen},
      year={2024},
      eprint={2402.04333},
      archivePrefix={arXiv},
      primaryClass={cs.CL},
      url={https://arxiv.org/abs/2402.04333}, 
}

@misc{amini2019mathqa,
      title={MathQA: Towards Interpretable Math Word Problem Solving with Operation-Based Formalisms}, 
      author={Aida Amini and Saadia Gabriel and Peter Lin and Rik Koncel-Kedziorski and Yejin Choi and Hannaneh Hajishirzi},
      year={2019},
      eprint={1905.13319},
      archivePrefix={arXiv},
      primaryClass={cs.CL},
      url={https://arxiv.org/abs/1905.13319}, 
}

@inproceedings{hendrycks2021mmlu,
  title={Measuring Massive Multitask Language Understanding},
  author={Hendrycks, Dan and Burns, Collin and Basart, Steven and Zou, Andy and Mazeika, Mantas and Song, Dawn and Steinhardt, Jacob},
  booktitle={International Conference on Learning Representations},
  year={2021},
  url={https://arxiv.org/abs/2009.03300}, 
}

@inproceedings{mason2009financialcrowds,
author = {Mason, Winter and Watts, Duncan J.},
title = {Financial incentives and the "performance of crowds"},
year = {2009},
isbn = {9781605586724},
booktitle = {Association for Computing Machinery},
address = {New York, NY, USA},
url = {https://doi.org/10.1145/1600150.1600175},
doi = {10.1145/1600150.1600175},
pages = {77–85},
numpages = {9},
keywords = {performance, peer production, incentives, crowd-sourcing, Mechanical Turk},
location = {Paris, France},
series = {HCOMP '09}
}

@inproceedings{hara2018datadrivenamt,
author = {Hara, Kotaro and Adams, Abigail and Milland, Kristy and Savage, Saiph and Callison-Burch, Chris and Bigham, Jeffrey P.},
title = {A Data-Driven Analysis of Workers' Earnings on Amazon Mechanical Turk},
year = {2018},
isbn = {9781450356206},
publisher = {Association for Computing Machinery},
address = {New York, NY, USA},
url = {https://doi.org/10.1145/3173574.3174023},
doi = {10.1145/3173574.3174023},
booktitle = {Proceedings of the 2018 CHI Conference on Human Factors in Computing Systems},
pages = {1–14},
numpages = {14},
keywords = {hourly wage, crowdsourcing, amazon mechanical turk},
location = {Montreal QC, Canada},
series = {CHI '18}
}

@misc{weidinger2021ethicalsocialrisksharm,
      title={Ethical and social risks of harm from Language Models}, 
      author={Laura Weidinger and John Mellor and Maribeth Rauh and Conor Griffin and Jonathan Uesato and Po-Sen Huang and Myra Cheng and Mia Glaese and Borja Balle and Atoosa Kasirzadeh and Zac Kenton and Sasha Brown and Will Hawkins and Tom Stepleton and Courtney Biles and Abeba Birhane and Julia Haas and Laura Rimell and Lisa Anne Hendricks and William Isaac and Sean Legassick and Geoffrey Irving and Iason Gabriel},
      year={2021},
      eprint={2112.04359},
      archivePrefix={arXiv},
      primaryClass={cs.CL},
      url={https://arxiv.org/abs/2112.04359}, 
}

@article{pei2020survey,
  title={A survey on data pricing: from economics to data science},
  author={Pei, Jian},
  journal={IEEE Transactions on knowledge and Data Engineering},
  volume={34},
  number={10},
  pages={4586--4608},
  year={2020},
  publisher={IEEE}
}

@article{esmaeili2009game,
  title={A game theory approach in seller--buyer supply chain},
  author={Esmaeili, Maryam and Aryanezhad, Mir-Bahador and Zeephongsekul, Panlop},
  journal={European Journal of Operational Research},
  volume={195},
  number={2},
  pages={442--448},
  year={2009},
  publisher={Elsevier}
}

@book{von2010market,
  title={Market structure and equilibrium},
  author={Von Stackelberg, Heinrich},
  year={2010},
  publisher={Springer Science \& Business Media}
}

@inproceedings{ghorbani2019data,
  title={Data shapley: Equitable valuation of data for machine learning},
  author={Ghorbani, Amirata and Zou, James},
  booktitle={International conference on machine learning},
  pages={2242--2251},
  year={2019},
  organization={PMLR}
}

@article{schoch2022cs,
  title={CS-Shapley: class-wise Shapley values for data valuation in classification},
  author={Schoch, Stephanie and Xu, Haifeng and Ji, Yangfeng},
  journal={Advances in Neural Information Processing Systems},
  volume={35},
  pages={34574--34585},
  year={2022}
}

@inproceedings{koh2017understanding,
  title={Understanding black-box predictions via influence functions},
  author={Koh, Pang Wei and Liang, Percy},
  booktitle={International conference on machine learning},
  pages={1885--1894},
  year={2017},
  organization={PMLR}
}

@inproceedings{chen2019towards,
  title={Towards model-based pricing for machine learning in a data marketplace},
  author={Chen, Lingjiao and Koutris, Paraschos and Kumar, Arun},
  booktitle={Proceedings of the 2019 international conference on management of data},
  pages={1535--1552},
  year={2019}
}

@article{shapley1953value,
  title={A value for n-person games},
  author={Shapley, Lloyd S},
  journal={Contribution to the Theory of Games},
  volume={2},
  year={1953}
}

@inproceedings{jia2019towards,
  title={Towards efficient data valuation based on the shapley value},
  author={Jia, Ruoxi and Dao, David and Wang, Boxin and Hubis, Frances Ann and Hynes, Nick and G{\"u}rel, Nezihe Merve and Li, Bo and Zhang, Ce and Song, Dawn and Spanos, Costas J},
  booktitle={The 22nd International Conference on Artificial Intelligence and Statistics},
  pages={1167--1176},
  year={2019},
  organization={PMLR}
}

@inproceedings{ghorbani2020distributional,
  title={A distributional framework for data valuation},
  author={Ghorbani, Amirata and Kim, Michael and Zou, James},
  booktitle={International Conference on Machine Learning},
  pages={3535--3544},
  year={2020},
  organization={PMLR}
}

@inproceedings{agarwal2019marketplace,
  title={A marketplace for data: An algorithmic solution},
  author={Agarwal, Anish and Dahleh, Munther and Sarkar, Tuhin},
  booktitle={Proceedings of the 2019 ACM Conference on Economics and Computation},
  pages={701--726},
  year={2019}
}

@article{yang2022selling,
  title={Selling consumer data for profit: Optimal market-segmentation design and its consequences},
  author={Yang, Kai Hao},
  journal={American Economic Review},
  volume={112},
  number={4},
  pages={1364--1393},
  year={2022},
  publisher={American Economic Association 2014 Broadway, Suite 305, Nashville, TN 37203}
}

@article{jiao2024context,
  title={In-Context Probing Approximates Influence Function for Data Valuation},
  author={Jiao, Cathy and Gao, Gary and Xiong, Chenyan},
  journal={arXiv preprint arXiv:2407.12259},
  year={2024}
}

@book{bard2013practical,
  title={Practical bilevel optimization: algorithms and applications},
  author={Bard, Jonathan F},
  volume={30},
  year={2013},
  publisher={Springer Science \& Business Media}
}

@article{maharjan2013dependable,
  title={Dependable demand response management in the smart grid: A Stackelberg game approach},
  author={Maharjan, Sabita and Zhu, Quanyan and Zhang, Yan and Gjessing, Stein and Basar, Tamer},
  journal={IEEE Transactions on Smart Grid},
  volume={4},
  number={1},
  pages={120--132},
  year={2013},
  publisher={IEEE}
}

@article{zhang2024survey,
  title={A Survey on Data Markets},
  author={Zhang, Jiayao and Bi, Yuran and Cheng, Mengye and Liu, Jinfei and Ren, Kui and Sun, Qiheng and Wu, Yihang and Cao, Yang and Fernandez, Raul Castro and Xu, Haifeng and others},
  journal={arXiv preprint arXiv:2411.07267},
  year={2024}
}

@article{paul2024bigtech,
  title={Inside Big Tech's underground race to buy AI training data},
  author={Paul, Katie and Tong, Anna},
  journal={Reuters},
  year={2024},
  month={April 5},
  url={https://www.reuters.com/technology/inside-big-techs-underground-race-buy-ai-training-data-2024-04-05/}
}

@incollection{akerlof1978market,
  title={The market for “lemons”: Quality uncertainty and the market mechanism},
  author={Akerlof, George A},
  booktitle={Uncertainty in economics},
  pages={235--251},
  year={1978},
  publisher={Elsevier}
}

@misc{techcrunch2024,
    author       = "{TechCrunch}",
    title        = "{AI Training Data Has a Price Tag That Only Big Tech Can Afford}",
    url          = "https://techcrunch.com/2024/06/01/ai-training-data-has-a-price-tag-that-only-big-tech-can-afford",
    note         = "Accessed: 2024-11-28",
    year         = "2024",
    month        = "June",
    day          = "01"
}

@misc{cbsnews2024,
    author       = "{CBS News}",
    title        = "{Labelers Training AI Say They're Overworked, Underpaid, and Exploited: 60 Minutes Transcript}",
    url          = "https://www.cbsnews.com/news/labelers-training-ai-say-theyre-overworked-underpaid-and-exploited-60-minutes-transcript",
    note         = "Accessed: 2024-11-28",
    year         = "2024",
    month        = "November",
    day          = "27"
}

@article{reuters2024bigtech,
  title = {Big Tech Companies in Race to Buy Training Data for Artificial Intelligence},
  author = {{Reuters}},
  journal = {Business Standard},
  year = {2024},
  month = {April},
  day = {5},
  url = {https://www.business-standard.com/companies/news/big-tech-cos-in-race-to-buy-training-data-for-artificial-intelligence-124040500648_1.html}
}

@article{adamovic2023organizational,
  title={Organizational justice research: A review, synthesis, and research agenda},
  author={Adamovic, Mladen},
  journal={European Management Review},
  volume={20},
  number={4},
  pages={762--782},
  year={2023},
  publisher={Wiley Online Library}
}

@book{folger1998organizational,
  title={Organizational justice and human resource management},
  author={Folger, Robert G and Folger, Robert and Cropanzano, Russell},
  volume={7},
  year={1998},
  publisher={Sage}
}

@article{folger2001fairness,
  title={Fairness theory: Justice as accountability},
  author={Folger, Robert and Cropanzano, Russell},
  journal={Advances in organizational justice},
  volume={1},
  number={1-55},
  pages={12},
  year={2001}
}

@incollection{colquitt2013organizational,
  title={What is organizational justice? A historical overview},
  author={Colquitt, Jason A and Greenberg, Jerald and Zapata-Phelan, Cindy P},
  booktitle={Handbook of organizational justice},
  pages={3--56},
  year={2013},
  publisher={Psychology Press}
}

@article{konovsky2000understanding,
  title={Understanding procedural justice and its impact on business organizations},
  author={Konovsky, Mary A},
  journal={Journal of management},
  volume={26},
  number={3},
  pages={489--511},
  year={2000},
  publisher={Elsevier}
}

@inproceedings{metcalf2021algorithmic,
  title={Algorithmic impact assessments and accountability: The co-construction of impacts},
  author={Metcalf, Jacob and Moss, Emanuel and Watkins, Elizabeth Anne and Singh, Ranjit and Elish, Madeleine Clare},
  booktitle={Proceedings of the 2021 ACM conference on fairness, accountability, and transparency},
  pages={735--746},
  year={2021}
}

@misc{time2023kenya,
  author = {{Time Staff}},
  title = {OpenAI Used Kenyan Workers on Less Than \$2 Per Hour to Make ChatGPT Less Toxic},
  year = {2023},
  howpublished = {\url{https://time.com/6247678/openai-chatgpt-kenya-workers/}},
  note = {Accessed: 2025-04-18}
}

@article{remotasks_wapo_2023,
  title={AI Boom Relies on Underpaid Workers in the Philippines},
  author={Harwell, Drew and Cabato, Regine},
  journal={The Washington Post},
  year={2023},
  url={https://www.washingtonpost.com/world/2023/08/28/scale-ai-remotasks-philippines-artificial-intelligence/},
  note={Accessed: 2025-04-20}
}

@article{pruthi2020estimating,
  title={Estimating training data influence by tracing gradient descent},
  author={Pruthi, Garima and Liu, Frederick and Kale, Satyen and Sundararajan, Mukund},
  journal={Advances in Neural Information Processing Systems},
  volume={33},
  pages={19920--19930},
  year={2020}
}

@article{kwon2023datainf,
  title={Datainf: Efficiently estimating data influence in lora-tuned llms and diffusion models},
  author={Kwon, Yongchan and Wu, Eric and Wu, Kevin and Zou, James},
  journal={arXiv preprint arXiv:2310.00902},
  year={2023}
}

@article{hammoudeh2024training,
  title={Training data influence analysis and estimation: A survey},
  author={Hammoudeh, Zayd and Lowd, Daniel},
  journal={Machine Learning},
  volume={113},
  number={5},
  pages={2351--2403},
  year={2024},
  publisher={Springer}
}

@article{wang2024greats,
  title={GREATS: Online selection of high-quality data for llm training in every iteration},
  author={Wang, Jiachen Tianhao and Wu, Tong and Song, Dawn and Mittal, Prateek and Jia, Ruoxi},
  journal={Advances in Neural Information Processing Systems},
  volume={37},
  pages={131197--131223},
  year={2024}
}

@article{isonuma2024unlearning,
  title={Unlearning traces the influential training data of language models},
  author={Isonuma, Masaru and Titov, Ivan},
  journal={arXiv preprint arXiv:2401.15241},
  year={2024}
}

@article{kounavis2023influence,
  title={On Influence Functions, Classification Influence, Relative Influence, Memorization and Generalization},
  author={Kounavis, Michael and Dia, Ousmane and Ramazanli, Ilqar},
  journal={arXiv preprint arXiv:2305.16094},
  year={2023}
}

@inproceedings{van2023hint,
  title={HINT: Healthy Influential-Noise based Training to Defend against Data Poisoning Attacks},
  author={Van, Minh-Hao and Carey, Alycia N and Wu, Xintao},
  booktitle={2023 IEEE International Conference on Data Mining (ICDM)},
  pages={608--617},
  year={2023},
  organization={IEEE}
}

@incollection{von2007theory,
  title={Theory of games and economic behavior: 60th anniversary commemorative edition},
  author={Von Neumann, John and Morgenstern, Oskar},
  booktitle={Theory of games and economic behavior},
  year={2007},
  publisher={Princeton university press}
}

@book{stokey1989recursive,
  title={Recursive methods in economic dynamics},
  author={Stokey, Nancy L and Lucas Jr, Robert E},
  year={1989},
  publisher={Harvard University Press}
}

@book{mas1995microeconomic,
  title={Microeconomic theory},
  author={Mas-Colell, Andreu and Whinston, Michael Dennis and Green, Jerry R and others},
  volume={1},
  year={1995},
  publisher={Oxford university press New York}
}

@article{chow2024aws,
  author  = {Chow, Andrew R.},
  title   = {Why Amazon Web Services CEO Matt Garman Is Playing the Long Game on AI},
  journal = {Time},
  year    = {2024},
  url     = {https://time.com/7225660/amazon-aws-matt-garman-interview/}
}

@article{dotan2024gpt5,
  author  = {Dotan, Tom},
  title   = {OpenAI’s Next Big AI Effort, GPT-5, Is Behind Schedule and Crazy Expensive},
  journal = {The Wall Street Journal},
  year    = {2024},
  url     = {https://slashdot.org/story/24/12/22/0333225/openais-next-big-ai-effort-gpt-5-is-behind-schedule-and-crazy-expensive}
}

@article{li2024time100ai,
  author  = {Li, Jane},
  title   = {Sundar Pichai: The 100 Most Influential People in AI 2024},
  journal = {Time},
  year    = {2024},
  url     = {https://agatadata.com/wp-content/uploads/2024/11/Time-100-AI-16_09_24-TIME.pdf}
}

@misc{mayer2025superagency,
  author       = {Hannah Mayer and Lareina Yee and Michael Chui and Roger Roberts},
  title        = {Superagency in the Workplace: Empowering People to Unlock AI’s Full Potential},
  year         = {2025},
  howpublished = {\url{https://www.mckinsey.com/capabilities/mckinsey-digital/our-insights/superagency-in-the-workplace-empowering-people-to-unlock-ais-full-potential-at-work}},
  note         = {McKinsey \& Company}
}

@book{gray2019ghost,
  title={Ghost work: How to stop Silicon Valley from building a new global underclass},
  author={Gray, Mary L and Suri, Siddharth},
  year={2019},
  publisher={Harper Business}
}

@article{jo2020archives,
  title={Lessons from archives: Strategies for collecting sociocultural data in machine learning},
  author={Jo, Eun Seo and Gebru, Timnit},
  journal={Proceedings of the 2020 Conference on Fairness, Accountability, and Transparency (FAT* 2020)},
  year={2020},
  pages={306--316},
  publisher={ACM},
  doi={10.1145/3351095.3372829}
}

@inproceedings{birhane2021impossibility,
  title={The impossibility of automating ambiguity},
  author={Birhane, Abeba and Prabhu, Vinay Uday and Kahembwe, Emmanuel and Guest, Olivia and Jamieson, Kirstie Whitaker},
  booktitle={Proceedings of the 2021 ACM Conference on Fairness, Accountability, and Transparency (FAccT)},
  pages={369--381},
  year={2021},
  publisher={ACM},
  doi={10.1145/3461702.3462625}
}

@article{paullada2021datadiscontents,
  title={Data and its (dis)contents: A survey of dataset development and use in machine learning research},
  author={Paullada, Amandalynne and Raji, Inioluwa Deborah and Bender, Emily M. and Denton, Emily and Hanna, Alex},
  journal={Patterns},
  volume={2},
  number={11},
  pages={100336},
  year={2021},
  publisher={Cell Press},
  doi={10.1016/j.patter.2021.100336}
}

@inproceedings{sambasivan2021datawork,
  title={“Everyone wants to do the model work, not the data work”: Data cascades in high-stakes AI},
  author={Sambasivan, Nithya and Kasyap, Anirudh and Thomas, David and Natarajan, Mrinmaya and Balu, Nithin and Ahuja, Samir Passi},
  booktitle={Proceedings of the 2021 CHI Conference on Human Factors in Computing Systems (CHI)},
  pages={1--15},
  year={2021},
  publisher={ACM},
  doi={10.1145/3411764.3445518}
}

@misc{qz_africa_parachute,
  author = {Chutel, Lynsey},
  title = {African scientists are being sidelined by “parachute” research teams},
  year = {2019},
  howpublished = {Quartz Africa},
  url = {https://qz.com/africa/1536355/african-scientists-are-sidelined-by-parachute-research-teams}
}

@article{seetharaman2024gpt5,
  author  = {Deepa Seetharaman},
  title   = {The Next Great Leap in AI Is Behind Schedule and Crazy Expensive},
  journal = {The Wall Street Journal},
  year    = {2024},
  month   = {December},
  day     = {21},
  url     = {https://www.wsj.com/tech/ai/openai-gpt5-orion-delays-639e7693}
}

@article{javed2025aws,
  author  = {Ayesha Javed},
  title   = {Why Amazon Web Services CEO Matt Garman Is Playing the Long Game on AI},
  journal = {TIME},
  year    = {2025},
  month   = {February},
  day     = {16},
  url     = {https://time.com/7225660/amazon-aws-matt-garman-interview/}
}

@article{tadelis2016reputation,
  title={Reputation and feedback systems in online platform markets},
  author={Tadelis, Steven},
  journal={Annual review of economics},
  volume={8},
  number={1},
  pages={321--340},
  year={2016},
  publisher={Annual Reviews}
}

@article{gale1994bottom,
  title={Bottom-fishing and declining prices in sequential auctions},
  author={Gale, Ian L and Hausch, Donald B},
  journal={Games and Economic Behavior},
  volume={7},
  number={3},
  pages={318--331},
  year={1994},
  publisher={Elsevier}
}

@article{backus2020sequential,
  title={Sequential bargaining in the field: Evidence from millions of online bargaining interactions},
  author={Backus, Matthew and Blake, Thomas and Larsen, Brad and Tadelis, Steven},
  journal={The Quarterly Journal of Economics},
  volume={135},
  number={3},
  pages={1319--1361},
  year={2020},
  publisher={Oxford University Press}
}

@article{brander1985export,
  title={Export subsidies and international market share rivalry},
  author={Brander, James A and Spencer, Barbara J},
  journal={Journal of international Economics},
  volume={18},
  number={1-2},
  pages={83--100},
  year={1985},
  publisher={Elsevier}
}

@incollection{spence1978job,
  title={Job market signaling},
  author={Spence, Michael},
  booktitle={Uncertainty in economics},
  pages={281--306},
  year={1978},
  publisher={Elsevier}
}

@article{wang2024data,
  title={Data shapley in one training run},
  author={Wang, Jiachen T and Mittal, Prateek and Song, Dawn and Jia, Ruoxi},
  journal={arXiv preprint arXiv:2406.11011},
  year={2024}
}

@article{garrido2024shapley,
  title={Du-shapley: A shapley value proxy for efficient dataset valuation},
  author={Garrido Lucero, Felipe and Heymann, Benjamin and Vono, Maxime and Loiseau, Patrick and Perchet, Vianney},
  journal={Advances in Neural Information Processing Systems},
  volume={37},
  pages={1973--2000},
  year={2024}
}

@inproceedings{
pang2024fairness,
title={Fairness without Harm: An Influence-Guided Active Sampling Approach},
author={Jinlong Pang and Jialu Wang and Zhaowei Zhu and Yuanshun Yao and Chen Qian and Yang Liu},
booktitle={The Thirty-eighth Annual Conference on Neural Information Processing Systems},
year={2024},
url={https://openreview.net/forum?id=YYJojVBCcd}
}

@inproceedings{wang2022understanding,
  title={Understanding instance-level impact of fairness constraints},
  author={Wang, Jialu and Wang, Xin Eric and Liu, Yang},
  booktitle={International Conference on Machine Learning},
  pages={23114--23130},
  year={2022},
  organization={PMLR}
}
\bibliographystyle{unsrt} 


\newpage
\appendix
\onecolumn
\section{Influence-based Data Valuation}\label{appendix:influence-functions}
In Section \ref{sec:exp_setup}, we introduced a gradient-based data attribution method, denoted as $\text{Infl}_{\text{IP}}$. In this section, we provide additional information on $\text{Infl}_{\text{IP}}$, which has been shown to be effective in training data selection in previous works \cite{pruthi2020tracin, xia2024less}. Suppose we have a LLM parameterized by $\theta$, and a train set $D$ and a test set $\mathcal{D}'$. For a training sample $d \in D$, we wish to estimate its training impact on a test sample $d' \in D$. That is, we want to measure the impact of $d$ on the model's loss on $d'$ (i.e., $\mathcal{L}(d'; \theta)$). As simple method of achieving this is to take training step -- that is, a gradient descent step -- on $d$ and obtain:

\begin{equation}
    \hat{\theta} = \theta - \eta \nabla \mathcal{L}(d; \theta)
\end{equation}

where $\eta$ is the learning rate. Then, in order to measure the influence of $d$ towards $d'$, we wish to find the change in loss on $d'$:

\begin{equation}\label{eqn:change_loss}
     \mathcal{L}(d';\theta) - \mathcal{L}(d'; \hat{\theta})
\end{equation}

Instead of taking a single training step to measure the influence of $d \in D$ on $d'$, we can instead approximate \cref{eqn:change_loss} with using the following:

\begin{lemma}[]\label{lem:infl_ip}
    Suppose we have a LLM with parameters $\theta$. We perform a gradient descent step with training sample $d$ with learning rate $\eta$ such that $\hat{\theta} = \theta -  \eta \nabla \mathcal{L}(d; \theta)$. Then,
    \begin{equation}
    \mathcal{L}(d';\theta) - \mathcal{L}(d'; \hat{\theta}) \approx \nabla \mathcal{L}(d'; \theta) \cdot \nabla \mathcal{L}(d; \theta) \notag
    \end{equation}
    See \cref{apdx:proofs} for the proof.
\end{lemma}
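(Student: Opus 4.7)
The plan is to apply a first-order Taylor expansion of the loss on the test point $d'$ evaluated at the updated parameters $\hat{\theta}$, expanded around the original parameters $\theta$. This is the standard derivation that underlies TracIn-style influence approximations, and it sidesteps the inverse-Hessian computation that more classical influence function methods require.

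First I would write the expansion
\begin{equation}
\mathcal{L}(d'; \hat{\theta}) = \mathcal{L}(d'; \theta) + \nabla \mathcal{L}(d'; \theta)^{\top} (\hat{\theta} - \theta) + O\!\left(\|\hat{\theta} - \theta\|^{2}\right), \notag
\end{equation}
assuming $\mathcal{L}(d'; \cdot)$ is differentiable at $\theta$ with locally Lipschitz gradient. Next I would substitute the one-step gradient-descent update $\hat{\theta} - \theta = -\eta \nabla \mathcal{L}(d; \theta)$ into the linear term, giving $\mathcal{L}(d'; \hat{\theta}) \approx \mathcal{L}(d'; \theta) - \eta \nabla \mathcal{L}(d'; \theta)^{\top} \nabla \mathcal{L}(d; \theta)$. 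Rearranging yields $\mathcal{L}(d'; \theta) - \mathcal{L}(d'; \hat{\theta}) \approx \eta \nabla \mathcal{L}(d'; \theta)^{\top} \nabla \mathcal{L}(d; \theta)$, which matches the lemma's stated expression once the learning rate $\eta$ is absorbed by convention (equivalently, taking $\eta = 1$ or rescaling the influence score).

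The only real subtlety — and the closest thing to an obstacle — is controlling the dropped remainder, which scales like $\eta^{2} \|\nabla \mathcal{L}(d; \theta)\|^{2}$ times a Lipschitz constant of $\nabla \mathcal{L}(d'; \cdot)$. I would state this as the regime of validity of the approximation: small learning rate together with standard smoothness (bounded Hessian or Lipschitz-gradient) of the per-example loss. Under those mild conditions the error is genuinely second-order in the step size, so the first-order gradient-dot-product formula is an accurate surrogate for the actual one-step loss change, justifying its use as a tractable data-valuation score $v_k$ in the \emph{fairshare} framework.
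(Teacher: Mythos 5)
Your proposal is correct and follows essentially the same route as the paper's proof: a first-order Taylor expansion of $\mathcal{L}(d';\cdot)$ around $\theta$, substitution of the one-step update $\hat{\theta}-\theta = -\eta\nabla\mathcal{L}(d;\theta)$, and absorption of the constant $\eta$ so the loss change is proportional to the gradient dot product. Your additional remark on the second-order remainder and the smoothness conditions needed for its validity is a welcome refinement that the paper's proof leaves implicit.
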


Then, we set $\text{Infl}_{\text{IP}}$ to be:
\begin{equation}
     \text{Infl}_{\text{IP}} = \nabla \mathcal{L}(d'; \theta) \cdot \nabla \mathcal{L}(d; \theta)
\end{equation}

which is the dot-product between the learning gradients of $d'$ and $d$.
\section{Royalty model}\label{apdx:royalty_model}
So far, we have shown the case of \textit{flat rate} (see \cref{sec:market_setup}), which is well-suited resource-rich buyers, such as leading tech companies whose LLMs generate significant economic value due to their wide-ranging impact and scalability. In this section, we introduce the \textit{royalty model}, a contract framework that differs from the flat rate by offering a subscription-like structure. Under the royalty model, the price paid for training data is proportional to the future economic value generated by the LLM, providing a flexible and performance-based approach to data valuation. This scenario incorporates buyers in a less dominant position -- those who are (1) uncertain about the prospective model outcome or (2) do not own a sufficient cash flow for purchasing data with full prices. We present updated decision-making models for buyers and sellers as follows.

\textbf{Buyers.} Unlike the flat pricing setting, the buyer $B_{k}$ would alternatively pay with a \textit{fractional price}. Suppose each dataset $D_j$ is priced with an individual rate $\alpha_j \in [0,1)$ (as we denote $\boldsymbol{\alpha} = (\alpha_1,\cdots,\alpha_{N})$), then the price of an arbitrary data collection $u_{k}\left( \boldsymbol{x}\right)$ is a fraction of its future marginal gain, i.e., $\boldsymbol{x}^{T} \boldsymbol{p} = f(\boldsymbol{\alpha}, \boldsymbol{x}) u_{k}\left( \boldsymbol{x}\right)$, where the overall rate function $f:[0,1)^{|\boldsymbol{\alpha}|} \times \{0,1\}^{|\boldsymbol{x}|} \rightarrow [0,1)$ depends on specific contexts. We assume that $f$ is a monotonically non-decreasing function of $\boldsymbol{\alpha}$. In this sense, the buyer $B_k$ reduces the risk of losing $b_k$ from its cash flow while the seller is betting on the potential value of the LLM $M_k$. Then we obtain an updated objective function for $B_{k}$:
\vspace{-3pt}
\begin{align}
    g_{k, N, \text{frac}}(\boldsymbol{x}) := (1 - f(\boldsymbol{\alpha}, \boldsymbol{x})) u_{k}\left( \boldsymbol{x}\right),
\end{align}
On the other hand, similar to the budget constraint (see \Cref{prob:pu_base_abs}), here each buyer $B_k$ has a maximum rate $\overline{\alpha}_{k}$ that it is willing to pay. Then the buyer's purchasing problem is given as
\vspace{-3pt}
\begin{align}\label{prob:pu_base_frac}
    \Tilde{\boldsymbol{x}}^{k, N, \text{frac}} 
    &:= \argmax_{\boldsymbol{x} \in \mathcal{X}_{k,N,\text{frac}}}  \; g_{k, N, \text{frac}}(\boldsymbol{x}), \quad \text{s.t.} \\
    \mathcal{X}_{k,N,\text{frac}} 
    &:=  \{ \boldsymbol{x} \mid g_{k, N, \text{frac}}(\boldsymbol{x}) \geq 0, f(\boldsymbol{\alpha}, \boldsymbol{x}) \leq \overline{\alpha}_{k}  \}, \label{eq:budget frac case}
\end{align}
And $\Tilde{\boldsymbol{x}}^{k, N, \text{frac}}$ is the optimal solution to \hyperref[prob:pu_base_frac]{$\max_{\boldsymbol{x} \in \mathcal{X}_{k,N,\text{frac}}} g_{k, N, \text{frac}}(\boldsymbol{x})$} with a given rate vector $\boldsymbol{\alpha}$. 

\textbf{Sellers.} In the fractional pricing setting, since the buyer $B_{k}$ pays for the entire data collection, there should exists a fair and transparent allocation mechanism that distributes a portion of the total price charged to each individual dataset $D_{j}$. That is, $\boldsymbol{x}_{j} p_{j} = \sum_{k = 1}^{M} f_{j}(\boldsymbol{\alpha}, \Tilde{\boldsymbol{x}}^{k, N, \text{frac}}) u_k( \Tilde{\boldsymbol{x}}^{k, N, \text{frac}})$, where $f(\cdot) = \sum_{j = 1}^{N} f_{j}(\cdot)$. And we assume that for all $j \in [N]$, $f_{j}(\cdot)$ is monotonically non-decreasing over $\boldsymbol{\alpha}$. Therefore, we have an updated profit function for $Se_{j}$:
\vspace{-3pt}
\begin{align}
    r_{\text{frac}}(\alpha_{j}) 
    & :=  \sum_{k = 1}^{M} f_{j}(\boldsymbol{\alpha}, \Tilde{\boldsymbol{x}}^{k, N, \text{frac}}) u_k( \Tilde{\boldsymbol{x}}^{k, N, \text{frac}} ) - c_{j}.
\end{align}
which gives the following problem:
\vspace{-3pt}
\begin{align}\label{prob:pr_multi_frac}
   \alpha^*_j 
   &:= \argmax_{{\alpha}_{j} \in \mathcal{A}_{j, M, \text{frac}}} \; r_{\text{frac}}(\alpha_{j}), \quad \text{s.t.} \\
   \mathcal{A}_{j, M} 
   &:= \{{\alpha}_{j} \in [0,1) \mid r_{\text{frac}}(\alpha_{j}) \geq 0\},
\end{align}

From this point onward, the market dynamics stays the same as in the previous section. It is noted that, compared to \hyperlink{prob:diff_pricing}{$\max_{{p}_{j} \in \mathcal{P}_{j, M}} r(p_{j})$} where optimal flat rate is indirectly connected to the \textit{utility}, the optimal rate of \hyperlink{prob:pr_multi_frac}{$\max_{{\alpha}_{j} \in \mathcal{A}_{j, M, \text{frac}}} r_{\text{frac}}(\alpha_{j})$} offers a more direct representation of the \textit{utility}.  

\subsection{Solving for optimal price of the royalty model}
Similar to \textit{flat rate}, in the case of \textit{royalty model}, we need to solve buyer's problem \hyperlink{prob:pu_base_frac}{$\max_{\boldsymbol{x} \in \mathcal{X}_{k,N-1,\text{frac}}} g_{k, N-1, \text{frac}}(\boldsymbol{x})$} (before the arrival of $S_{j}$) and \hyperlink{prob:pu_base_frac}{$\max_{\boldsymbol{x} \in \mathcal{X}_{k,N,\text{frac}}} g_{k,N, \text{frac}}(\boldsymbol{x})$} (after the arrival of $S_{j}$) for all $k \in [M]$ and seller's problem \hyperlink{prob:pr_multi_frac}{$\max_{{\alpha}_{j} \in \mathcal{A}_{j, M, \text{frac}}} r_{\text{frac}}(\alpha_{j})$}.

\textbf{Solve buyer's problems.} For each feasible collection of datasets $\boldsymbol{x} \in \mathcal{X}_{k,N-1,\text{frac}}$ (before the arrival of $S_{j}$), denotes it union with dataset $D_{j}$ as $\boldsymbol{x}^{\text{new}}$. Then we run the check: (1) if the \textit{net utility} of $\boldsymbol{x}^{\text{new}}$ is larger than the one of $\Tilde{\boldsymbol{x}}^{k, N-1, \text{frac}}$, i.e., $g_{k,N, \text{frac}}(\boldsymbol{x}^{\text{new}}) > g_{k,N, \text{frac}}(\Tilde{\boldsymbol{x}}^{k, N-1, \text{frac}})$, and (2) if the rate for purchasing $\boldsymbol{x}^{\text{new}}$ is still under the budget $\overline{\alpha}_{k}$, i.e., $f(\begin{bmatrix} \boldsymbol{\alpha}^{T} \; \alpha_{j} \end{bmatrix}, \boldsymbol{x}^{\text{new}}) \leq \overline{\alpha}_{k}$, where $\begin{bmatrix} \boldsymbol{\alpha}^{T} \; \alpha_{j} \end{bmatrix}$ denotes concatenating $\alpha_{j}$ to $\boldsymbol{\alpha}$. If the answer is positive to bother tests, then we can determine that the buyer $B_k$ will change its decision and purchase $S_{j}$ under the rate $\alpha_{j}$. 

\textbf{Solve seller's problem.} First, we consider when $\alpha_{j} = 0$.  We could first find all $\boldsymbol{x} \in \mathcal{X}_{k,N-1,\text{frac}}$ such that $g_{k,N, \text{frac}}(\boldsymbol{x}^{\text{new}}) > g_{k,N, \text{frac}}(\Tilde{\boldsymbol{x}}^{k, N-1, \text{frac}})$. And we denote the set that contains such $\boldsymbol{x}$ as $\mathcal{X}_{k,N-1,\text{frac}}^{1}$. If $\mathcal{X}_{k,N-1,\text{frac}}^{1}$ is empty, then $\mathbbm{1}_{\{B_k, D_j, p_j\}} = 0$ (indicating whether $B_k$ will purchase $D_j$ at price $p_j$ or not), as $S_{j}$ cannot bring positive value to $B_{k}$; else, then for all $\boldsymbol{x} \in \mathcal{X}_{k,N-1,\text{frac}}^{1}$, thanks to the monotonicity of $f_{j}$ over $\alpha_{j}$, we could gradually increase $\alpha_{j}$ until the either of the two criterion are met first: (1) we find the largest $\alpha_{j}$ such that $g_{k,N, \text{frac}}(\boldsymbol{x}^{\text{new}}) > g_{k,N, \text{frac}}(\Tilde{\boldsymbol{x}}^{k, N-1, \text{frac}})$, and (2) $f_{j}(\begin{bmatrix} \boldsymbol{\alpha}^{T} \; \alpha_{j} \end{bmatrix}, \boldsymbol{x}^{\text{new}}) \leq \overline{\alpha}_{k}$. Then we have the following property about the optimal rate $\alpha_{j}^*$ for \cref{prob:pr_multi_frac}:

\begin{lemma}[Characterization of $\alpha_{j}^*$ under \textit{royalty model}]\label{lem:Characterization frac multi}
    Define $\alpha_{j}^{\boldsymbol{x}}$ as
    \begin{align}
        \min & 
        \left \{\sup_{\alpha_{j} \in [0,1)} \left \{\alpha_{j} : f_{j}(\begin{bmatrix} \boldsymbol{\alpha}^{T} \; \alpha_{j} \end{bmatrix}, \boldsymbol{x}^{\text{new}}) < 1 - (1 - f_{j}(\boldsymbol{\alpha}, \Tilde{\boldsymbol{x}}^{k, N-1, \text{frac}})) 
        \frac{u_k( \Tilde{\boldsymbol{x}}^{k, N-1, \text{frac}} )}
        {u_k(\boldsymbol{x}^{\text{new}})} \right \}, \right.  \nonumber \\
        & \left. \sup_{\alpha_{j} \in [0,1)} \left\{ \alpha_{j} : f_{j}(\begin{bmatrix} \boldsymbol{\alpha}^{T} \; \alpha_{j} \end{bmatrix}, \boldsymbol{x}^{\text{new}}) \leq \overline{\alpha}_{k} \right\} \right \}.
    \end{align}
    For every $\boldsymbol{x} \in \mathcal{X}_{k,N-1,\text{frac}}^{1}$ and all $k \in [M]$, we obtain $\alpha_{j}^{\boldsymbol{x}}$ and their union $\cup_{k = 1}^{M} \cup_{\boldsymbol{x} \in \mathcal{X}_{k,N-1,\text{frac}}^{1}} \{\alpha_{j}^{\boldsymbol{x}}\}$. Then we have $\alpha_{j}^{*} \in \cup_{k = 1}^{M} \cup_{\boldsymbol{x} \in \mathcal{X}_{k,N-1,\text{frac}}^{1}} \{\alpha_{j}^{\boldsymbol{x}}\}$.
\end{lemma}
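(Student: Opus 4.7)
The plan is to exploit monotonicity of $f_j$ in $\alpha_j$ together with the piecewise-constant structure of buyers' responses, mirroring the argument used for \Cref{lem:optimal price dymanic flat} in the flat-price setting. First, for each buyer $B_k$ and candidate $\boldsymbol{x} \in \mathcal{X}^1_{k,N-1,\text{frac}}$, I would make the buyer's switching condition from $\Tilde{\boldsymbol{x}}^{k,N-1,\text{frac}}$ to $\boldsymbol{x}^{\text{new}}$ explicit by substituting $g_{k,N,\text{frac}}(\cdot) = (1 - f(\boldsymbol{\alpha},\cdot)) u_k(\cdot)$ into $g_{k,N,\text{frac}}(\boldsymbol{x}^{\text{new}}) > g_{k,N,\text{frac}}(\Tilde{\boldsymbol{x}}^{k,N-1,\text{frac}})$ and rearranging for $f(\cdot, \boldsymbol{x}^{\text{new}})$. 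This yields precisely the upper bound $f(\cdot, \boldsymbol{x}^{\text{new}}) < 1 - (1 - f(\boldsymbol{\alpha}, \Tilde{\boldsymbol{x}}^{k,N-1,\text{frac}})) \frac{u_k(\Tilde{\boldsymbol{x}}^{k,N-1,\text{frac}})}{u_k(\boldsymbol{x}^{\text{new}})}$ that appears in the first supremum; the rate-budget constraint $f(\cdot, \boldsymbol{x}^{\text{new}}) \leq \overline{\alpha}_k$ gives the second. Because $f_j$ (and hence $f$) is monotonically non-decreasing in $\alpha_j$, each supremum is the right endpoint of a half-open interval of admissible rates, and $\alpha_j^{\boldsymbol{x}}$ is the smaller of the two.

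Next I would argue that as $\alpha_j$ varies over $[0,1)$, each buyer's optimal bundle $\Tilde{\boldsymbol{x}}^{k,N,\text{frac}}(\alpha_j)$ is piecewise constant, with switches occurring only at rates where some acceptance or feasibility inequality becomes tight, i.e., at points of the candidate set $\mathcal{C} := \bigcup_{k=1}^M \bigcup_{\boldsymbol{x} \in \mathcal{X}^1_{k,N-1,\text{frac}}} \{\alpha_j^{\boldsymbol{x}}\}$. Finiteness of $\mathcal{C}$ follows from the finiteness of the buyer set and of each combinatorial feasibility set $\mathcal{X}^1_{k,N-1,\text{frac}}$, whose elements are binary vectors over $N-1$ datasets.

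Between two consecutive breakpoints of $\mathcal{C}$, both the set of purchasing buyers and the bundles they buy are fixed, so the seller's profit $r_{\text{frac}}(\alpha_j) = \sum_{k=1}^M f_j([\boldsymbol{\alpha}^T \, \alpha_j], \Tilde{\boldsymbol{x}}^{k,N,\text{frac}}) u_k(\Tilde{\boldsymbol{x}}^{k,N,\text{frac}}) - c_j$ is non-decreasing in $\alpha_j$ on that sub-interval by monotonicity of $f_j$. Hence on each sub-interval the profit supremum is attained as $\alpha_j$ approaches the right endpoint, which lies in $\mathcal{C}$. Taking the maximum of these suprema over the finitely many sub-intervals gives $\alpha_j^* \in \mathcal{C}$, which is the claim.

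The main obstacle will be the strict inequality in the switching condition: the set of rates at which $B_k$ still prefers $\boldsymbol{x}^{\text{new}}$ is half-open on the right, so $\alpha_j^{\boldsymbol{x}}$ itself may not actually induce a purchase from $B_k$. I would handle this by interpreting $\alpha_j^*$ as the supremum, assuming right-continuity of $f_j$ in its rate argument so the seller can price arbitrarily close to but not above $\alpha_j^{\boldsymbol{x}}$ while retaining the same buyers, an idealization common in Bertrand-style pricing analyses. A related technicality is that a buyer whose preferred post-arrival bundle changes at $\alpha_j^{\boldsymbol{x}}$ may substitute into a different bundle still containing $D_j$; this only adds more candidate breakpoints to $\mathcal{C}$ without altering the argument, since the overall candidate set remains finite and the profit is still piecewise monotone between them.
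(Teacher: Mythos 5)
Your proposal is correct and follows essentially the same route as the paper's proof: rearrange the buyer's net-utility comparison $g_{k,N,\text{frac}}(\boldsymbol{x}^{\text{new}}) > g_{k,N,\text{frac}}(\Tilde{\boldsymbol{x}}^{k,N-1,\text{frac}})$ to obtain the first supremum, pair it with the rate-budget cap for the second, and then use monotonicity of $f_{j}$ in $\alpha_{j}$ to conclude that revenue from each fixed bundle is non-decreasing up to the breakpoint $\alpha_{j}^{\boldsymbol{x}}$, so the optimum lies in the finite candidate set. Your treatment is in fact somewhat more careful than the paper's (explicit piecewise-constancy of demand, finiteness of the breakpoint set, and the half-open-interval subtlety from the strict inequality, which the paper glosses over), but the underlying argument is the same.
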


\begin{remark}[Similarities between \textit{flat rate} and \textit{royalty model}]
   Observing from \cref{lem:Characterization frac multi,lem:optimal price dymanic flat},  we see that the both the optimal price $p^{*}_{j}$ and the optimal rate $\alpha_{j}^{*}$ are closely tied to $B_{k}$'s \textit{maximum willingness to pay}. That is, compared to the market prior to the arrival of $S_{j}$, the optimal values are characterized by the minimum of two factors: (1) \textit{marginal utility} that $S_{j}$ provides to $B_{k}$ and (2) $B_{k}$'s \textit{budget surplus}. It is also noted that, under \textit{royalty model}, the rate function $f$ also plays an important role as it determines the how the single rate $\alpha_{j}$ affects the total rate that $B_{k}$ pays.
\end{remark}
\section{Applications for Real-Life Scenarios}\label{sec:downstream_applications}

In real-life settings, the relationship between the data valuation of a training sample and the buyer's utility $u_k$ (i.e., the economical value, which may be expressed in dollar amounts) can have different mappings, as mentioned in \cref{sec:market_setup}. Suppose the data valuation function is denoted as $v_k: D \rightarrow \mathbb{R}$ for a dataset $D$. Then, a buyer may expect a linear relationship between $v_k$ and $u_k$, where the utility increases as the data valuation score increases. Alternatively, a buyer may prefer to only purchase data beyond a certain threshold for $v_k$. In this section, we present three types mappings between $v_k$ and $u_k$  to reflect these scenarios: \textit{linear}, \textit{discrete}, and \textit{zero-one} mappings. We show that these mappings can be easily adapted to our proposed framework in \cref{sec:market}. We only present the updated buyer's purchasing problem (\cref{prob:pu_base_abs}) since the seller's pricing problem (\cref{prob:diff_pricing}) stays the same.

\subsection{Linear Outcome}\label{sec:linear_outcome}
In practice, there are many applications where $u_{k}$ is an affine function of $v_{k}$. As previously mentioned, training LLMs on data with higher valuation scores $v_{k}$ can result in better economic value towards downstream model performance, as shown in previous works \cite{xia2024less, yu2024mates}. In this outcome setting, in addition to considering $u_{k}$ to be a affine function of $v_{k}$, we also include a bias variable $\beta$ to account for other potential other factors that are independent of $v_{k}$. Therefore, we can set $u_{k} = \gamma v_{k}(\boldsymbol{x}) + \beta$ into \cref{eqn:buyer_net_utility}, where $\gamma \in \mathbb R_{+}$ is a known coefficient, and obtain buyer $B_k$'s net utility function for the linear outcome: 
\begin{align}
    g_{k, N}(\boldsymbol{x}) = \gamma v_{k}(\boldsymbol{x}) + \beta - \boldsymbol{x}^{T} \boldsymbol{p},
\end{align}
To obtain optimal price $p^*$, we can directly refer to same procedure described in \cref{sec:market} using set values for $\gamma$ and $\beta$.

\subsection{Discrete Outcome}\label{sec:discrete_outcome}
There are also many applications where $u_{k}$ is discrete. For instance, if the data buyers are participating in an LLM benchmark challenge, such as MMLU \cite{hendrycks2021mmlu}, then training on data that falls within various ranges $v_k$ may lead to drastically different model performance, and hence leaderboard rankings. 

To mirror this, consider $u_k$ to be a category variable. We denote $\{c_h\}_{h=1}^{H}$ as a strictly increasing set of numbers such that when $v_{k} \in [c_h, c_{h+1})$, the buyer will receive reward $u_{k, h}$. We also assume that $u_{k, h+1} > u_{k, h}$ since higher data valuation scores may lead to a larger reward. Therefore, we could set $u_k = \sum_{h = 1}^{H} \mathbbm{1}_{\{v_{k}(\boldsymbol{x}) \in [c_h, c_{h+1})\}} u_{k, h}(\boldsymbol{x})$ and rewrite buyer $B_k$'s net utility function as
\begin{equation}
    g_{k, N}(\boldsymbol{x}) = \sum_{h = 1}^{H} \mathbbm{1}_{\{v_{k}(\boldsymbol{x}) \in [c_h, c_{h+1})\}} u_{k, h}(\boldsymbol{x})  - \boldsymbol{x}^{T} \boldsymbol{p}.
\end{equation}

We again apply the same procedure in \cref{sec:market} to solve for the optimal pricing.

\subsection{Zero-One Outcome} There are scenarios where the data buyers are risk-adverse and focus on the effects of rare events. In these cases, suppose that $v_k$ is normalized between [0, 1]. Then buyers may wish to purchase training data with higher values of $v_k$, assuming that purchasing data with lower $v_k$ may result in severe adverse effects. For instance, data buyers who are building AI for healthcare should not purchase data with incorrect medical information, and even a small amount of contaminated data can result in severe real-life consequences such as mis-diagnosis \cite{jin2020medqa,zhou2023survey} or unsuitable medical protocols in emergency situations \cite{sun2024edcopilotreduceemergencydepartment}. Therefore, in this context, we consider $u_{k}$ as a generalized Bernoulli distribution. The downstream outcome has a small positive reward $\underline{u}$ with probability $v_{k}$ (normal events) and a massive negative reward $\overline{u}$ with probability $1-v_{k}$ (undesirable rare events). And we assume that $\mathbb E(u_{k}) > 0$. Therefore, we can plug in and obtain buyer $B_k$'s net utility function: 
\begin{align}
    g_{k, N}(\boldsymbol{x}) 
    &= \mathbb E[u_{k}(\boldsymbol{x})]
    - \boldsymbol{x}^{T} \boldsymbol{p} \\
    &= v_{k}(\boldsymbol{x}) (\underline{u} - \overline{u}) + \overline{u} - \boldsymbol{x}^{T} \boldsymbol{p},
\end{align}
which is an affine function of $v_{k}$. Therefore, we again apply same procedure in \cref{sec:market} to solve for the optimal pricing.

\subsection{Multiple tasks}
In practice, many LLMs are evaluated over multiple tasks \cite{hendrycks2021mmlu}. To this end, we consider the context where buyer $B_k$ wishes their model $\mathcal{M}_{k}$ to perform well across multiple tasks, denoted as $Q$. Each data valuation score for a task is denoted by $v^{k}_1,\cdots,v^{k}_Q$ and the vector of all task valuations is denoted as $\boldsymbol{v_k} = (v_{k,1} \cdots v_{k,Q})$. Then we consider that the utility $u_k$ is an affine function of the utility in each task, denoted by $\boldsymbol{u_k} = (u_{k,1} \cdots u_{k,Q})$ that is, $u_k = \boldsymbol{\theta}^{T}\boldsymbol{u_k} + \epsilon$, where $\boldsymbol{\theta} \in \mathbb R^{Q}$ is a coefficient vector and $\epsilon \in \mathbb R$ denotes other factors independent from $\boldsymbol{u_k}$. We also assume that the each task is one of three categories mentioned in the last section. Therefore, we can rewrite $\boldsymbol{u_k}$ as a function of $\boldsymbol{v_k}$, which gives $u_k = \boldsymbol{\theta}^{T}\boldsymbol{u_k}(\boldsymbol{v_k}) + \epsilon$. Therefore, the buyer's net utility function becomes
\begin{align}
    g_{k, N}(\boldsymbol{x}) =\boldsymbol{\theta}^{T} \boldsymbol{u_k}(\boldsymbol{v_k}(\boldsymbol{x})) + \epsilon
    - \boldsymbol{x}^{T} \boldsymbol{p}.
\end{align}
whose solution could adopt the same procedure as described in \cref{sec:market} to solve for the optimal pricing.

\section{Lemmas and Proofs}\label{apdx:proofs}
\textbf{\Cref{lem:exploitative_finite}. }
    With \Cref{ass:seller_participation,ass:prob_function,ass:discount_factor}, any exploitative pricing (i.e., $p_t < p_{t}^*, \forall t$) will only maximize cumulative utility within a finite horizon -- after which it is strictly suboptimal. 
\begin{proof}
    \Cref{lem:exploitative_finite} equivalently states that with \Cref{ass:seller_participation,ass:prob_function,ass:discount_factor}, the optimal pricing strategy for the buyer is also the ideal price $p^*_{t}$. Thus, we show that when the buyer pays the ideal price $p^*_{t}$, its total value is the largest, i.e., 
    \begin{equation}
        \mathbb E \left [ u_{t} - p^*_{t} + \delta \mathbb E \left[r(p^*_{t}, p^*_{t}) G \mid u_{t}, b_{t} \right] \right] \geq \mathbb E \left [ u_{t} - p_{t} + \delta \mathbb E \left[\pi(p_{t}, p^*_{t}) G \mid u_{t}, b_{t} \right] \right]
    \end{equation}
    for all $p_{t} \in [0, p^*_t)$. 

    The seller will not set a price above its ideal price $p^*_t$ as it will decreases its profit, since the ideal price $p^*_t$ should be its profit-maximizing price. Any $p_t > p^*_t$ results lower profits. 

    Also, the buyer will not accept a price $p_t > p^*_t$, since it decreases the net \textit{utility} gain $u_t - p_t$ without increasing seller's participation probability $\prod_{t = 0}^{T - 1} \pi(p_{t}, p^*_{t})$.
    
    Through some linear transformation, this is equivalent to show that 
    \begin{equation}
        \mathbb E \left [ p_{t} - p^*_{t} + \delta \mathbb E \left[G (\pi(p^*_{t}, p^*_{t}) - \pi(p_{t}, p^*_{t})) \mid u_{t}, b_{t} \right] \right] \geq 0.
    \end{equation}
    We first find the lower bound of $G$. We see that $p^*_{t}$ gives a payoff of 
    \begin{equation}
        \mathbb E \left[\sum_{t = 0}^{\infty} \delta^{t}\left(u_{t} - p^*_{t} \right)\right] \geq \frac{\min_{t \in [0, \infty)}\mathbb E[u_{t} - p^*_{t}]}{1 - \delta}.
    \end{equation}
    Therefore, we must have $G \geq \frac{\min_{t \in [0, \infty)} \mathbb E[u_{t} - p^*_{t}]}{1 - \delta}$. Along with \Cref{ass:seller_participation,ass:prob_function,ass:discount_factor}, this gives us, for a given $u_{t}$ and $b_{t}$, 
    \begin{equation}
        \frac{\delta G \left(\pi(p^*_{t}, p^*_{t}) - \pi(p_{t}, p^*_{t}) \right)}{p^*_{t} - p_{t}} \geq \delta G L \geq 1,
    \end{equation}
    implying that 
    \begin{equation}
        \mathbb E \left [ p_{t} - p^*_{t} + \delta \mathbb E \left[G (\pi(p^*_{t}, p^*_{t}) - \pi(p_{t}, p^*_{t})) \mid u_{t}, b_{t} \right] \right] \geq 0.
    \end{equation}
\end{proof}

\begin{lemma}[\textit{Participation Loss Is Lower-Bounded}]\label{lem:reduced_participation}
    With \Cref{ass:seller_participation}, let $P:= \lim_{T \rightarrow \infty}P_{T}$ and $S := \sum_{i = 0}^{\infty} (1 - \pi(p_{i}, p^*_{i}))$. Then for the class of all exploitative pricing strategies where $p_t \leq p_t^*, \forall t$, we have (1) $P_{T}$ is strictly decreasing and (2) The limit of reduced participation is lower bounded, i.e., $1 - P \geq 1 - e^{-S} > 0$.
\end{lemma}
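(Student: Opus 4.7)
\textbf{Proof proposal for \Cref{lem:reduced_participation}.}

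The plan is to handle the two claims separately, since each one reduces to a short elementary argument once the definitions are unpacked. For part (1), I would observe that $P_{T+1} = P_T \cdot \pi(p_T, p_T^*)$, so the question reduces to showing $\pi(p_T, p_T^*) < 1$ under exploitative pricing. This follows directly from \Cref{ass:seller_participation}: $\pi$ is strictly increasing in its first argument with $\pi(p_T^*, p_T^*) = 1$, so $p_T < p_T^*$ gives $\pi(p_T, p_T^*) < 1$, and multiplying the (positive) running product by a factor strictly below $1$ strictly decreases it. The only subtlety is to note that as long as all prior prices are strictly positive, each prior factor is strictly positive (again by strict monotonicity and $\pi(0, \cdot) = 0$), so $P_T > 0$ and the strict inequality is meaningful; if some $p_t = 0$ occurs, the sequence collapses to $0$ and the statement should be read as monotone with strict decrease whenever $P_T > 0$.

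For part (2), the plan is to take logarithms and apply the standard inequality $\log x \leq x - 1$ for $x > 0$. Concretely, since $P_T$ is non-increasing and bounded in $[0,1]$, the limit $P := \lim_{T\to\infty} P_T$ exists, and
\begin{equation}
\log P = \sum_{t=0}^{\infty} \log \pi(p_t, p_t^*) \leq \sum_{t=0}^{\infty} \bigl(\pi(p_t, p_t^*) - 1\bigr) = -S.
\end{equation}
Exponentiating gives $P \leq e^{-S}$, hence $1 - P \geq 1 - e^{-S}$, which is the desired bound. The final strict inequality $1 - e^{-S} > 0$ requires $S > 0$; under exploitative pricing at least one term $p_t < p_t^*$ occurs (in fact all of them by hypothesis), so $1 - \pi(p_t, p_t^*) > 0$ for that $t$ by the same strict monotonicity argument from part (1), forcing $S > 0$.

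The main obstacles are minor and technical rather than structural. First, I need to treat the degenerate case $\pi(p_t, p_t^*) = 0$ (so $\log \pi = -\infty$) carefully: in that case $P = 0$ trivially and the inequality holds vacuously, so the interesting regime is $\pi(p_t, p_t^*) \in (0, 1)$, where $\log x \leq x - 1$ applies with equality only at $x = 1$. Second, if $S = \infty$ then $e^{-S} = 0$ and the bound $P \leq 0$ combined with $P \geq 0$ forces $P = 0$; this is consistent and requires no separate argument, but I should flag it so the conclusion $1 - e^{-S} > 0$ is read with the convention that it collapses to $1 > 0$. Neither case threatens the argument, so I expect the proof to be short and clean once the convexity inequality $\log x \leq x - 1$ is invoked.
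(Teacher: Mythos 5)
Your proposal is correct and follows essentially the same route as the paper's proof: part (1) via $\pi(p_t,p_t^*)<1$ under strict monotonicity of $\pi$, and part (2) via $\log P = \sum_t \log \pi(p_t,p_t^*) \leq \sum_t (\pi(p_t,p_t^*)-1) = -S$ followed by exponentiation. Your additional care with the degenerate cases ($\pi = 0$, $S = \infty$, and the strictness $S>0$) goes slightly beyond what the paper writes out, but does not change the argument.
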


\begin{proof}
    The first part is trivial to see as we assume that $\pi(p_{i}, p^*_{i}) < 1$ for all $p_{i} < p^*_{i}$ from \Cref{ass:seller_participation}.

    For the second part, we first denote $P := \lim_{t \rightarrow \infty} P_{t}$. Then we take the following transformation:
    \begin{equation}
        \log P 
        = \log \left( \prod_{i = 0}^{\infty} \pi(p_{i}, p^*_{i})\right) 
        = \sum_{i = 0}^{\infty} \log \left( \pi(p_{i}, p^*_{i}) \right).
    \end{equation}
    Since we have $0 < \pi(p_{i}, p^*_{i}) < 1$, then $\log \left( \pi(p_{i}, p^*_{i}) \right) \leq \pi(p_{i}, p^*_{i}) - 1$, indicating that 
    \begin{equation}
        \log P 
        =
        \sum_{i = 0}^{\infty} \log \left( \pi(p_{i}, p^*_{i}) \right) 
        \leq 
        \sum_{i = 0}^{\infty} \left( \pi(p_{i}, p^*_{i}) - 1 \right)
        =
        - S,
    \end{equation}
    which gives $P \leq e^{-S}$ by exponentiating both sides. 
\end{proof}

\textbf{\Cref{lem:optimal price dymanic flat}. }
   Seller $S_{j}$'s optimal price for $D_{j}$ is characterized as one of buyers' MWP:
    \small
    \begin{equation}
        p_{j}^{*} \in \cup_{k = 1}^{M} \text{MWP}_k.
    \end{equation}
    \normalsize

\begin{proof}
    Recall that without dataset $D_{j}$, each buyer $B_k$ has already solved \hyperlink{prob:pu_base_abs}{$\max_{\boldsymbol{x} \in \mathcal{X}_{k,N-1}}  g_{k, N-1}(\boldsymbol{x})$} according to our market definition in \cref{sec:market}, where $\mathcal{X}_{k,N-1}$ is the set of all feasible purchase decisions.  Next, after seller $S_{j}$ (with $D_{j}$) has arrived on the market, we analyze the conditions in which $B_k$ will purchase $D_{j}$ at a potential price $p_j$. For each feasible purchase decision (i.e., a collection of datasets), represented by $\boldsymbol{x} \in \mathcal{X}_{k,N-1}$, let $\boldsymbol{x} + e_j$ denote its union with $D_{j}$, where $e_j$ is the unit vector indicating $D_j$ is selected. For buyer $B_k$ to change their previous decision to purchase $D_{j}$, there are two requirements that need to be satisfied. First, we must have:
    \begin{align}\label{eqn:req_net_utility}
        g_{k, N}(\boldsymbol{x} + e_j) > g_{k, N-1}(\Tilde{\boldsymbol{x}}^{k, N-1}).
    \end{align}
    That is, the \textit{net utility} $g_{k, N}(\boldsymbol{x} + e_j)$ of purchasing decisions $\boldsymbol{x} + e_j$, must be larger than the \textit{net utility} $g_{k, N-1}(\Tilde{\boldsymbol{x}}^{k, N-1})$ of a previous optimal purchasing decision $\Tilde{\boldsymbol{x}}^{k, N-1}$. It is also noted that $g_{k, N}(\Tilde{\boldsymbol{x}}^{k, N-1}) = g_{k, N-1}(\Tilde{\boldsymbol{x}}^{k, N-1})$. Second, for buyer $B_k$ to purchase $\boldsymbol{x} + e_j$ at price $p_{j}$, we must fulfill the budget constraint: 
    \begin{align}\label{eqn:req_budget}
         p_j 
         \leq 
         b_{k} - \left(\Tilde{\boldsymbol{x}}^{k, N-1}\right)^{T} \boldsymbol{p}
         =
         \Delta b_k(\Tilde{\boldsymbol{x}}^{k, N-1}),
    \end{align}
    which ensures that purchasing $D_{j}$ does not exceed the buyer's budget $b_k$. If both requirements are satisfied, then the buyer $B_k$ will change their previous purchasing decision in order to purchase $D_j$ under the price $p_j$. This procedure is presented in detail in \cref{alg:buyer_decision} in \cref{apdx:algs}.

    Next, given the conditions for the buyer $B_k$ to purchase $D_j$, the seller must solve \hyperref[prob:diff_pricing]{$\max_{{p}_{j} \in \mathcal{P}_{j, M}} r(p_{j})$} to find the optimal price $p_j^*$. First, we consider an edge case where the price of dataset $D_{j}$ is set as $p_{j} = 0$. For a buyer $B_k$, we denote $\mathcal{X}_{k,N}^{1}$ as the set of all purchasing decisions where including $D_j$ in the purchase improves the buyer's previous net utility $g_{k,N}(\Tilde{\boldsymbol{x}}^{k, N-1})$. That is, for every $\boldsymbol{x} + e_j \in \mathcal{X}_{k,N}^{1}$, we have $g_{k,N}(\boldsymbol{x} + e_j) > g_{k,N}(\Tilde{\boldsymbol{x}}^{k, N-1})$.  If $\mathcal{X}_{k,N}^{1}$ is empty, then $B_{k}$ will not purchase $D_{j}$ at any price, since $D_{j}$ cannot bring positive improved \textit{net utility} to $B_{k}$. Then, when $p_{j}$ gradually increases and exceeds $\max_{\boldsymbol{x} \in \mathcal{X}_{k,N-1}} \{\min \{\Delta u_k(\boldsymbol{x} + e_j), \Delta b_k(\Tilde{\boldsymbol{x}}^{k, N-1})\} \}$, then $B_{k}$ will decide not to purchase $D_{j}$, causing the value of $\sum_{k = 1}^{M} \Tilde{\boldsymbol{x}}^{k, N}_{j}(\boldsymbol{p})$ to drop by one. Since the profit function $r(p_{j})$ is a piecewise linear function, the its optimal point must be one of its breakpoints. 
\end{proof}

 
\textbf{\cref{lem:opt_price_seller_buyer}} The optimal price for the seller $S$ under our framework is
\begin{equation}
    p^*_{t} = \min\{u_{t}, b_{t}\},  \forall t.
\end{equation}

With \cref{ass:seller_participation,ass:prob_function,ass:discount_factor}, $p^*_{t}$ gives the buyer the maximum cumulative \textit{net utility} over infinite horizon.
    
\begin{proof}
    In a single buyer and seller setting, we could trivially see that the optimal price for the seller $p_t^* = \min\{u_{t}, b_{t}\}$: if $p_t > p_t^*$, then the buyer would not purchase this dataset since the \textit{net utility} would be negative; if $p_t < p_t^*$, then the seller's profit is not maximized. 

    Further, we refer to the proof of \Cref{lem:exploitative_finite} for the second part.
\end{proof}

\textbf{\Cref{lem:increasing_threshodl}}
    The threshold time period where the fairshare pricing obtains higher cumulative utility than any class of exploitative pricing is:
        \begin{equation}
        t^*  := \sup_{p_t < p_t^*, \forall t} \left\{ T \in \mathbb N : \mathbb E \left[\sum_{t = 0}^{T} \delta^{t}\left(\left(u_{t} - p^*_{t} \right) - \prod_{i = 0}^{t - 1} \pi(p_{i}, p^*_{i}) \left(u_{i} - p_{i} \right) \right) \right] \leq  0 \right\}. \label{eqn:threshold}
    \end{equation}
    And $t^*$ is increasing as $\delta$ increases.

\begin{proof}
    For any given class of exploitative pricing strategy, when $\delta$ increases, the part: $\sum_{t = 0}^{T} \delta^{t}\left(\left(u_{t} - p^*_{t} \right) - \prod_{i = 0}^{t - 1} \pi(p_{i}, p^*_{i}) \left(u_{i} - p_{i} \right) \right)$ increases. Therefore, for all class of exploitative pricing strategy, i.e., $p_t < p_t^*, \forall t$, then $t^*$ also decreases.
\end{proof}

\textbf{\Cref{lem:infl_ip}. } Suppose we have a LLM with parameters $\theta$. We perform a gradient descent step with training sample $d$ with learning rate $\eta$ such that $\hat{\theta} = \theta -  \eta \nabla \mathcal{L}(d; \theta)$. Then,
    \begin{equation}
    \mathcal{L}(d';\theta) - \mathcal{L}(d'; \hat{\theta}) \approx \nabla \mathcal{L}(d'; \theta) \cdot \nabla \mathcal{L}(d; \theta) \notag
    \end{equation}
    
\begin{proof}
    First, we consider the change in loss of $z'$ using a first-order approximation:
    \begin{align}
    \mathcal{L}(d'; \hat{\theta})  
    &= \mathcal{L}(d'; \theta) + \nabla \mathcal{L}(d'; \theta) \dot (\hat{\theta} - \theta) + \mathcal{O}(|| \hat{\theta} - \theta||^2) \\
    \mathcal{L}(d'; \theta) - \mathcal{L}(d'; \hat{\theta})  
    &= - \nabla \mathcal{L}(d'; \theta) \dot (\hat{\theta} - \theta) + \mathcal{O}(||\hat{\theta} - \theta||^2)
    \end{align}
    \noindent Next, suppose a gradient descent step is taken on training sample $d$, and the model parameters are updated as: $\hat{\theta} = \theta -  \eta \nabla \mathcal{L}(d; \theta)$. Thus, we have $\hat{\theta} - \theta = -\eta \nabla \mathcal{L}(d; \theta)$, and the change in loss can be written as
    \begin{align}
    \mathcal{L}(d'; \theta) - \mathcal{L}(d'; \hat{\theta})  
    &\approx \eta \nabla \mathcal{L}(d'; \theta) \cdot  \nabla \mathcal{L}(z; \theta) \propto \nabla \mathcal{L}(d'; \theta) \cdot  \nabla \mathcal{L}(d; \theta)
    \end{align}
\noindent Given that $\eta$ is a constant. 
\end{proof}

\textbf{\cref{lem:Characterization frac multi}}
Define $\alpha_{j}^{\boldsymbol{x}}$ as
    \begin{align}
        \min & 
        \left \{\sup_{\alpha_{j} \in [0,1)} \left \{\alpha_{j} : f_{j}(\begin{bmatrix} \boldsymbol{\alpha}^{T} \; \alpha_{j} \end{bmatrix}, \boldsymbol{x}^{\text{new}}) < 1 - (1 - f_{j}(\boldsymbol{\alpha}, \Tilde{\boldsymbol{x}}^{k, N-1, \text{frac}})) 
        \frac{u_k( \Tilde{\boldsymbol{x}}^{k, N-1, \text{frac}} )}
        {u_k(\boldsymbol{x}^{\text{new}})} \right \}, \right.  \nonumber \\
        & \left. \sup_{\alpha_{j} \in [0,1)} \left\{ \alpha_{j} : f_{j}(\begin{bmatrix} \boldsymbol{\alpha}^{T} \; \alpha_{j} \end{bmatrix}, \boldsymbol{x}^{\text{new}}) \leq \overline{\alpha}_{k} \right\} \right \}.
    \end{align}
    For every $\boldsymbol{x} \in \mathcal{X}_{k,N-1,\text{frac}}^{1}$ and all $k \in [M]$, we obtain $\alpha_{j}^{\boldsymbol{x}}$ and their union $\cup_{k = 1}^{M} \cup_{\boldsymbol{x} \in \mathcal{X}_{k,N-1,\text{frac}}^{1}} \{\alpha_{j}^{\boldsymbol{x}}\}$. Then we have $\alpha_{j}^{*} \in \cup_{k = 1}^{M} \cup_{\boldsymbol{x} \in \mathcal{X}_{k,N-1,\text{frac}}^{1}} \{\alpha_{j}^{\boldsymbol{x}}\}$.
    
\begin{proof}
    We show that, for every $\boldsymbol{x} \in \mathcal{X}_{k,N-1,\text{frac}}^{1}$, $\alpha_{j}^{\boldsymbol{x}}$ gives the largest revenue of $\boldsymbol{x} + e_j$ for ${S}_{j}$ (note that $\boldsymbol{x} + e_j = \boldsymbol{x}^{\text{new}}$). Recall that in the main text, we need to increase $\alpha_{j}$ from zero until we find the largest $\alpha_{j}$ such that either of:
    \begin{enumerate}
        \item $g_{k,N, \text{frac}}(\boldsymbol{x} + e_j) > g_{k,N, \text{frac}}(\Tilde{\boldsymbol{x}}^{k, N-1, \text{frac}})$,
        
        \item $f_{j}(\begin{bmatrix} \boldsymbol{\alpha}^{T} \; \alpha_{j} \end{bmatrix}, \boldsymbol{x} + e_j) = \overline{\alpha}_{k}$.
    \end{enumerate}
    If we rewrite the first condition, we are essentially looking for $\alpha_{j}$ such that 
    \begin{align}
        \sup_{\alpha_{j} \in [0,1)} \left \{\alpha_{j} : f_{j}(\begin{bmatrix} \boldsymbol{\alpha}^{T} \; \alpha_{j} \end{bmatrix}, \boldsymbol{x} + e_j) < 1 - (1 - f_{j}(\boldsymbol{\alpha}, \Tilde{\boldsymbol{x}}^{k, N-1, \text{frac}})) 
        \frac{u_k( \Tilde{\boldsymbol{x}}^{k, N-1, \text{frac}} )}
        {u_k( \boldsymbol{x} + e_j )} \right \}
    \end{align}
    Then we see that the revenue that for each $\boldsymbol{x} \in \mathcal{X}_{k,N-1,\text{frac}}^{1}$, seller ${S}_{j}$ can make from buyer $B_k$ is
    \begin{align}
        f_{j}(\begin{bmatrix} \boldsymbol{\alpha}^{T} \; \alpha_{j} \end{bmatrix}, \boldsymbol{x} + e_j)
        u_k(  \boldsymbol{x}_{j}(\boldsymbol{\alpha}) ),
    \end{align}
    where $f_{j}(\begin{bmatrix} \boldsymbol{\alpha}^{T} \; \alpha_{j} \end{bmatrix}, \boldsymbol{x} + e_j)$ is a non-decreasing function over $\alpha_{j}$ while other terms stays fixed. It indicates that $\alpha_{j}^{\boldsymbol{x}}$ is the largest $\alpha_{j}$ that the seller ${S}_{j}$ could set for buyer $B_{k}$ to purchase $S_{j}$. Therefore, the optimal rate $\alpha_{j}^{*}$ is one of the rates $\cup_{k = 1}^{M} \cup_{\boldsymbol{x} \in \mathcal{X}_{k,N-1,\text{frac}}^{1}} \{\alpha_{j}^{\boldsymbol{x}}\}$.
\end{proof}

\section{Additional Experimental Details}\label{apdx:exp_setup}

\subsection{Data Valuation Experiments}\label{apdx:exp_setup_valuation}

For each dataset, we randomly sample 200 demonstrations from the validation set to form a representative dataset \footnote{Note: For PIQA we take 200 samples from the training set since the validation set is commonly reserved for testing.}. Each data sample in the market is then scored based on its similarity to this representative set.

\textbf{Model Training:} After obtaining purchasing decisions for all data samples, the buyers train their models using the purchased data. In order to conduct a fair comparison across buyers, we sample a set number of data from the buyers' purchases (shown in \cref{fig:price_performance}). We train each model (i.e., buyer) on these samples separately using LoRA \cite{hu2021loralowrankadaptationlarge} for 3 epochs, with a learning rate of 2e-5 and batch size 32. All models are trained on A6000 GPUs on single GPU settings and take less than 1 hour.

\textbf{Model Evaluation:} For evaluation, we use the test splits of the previously mentioned datasets. In particular, we use 5-shot evaluation on the MathQA test set, and 4-shot evaluation in on the MedQA test. \cref{tab:prompt-demostrations} in \cref{apdx:tables_figures} shows the demonstrations used for 5-shot and 4-shot evaluation.


\subsection{Data Pricing Experiments}\label{apdx:exp_setup_pricing}
\textbf{Experiment Setups}: We simulate two buyer budgets at each time step $t$. The first buyer (high budget) has a budget uniformly randomly generated between $95\%$ and $100\%$ of the total utilities of all $10$ datasets listed in the market. The second buyer (low budget) has a budget uniformly randomly generated between $90\%$ and $95\%$ of the total utilities of all $10$ datasets listed in the market. At each time period, seller's arriving orders are randomly shuffled. And they prices their own datasets sequentially. 

\textbf{Robustness Check.} As discussed in \Cref{lem:increasing_threshodl}, the threshold $t^*$ when \textit{fairshare} becomes optimal for the buyer) increases as the discount factor $\delta$ decreases. To run a robustness check, for the high-budget buyer in \cref{fig:exp_pricing_buyer_medqa_pythia_high}, setting $\delta = 0.999, 0.99, 0.98$ yields $t^* = 32, 38, 44$ respectively, which is consistent with our theoretical analysis. In below, we compare the change of buyer's accumulative utilities over different values of $\delta$.

\begin{figure}[ht]
    \centering
    \includegraphics[width=0.98\textwidth]{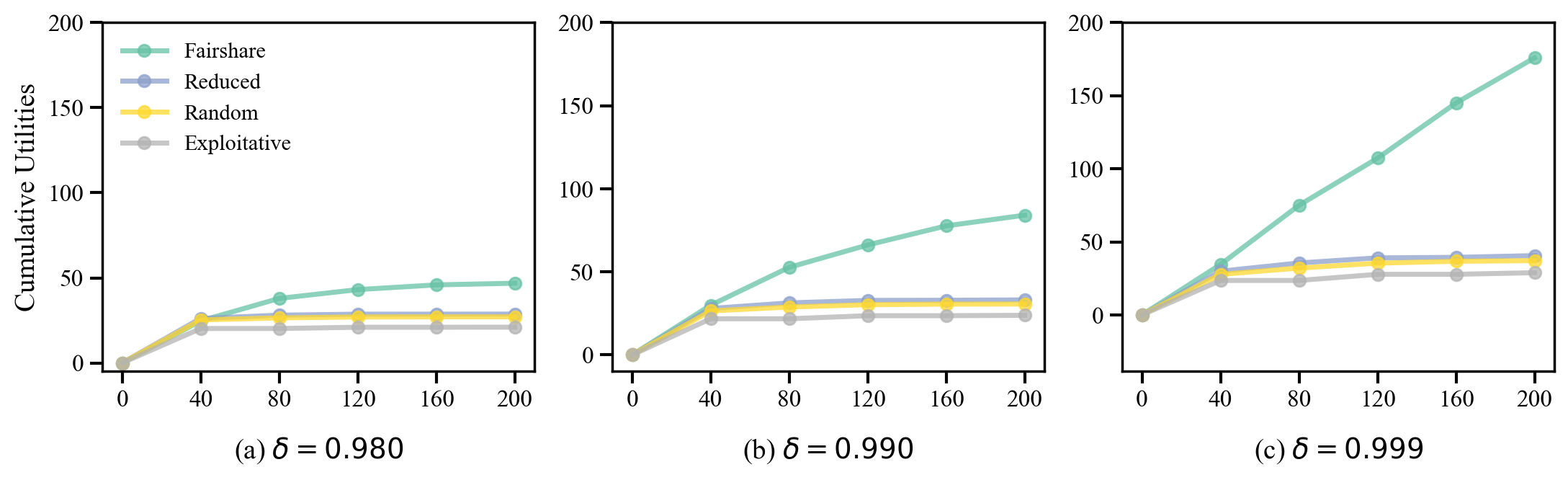}
    \caption{Buyer's accumulative utilities over time periods for $\delta = 0.980, 0.990, 0.999$.}
    \label{fig:gamma_change}
\end{figure}
\clearpage
\section{Additional Experiments and Figures}\label{apdx:tables_figures}

\subsection{Algorithms}\label{apdx:algs}

\begin{algorithm}[ht]
\caption{Determine if buyer $B_k$ will purchase dataset $D_{j}$ at price $p_{j}$}
\label{alg:buyer_decision} 
\begin{algorithmic}[1]
    \State \textbf{Inputs:} prices $\boldsymbol{p}$, previous optimal $\Tilde{\boldsymbol{x}}^{k, N-1}$, feasible set $\mathcal{X}_{k,N-1}$, price $p_{j}$, budget $b_k$
    \State \textbf{Output:} indicator $\mathbbm{1}_{\{B_k, D_j, p_j\}}$
    \State Initialize $\mathbbm{1}_{\{B_k, D_j, p_j\}} \gets 0$
    \State $p_j \gets 0$
    \For{$\boldsymbol{x} \in \mathcal{X}_{k,N-1}$}
        \State $\boldsymbol{x}^{\text{new}} \gets \boldsymbol{x} + \boldsymbol{e}_j$
        \If{$g_{k, N}(\boldsymbol{x}^{\text{new}}) > g_{k, N-1}(\Tilde{\boldsymbol{x}}^{k, N-1})$ \textbf{and} $\boldsymbol{x}^\top \boldsymbol{p} + p_j \leq b_k$}
            \State $\mathbbm{1}_{\{B_k, D_j, p_j\}} \gets 1$
            \State \textbf{break}
        \EndIf
    \EndFor
    \State \Return $\mathbbm{1}_{\{B_k, D_j, p_j\}}$
\end{algorithmic}
\end{algorithm}


\begin{algorithm}[h]
\caption{Market Dynamic Procedure}
\label{alg:procedure}
\begin{algorithmic}[1]
    \State \textbf{Inputs:} Buyers $\{B_k\}_{k=1}^M$, Sellers $\{S_j\}_{j=1}^N$
    \State \textbf{Initialization:} Buyers $\{B_k\}_{k=1}^M$ enter the market
    \For{$j = 1$ to $N$}
        \State Seller $S_j$ enters with potential prices $\mathcal{P}_{j,M}$ for dataset $D_j$
        \ForAll{$p_j \in \mathcal{P}_{j,M}$}
            \For{$k = 1$ to $M$}
                \State Buyer $B_k$ solves:
                \[
                \Tilde{\boldsymbol{x}}^{k,j-1} = \arg\max_{\boldsymbol{x} \in \mathcal{X}_{k,j}} g_{k,j}(\boldsymbol{x})
                \]
                \State to decide whether to purchase $D_j$ at price $p_j$ \Comment{See Eqn.~\ref{prob:pu_base_abs}}
            \EndFor
            \State Seller computes net profit $r(p_j)$ assuming price $p_j$ \Comment{See Eqn.~\ref{eqn:seller_obj}}
        \EndFor
        \State Seller selects:
        \[
        p_j^* = \arg\max_{p_j \in \mathcal{P}_{j,M}} r(p_j)
        \]
        \State and sets $p_j^*$ as the fixed price for $D_j$ \Comment{See Eqn.~\ref{prob:diff_pricing}}
    \EndFor
\end{algorithmic}
\end{algorithm}

\subsection{Data Valuation Oracle Experiments}\label{appendix:oracle} 

\begin{figure}[h]
\centering 
    \subfloat[PIQA (Llama-1b)]{
        \label{fig:corr_llama-1b_piqa}
        \adjustbox{valign=b}{\includegraphics[width=0.22\textwidth]{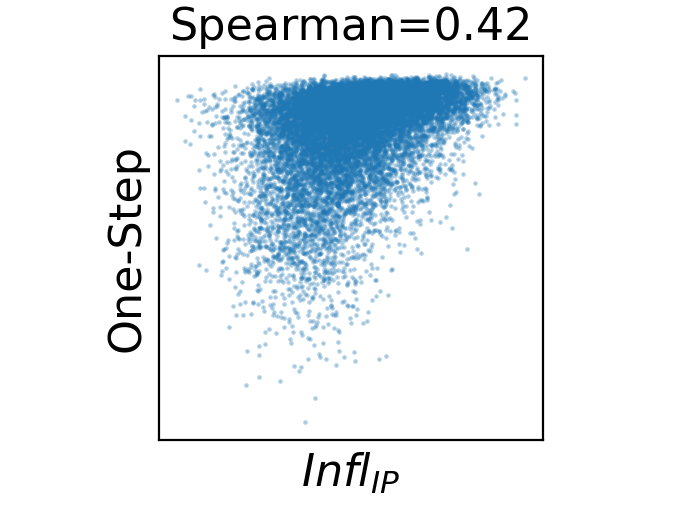}}}
    \subfloat[MathQA (Pythia-1b)]{
        \label{fig:corr_pythia-1b_math}
        \adjustbox{valign=b}{\includegraphics[width=0.22\textwidth]{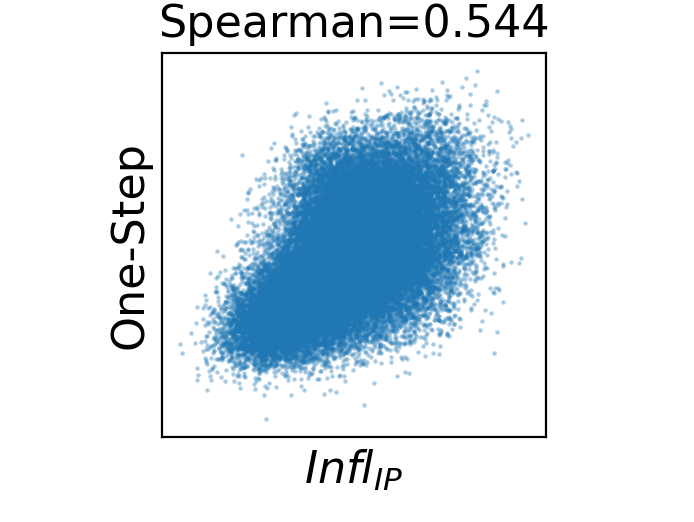}} 
    }
    \subfloat[PIQA (Llama-1b)]{
        \label{fig:oracle_llama-1b_piqa}
        \adjustbox{valign=b}{\includegraphics[width=0.24\textwidth]{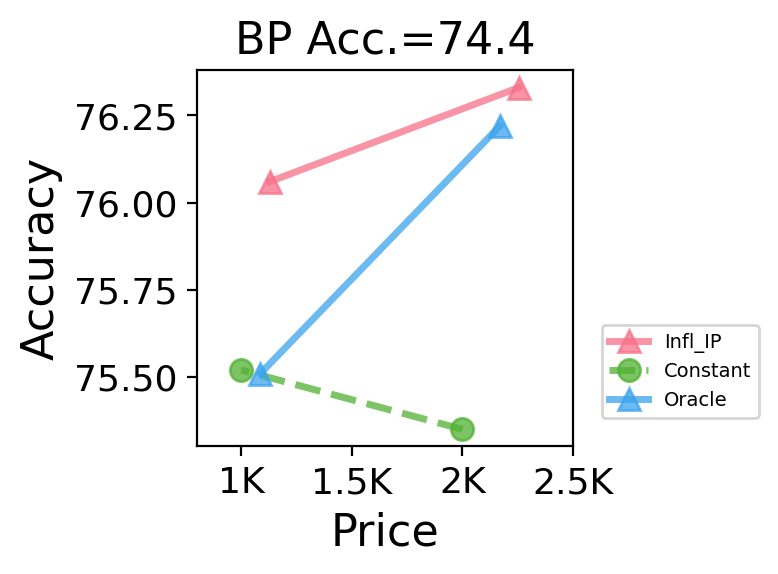}
        }
    }
    \subfloat[MathQA (Pythia-1b)]{
        \label{fig:oracle_pythia-1b_math}
        \adjustbox{valign=b}{\includegraphics[width=0.24\textwidth]{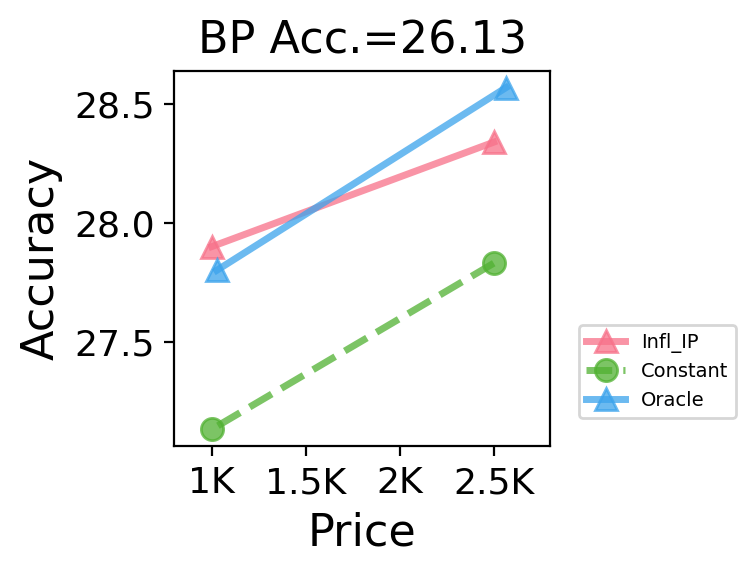}
        } 
    }
    \vspace{-0.2cm}
    \caption{\textit{Left, middle-left columns:} Correlation analysis between oracle and $\text{Infl}_{\text{IP}}$ valuation. \textit{Right, middle-right columns:} Performance analysis between between oracle and $\text{Infl}_{\text{IP}}$ valuation.}
  \label{fig:oracle}
\end{figure}

Influence-based methods, such as $\text{Infl}_{\text{IP}}$ approximate the influence of a sample $d$ on a $d'$ for a model parameterized by $\theta$ by estimating the effects of training or ``upweighting'' (alternatively, removing/``downweighting'') on $d$ (see \cref{appendix:influence-functions}). In past literature, $\text{Infl}_{\text{IP}}$ is validation through an ``Oracle'' (One-Step Training) score, which we denote as $\text{Oracle}(d, d') = \mathcal{L}(d'; \theta) - \mathcal{L}(d'; \hat{\theta})$, where $\hat{\theta} = \theta -  \eta \nabla \mathcal{L}(d; \theta)$ and $\eta$  is the learning rate \cite{pruthi2020tracin, jiao2024context, yu2024mates}. 

To compare the difference between $\text{Infl}_{\text{IP}}$ valuation versus its oracle valuation, we conduct the same experiments described in \cref{sec:exp_utility}. \cref{fig:oracle} shows that $\text{Infl}_{\text{IP}}$ and $\text{Oracle}$ have decent correlation in their agreement in their valuation of the sellers' data, which supports findings in previous works \cite{jiao2024context}. We note that in the case where correlation is decent, such as in \cref{fig:corr_pythia-1b_math}, the final model performance between these methods is close, as seen in \cref{fig:oracle_pythia-1b_math}. In the case where correlation is lower, such as in \cref{fig:corr_llama-1b_piqa}, the final model performance between these methods initially have a gap, but become more similar as the amount of data purchased increases, as seen in \cref{fig:oracle_llama-1b_piqa}. This suggests that in practice, even in cases when the agreement between $\text{Infl}_{\text{IP}}$ and Oracle may not be very high, final model performance resulting from these two methods can still be similar. 

\subsection{Additional Experimental Results}\label{apdx:additional_exp}

\captionsetup[subfloat]{position=bottom}
\begin{figure*}[h]
\centering 
     \subfloat[Avg Price (1b)]{
        \adjustbox{valign=t}{\includegraphics[width=0.18\textwidth]{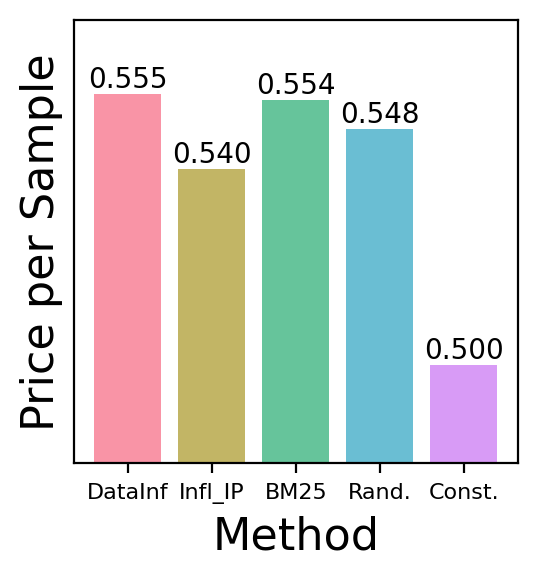}
        } 
         \label{fig:pythia_price}
    }
    \subfloat[MathQA (1b)]{
        \adjustbox{valign=t}{\includegraphics[width=0.2\textwidth]{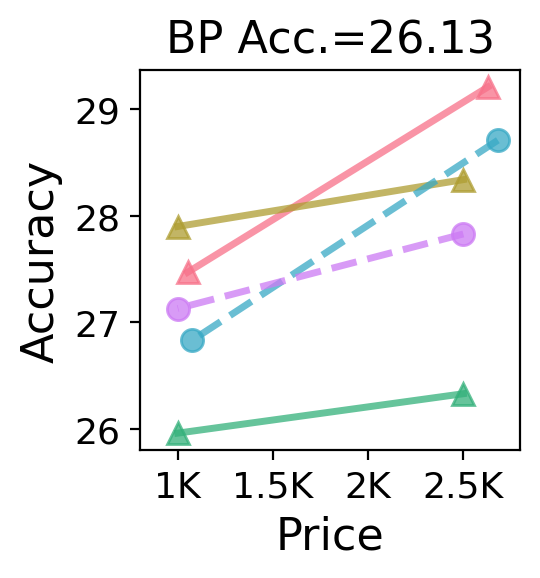}}    }
    \subfloat[MedQA (1b)]{
        \adjustbox{valign=t}{\includegraphics[width=0.22\textwidth]{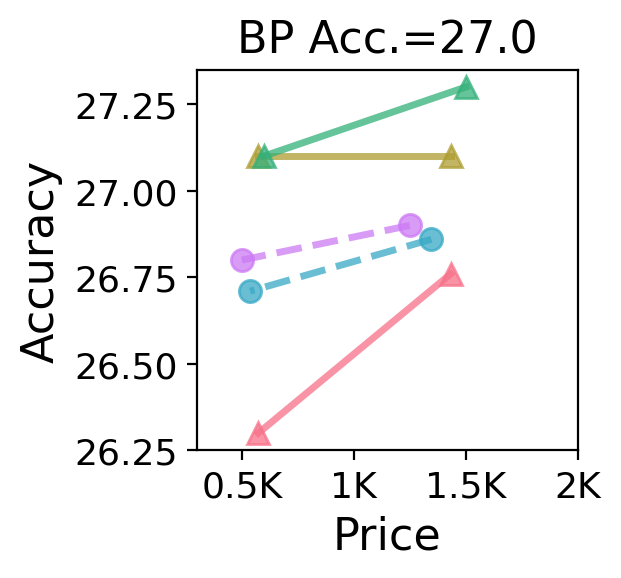}} 
    }
    \subfloat[PIQA (1b)]{
        \adjustbox{valign=t}{\includegraphics[width=0.29\textwidth]{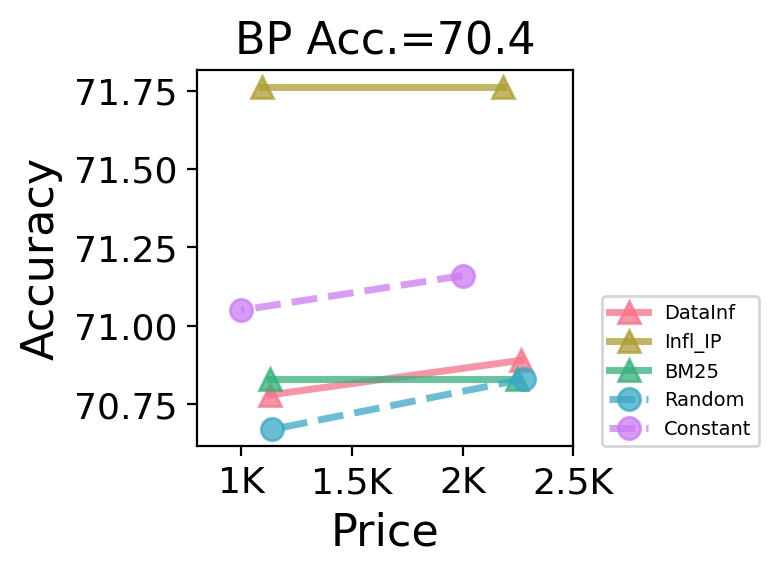}
        } 
    }\\
    \subfloat[Avg Price (410m)]{
        \adjustbox{valign=t}{\includegraphics[width=0.18\textwidth]{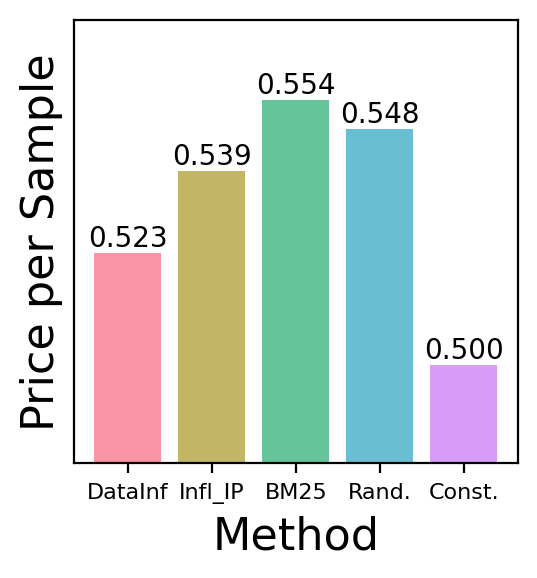}
        } 
    }
    \subfloat[MathQA (410m)]{
        \adjustbox{valign=t}{\includegraphics[width=0.2\textwidth]{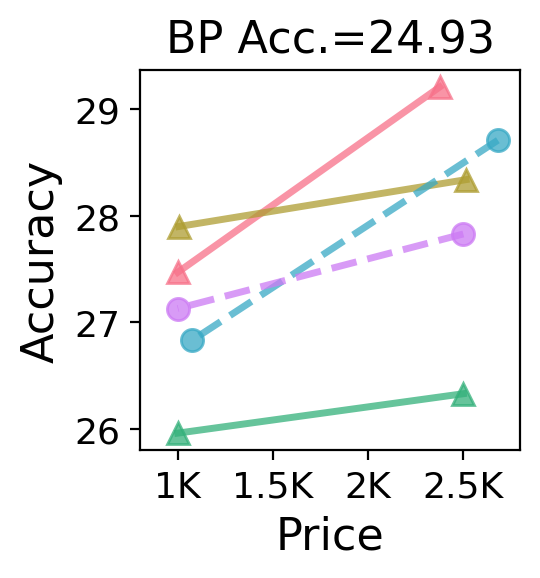}}    }
    \subfloat[MedQA (410m)]{
        \adjustbox{valign=t}{\includegraphics[width=0.22\textwidth]{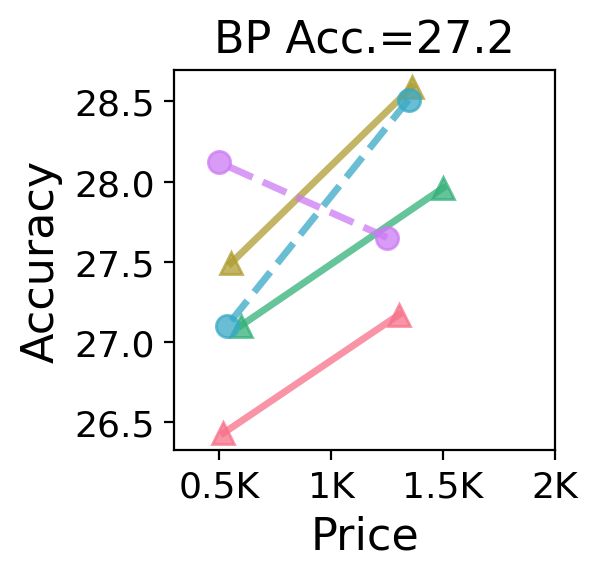}} 
    }
    \subfloat[PIQA (410m)]{
        \adjustbox{valign=t}{\includegraphics[width=0.29\textwidth]{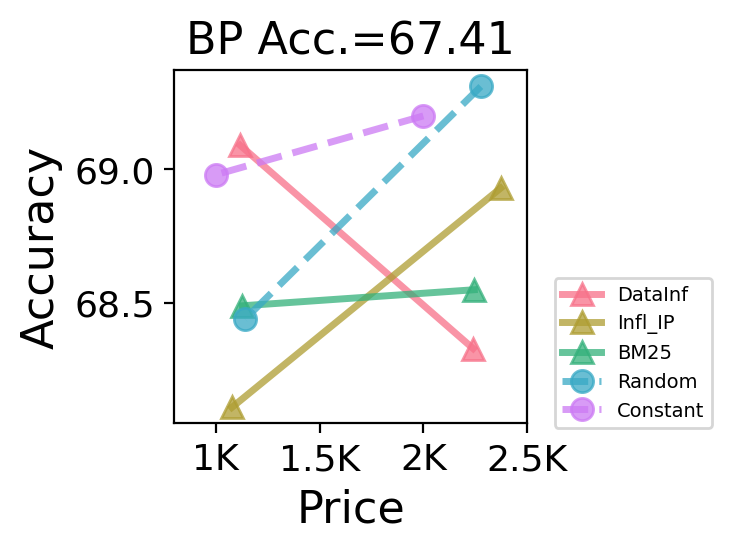}
        } 
    }
    \caption{Buyers’ model (top row: Pythia-1b, bottom row: Pythia-410m) performance and costs from their purchased data from math, medical, and physical reasoning data markets. Purchasing decisions were using the constant, random, BM25, $\text{Infl}_\text{IP}$ data valuation methods (see \cref{sec:exp_utility} for details).}
  \label{fig:price_performance_apdx}
\end{figure*}
\clearpage 
\begin{figure}[t]
\centering 
    \subfloat[High-budget buyers' utilities (Pythia-1b).]{
        \adjustbox{valign=t}{\includegraphics[width=0.215\textwidth]{graphs/Buyer_medqa_pythia_1.pdf}} 
        \label{fig:exp_pricing_buyer_medqa_full_pythia_high}
    }
    \hspace{0.01\linewidth}
    \subfloat[Low-budget buyers' utilities (Pythia-1b).]{
        \adjustbox{valign=t}{\includegraphics[width=0.215\textwidth]{graphs/Buyer_medqa_pythia_2.pdf}} 
        \label{fig:exp_pricing_buyer_medqa_full_pythia_low}
    }
    \hspace{0.01\linewidth}
    \subfloat[Sellers' profits (Pythia-1b).]{
        \adjustbox{valign=t}{\includegraphics[width=0.215\textwidth]{graphs/Seller_medqa_pythia_1.pdf}} 
        \label{fig:exp_pricing_seller_medqa_ful_pythia_profit}
    }
    \hspace{0.01\linewidth}
    \subfloat[Sellers' participation (Pythia-1b).]{
        \adjustbox{valign=t}{\includegraphics[width=0.215\textwidth]{graphs/Seller_medqa_pythia_2.pdf}} 
        \label{fig:exp_pricing_seller_medqa_ful_pythia_num}
    }

    \vspace{0.2cm}

    \subfloat[High-budget buyers' utilities (Pythia-410m).]{
        \adjustbox{valign=t}{\includegraphics[width=0.215\textwidth]{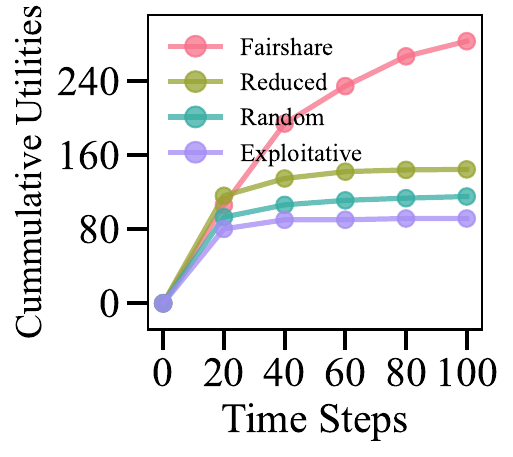}} 
        \label{fig:exp_pricing_buyer_medqa_ful_pythia410m_high}
    }
    \hspace{0.01\linewidth}
    \subfloat[Low-budget buyers' utilities (Pythia-410m).]{
        \adjustbox{valign=t}{\includegraphics[width=0.215\textwidth]{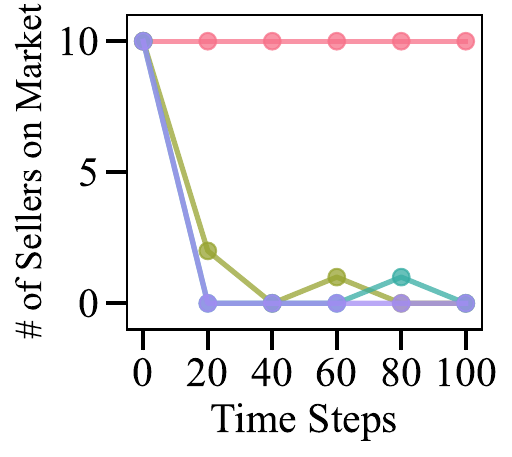}} 
        \label{fig:exp_pricing_buyer_medqa_ful_pythia410m_high_low}
    }
    \hspace{0.01\linewidth}
    \subfloat[Sellers' profits (Pythia-410m).]{
        \adjustbox{valign=t}{\includegraphics[width=0.215\textwidth]{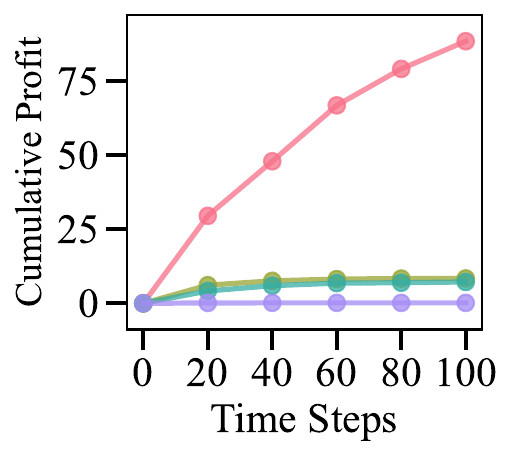}} 
        \label{fig:exp_pricing_seller_medqa_ful_pythia410m_profit}
    }
    \hspace{0.01\linewidth}
    \subfloat[Sellers' participation (Pythia-410m).]{
        \adjustbox{valign=t}{\includegraphics[width=0.215\textwidth]{graphs/Seller_medqa_pythia410m_2.pdf}} 
        \label{fig:exp_pricing_seller_medqa_ful_pythia410m_num}
    }

    \vspace{0.2cm}

    \subfloat[High-budget buyers' utilities (Llama-3.2-Inst.-1b).]{
        \adjustbox{valign=t}{\includegraphics[width=0.215\textwidth]{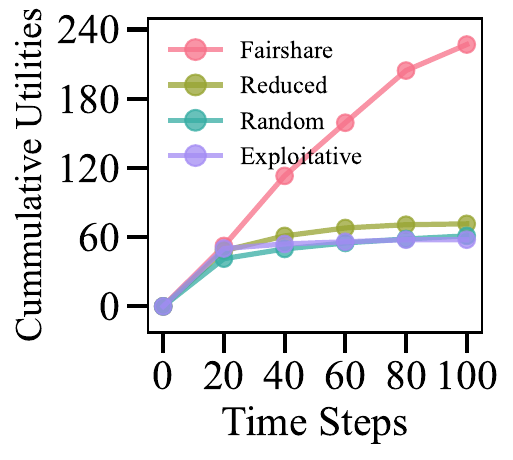}} 
        \label{fig:exp_pricing_buyer_medqa_ful_llama_high}
    }
    \hspace{0.01\linewidth}
    \subfloat[Low-budget buyers' utilities (Llama-3.2-Inst.-1b).]{
        \adjustbox{valign=t}{\includegraphics[width=0.215\textwidth]{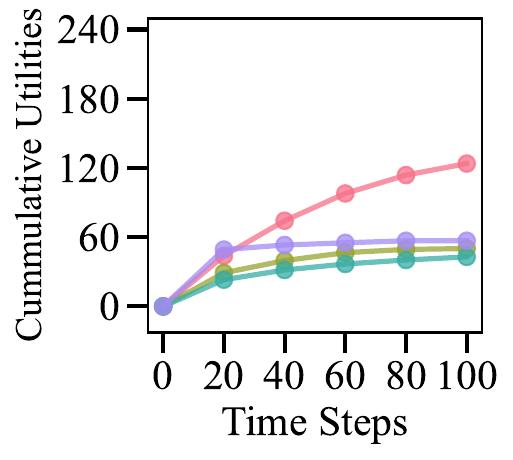}} 
        \label{fig:exp_pricing_buyer_medqa_ful_llama_low}
    }
    \hspace{0.01\linewidth}
    \subfloat[Sellers' profits (Llama-3.2-Inst.-1b).]{
        \adjustbox{valign=t}{\includegraphics[width=0.215\textwidth]{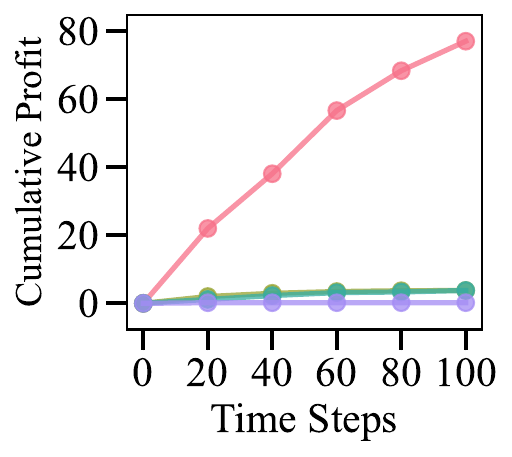}} 
        \label{fig:exp_pricing_seller_medqa_ful_llama_profit}
    }
    \hspace{0.01\linewidth}
    \subfloat[Sellers' participation (Llama-3.2-Inst.-1b).]{
        \adjustbox{valign=t}{\includegraphics[width=0.215\textwidth]{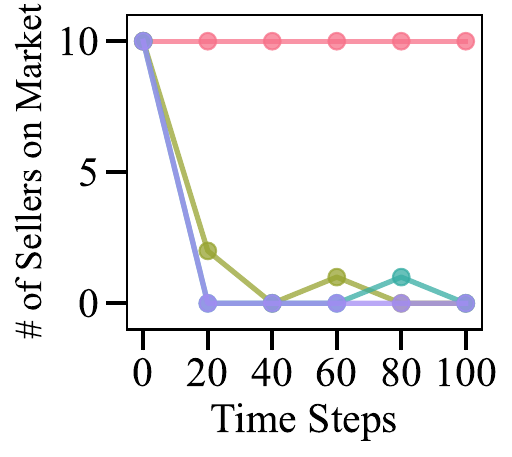}} 
        \label{fig:exp_pricing_seller_medqa_ful_llama_num}
    }
    
    \caption{Analysis of (1) buyer's cumulative utilities with high-budget buyer (\cref{fig:exp_pricing_buyer_medqa_full_pythia_high,fig:exp_pricing_buyer_medqa_ful_pythia410m_high,fig:exp_pricing_buyer_medqa_ful_llama_high}) and low-budget buyer (\cref{fig:exp_pricing_buyer_medqa_full_pythia_low,fig:exp_pricing_buyer_medqa_ful_pythia410m_high_low,fig:exp_pricing_buyer_medqa_ful_llama_low}), and (2) sellers' average cumulative profits (\cref{fig:exp_pricing_seller_medqa_ful_pythia_profit,fig:exp_pricing_seller_medqa_ful_pythia410m_profit,fig:exp_pricing_seller_medqa_ful_llama_profit}) and number of sellers in the market (\cref{fig:exp_pricing_seller_medqa_ful_pythia_num,fig:exp_pricing_seller_medqa_ful_pythia410m_num,fig:exp_pricing_seller_medqa_ful_llama_num}) over time ($T = 100$). Model: Pythia-1b, Pythia-410m, and Llama-3.2-Inst.-1b; Task: medqaQA. Experimental groups: (1) fairshare, (2) reduced, (3) random, and (4) current pricing. 
    }
  \label{exp:pricing_medqa_full}
\end{figure}

\clearpage
\begin{figure}[t]
\centering 
    \subfloat[High-budget buyers' utilities (Pythia-1b).]{
        \adjustbox{valign=t}{\includegraphics[width=0.215\textwidth]{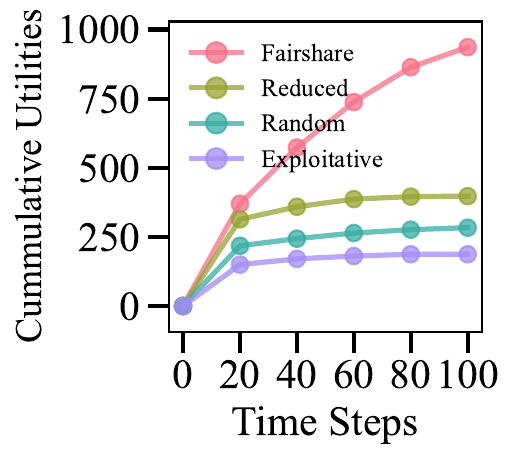}} 
        \label{fig:exp_pricing_buyer_math_pythia_high}
    }
    \hspace{0.01\linewidth}
    \subfloat[Low-budget buyers' utilities (Pythia-1b).]{
        \adjustbox{valign=t}{\includegraphics[width=0.215\textwidth]{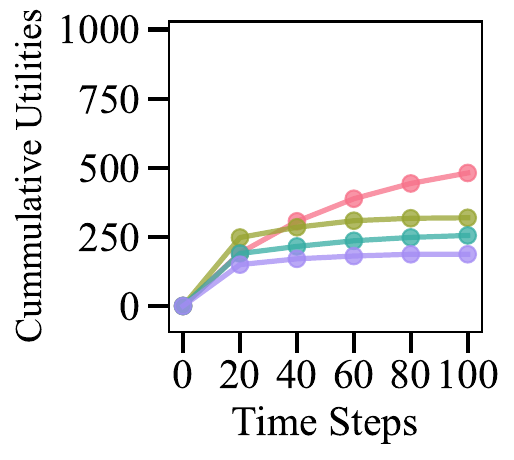}} 
        \label{fig:exp_pricing_buyer_math_pythia_low}
    }
    \hspace{0.01\linewidth}
    \subfloat[Sellers' profits (Pythia-1b).]{
        \adjustbox{valign=t}{\includegraphics[width=0.215\textwidth]{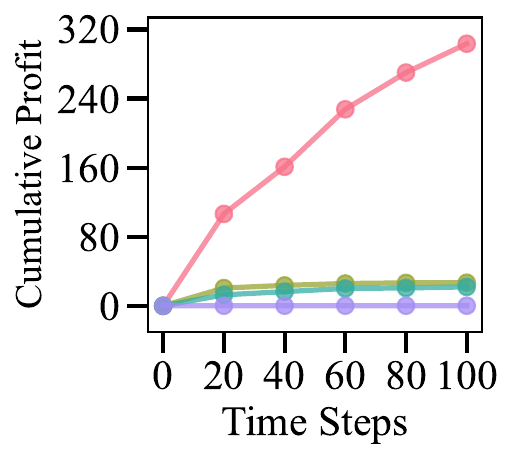}} 
        \label{fig:exp_pricing_seller_math_pythia_profit}
    }
    \hspace{0.01\linewidth}
    \subfloat[Sellers' participation (Pythia-1b).]{
        \adjustbox{valign=t}{\includegraphics[width=0.215\textwidth]{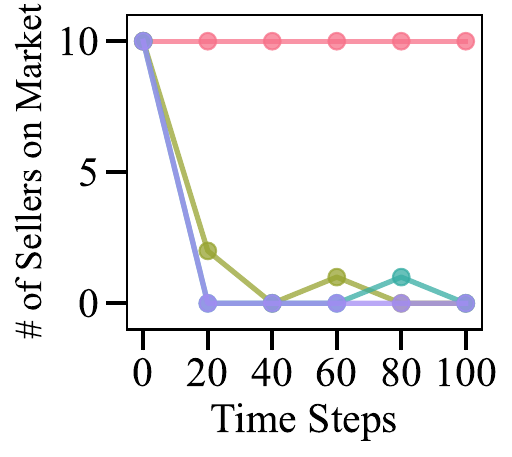}} 
        \label{fig:exp_pricing_seller_math_pythia_num}
    }

    \vspace{0.2cm}

    \subfloat[High-budget buyers' utilities (Pythia-410m).]{
        \adjustbox{valign=t}{\includegraphics[width=0.215\textwidth]{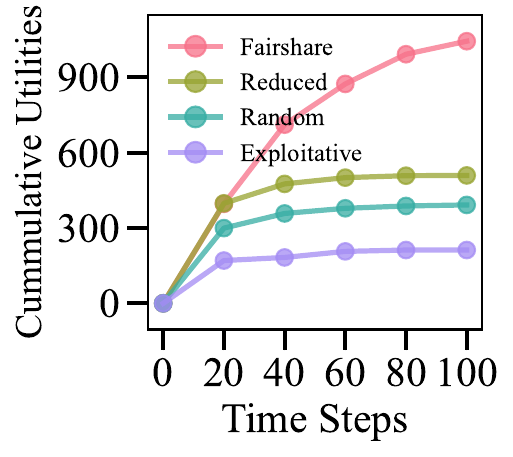}} 
        \label{fig:exp_pricing_buyer_math_pythia410m_high}
    }
    \hspace{0.01\linewidth}
    \subfloat[Low-budget buyers' utilities (Pythia-410m).]{
        \adjustbox{valign=t}{\includegraphics[width=0.215\textwidth]{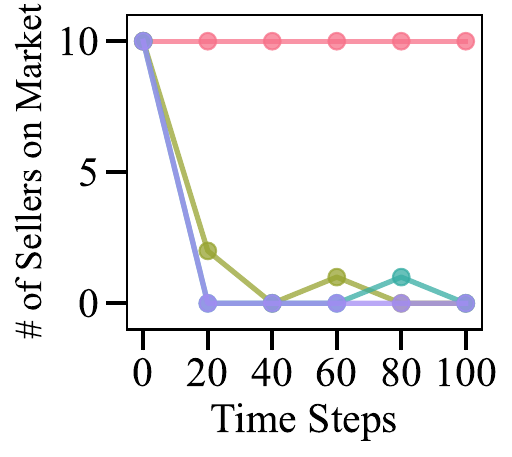}} 
        \label{fig:exp_pricing_buyer_math_pythia410m_high_low}
    }
    \hspace{0.01\linewidth}
    \subfloat[Sellers' profits (Pythia-410m).]{
        \adjustbox{valign=t}{\includegraphics[width=0.215\textwidth]{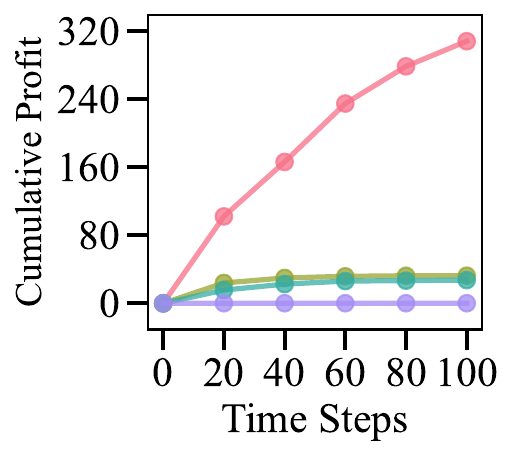}} 
        \label{fig:exp_pricing_seller_math_pythia410m_profit}
    }
    \hspace{0.01\linewidth}
    \subfloat[Sellers' participation (Pythia-410m).]{
        \adjustbox{valign=t}{\includegraphics[width=0.215\textwidth]{graphs/Seller_math_pythia410m_2.pdf}} 
        \label{fig:exp_pricing_seller_math_pythia410m_num}
    }

    \vspace{0.2cm}

    \subfloat[High-budget buyers' utilities (Llama-3.2-Inst.-1b).]{
        \adjustbox{valign=t}{\includegraphics[width=0.215\textwidth]{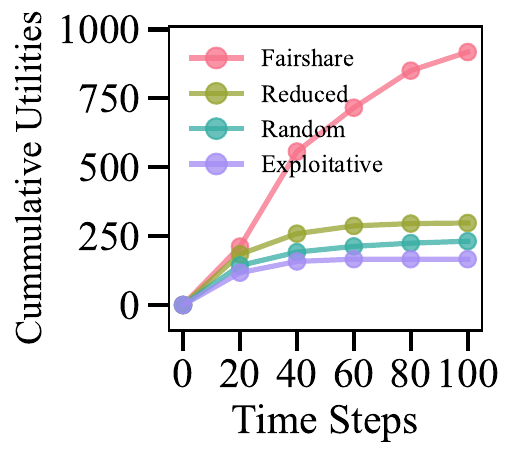}} 
        \label{fig:exp_pricing_buyer_math_llama_high}
    }
    \hspace{0.01\linewidth}
    \subfloat[Low-budget buyers' utilities (Llama-3.2-Inst.-1b).]{
        \adjustbox{valign=t}{\includegraphics[width=0.215\textwidth]{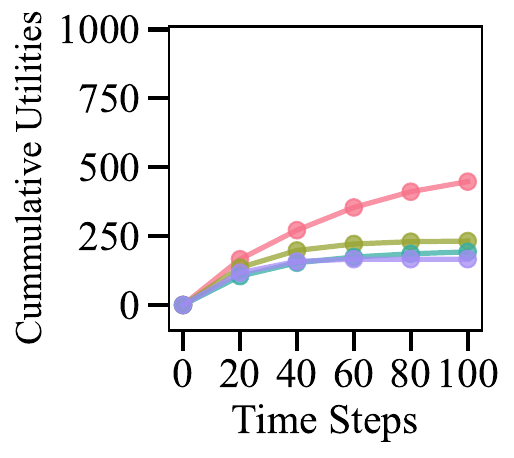}} 
        \label{fig:exp_pricing_buyer_math_llama_low}
    }
    \hspace{0.01\linewidth}
    \subfloat[Sellers' profits (Llama-3.2-Inst.-1b).]{
        \adjustbox{valign=t}{\includegraphics[width=0.215\textwidth]{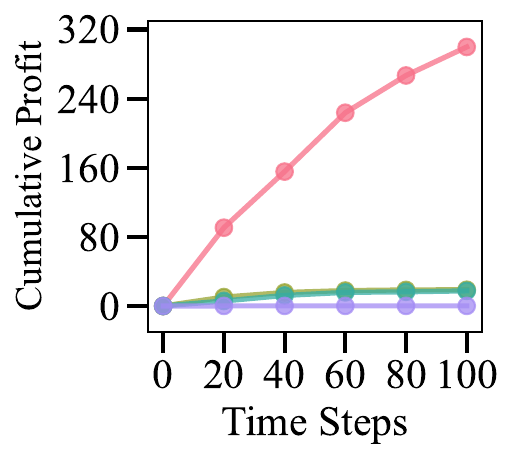}} 
        \label{fig:exp_pricing_seller_math_llama_profit}
    }
    \hspace{0.01\linewidth}
    \subfloat[Sellers' participation (Llama-3.2-Inst.-1b).]{
        \adjustbox{valign=t}{\includegraphics[width=0.215\textwidth]{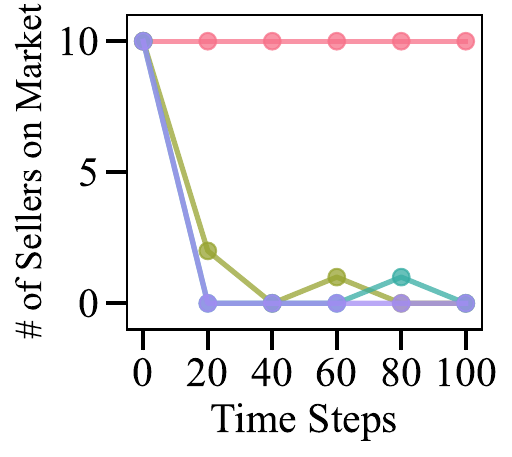}} 
        \label{fig:exp_pricing_seller_math_llama_num}
    }
    
    \caption{Analysis of (1) buyer's cumulative utilities with high-budget buyer (\cref{fig:exp_pricing_buyer_math_pythia_high,fig:exp_pricing_buyer_math_pythia410m_high,fig:exp_pricing_buyer_math_llama_high}) and low-budget buyer (\cref{fig:exp_pricing_buyer_math_pythia_low,fig:exp_pricing_buyer_math_pythia410m_high_low,fig:exp_pricing_buyer_math_llama_low}), and (2) sellers' average cumulative profits (\cref{fig:exp_pricing_seller_math_pythia_profit,fig:exp_pricing_seller_math_pythia410m_profit,fig:exp_pricing_seller_math_llama_profit}) and number of sellers in the market (\cref{fig:exp_pricing_seller_math_pythia_num,fig:exp_pricing_seller_math_pythia410m_num,fig:exp_pricing_seller_math_llama_num}) over time ($T = 100$). Model: Pythia-1b, Pythia-410m, and Llama-3.2-Inst.-1b; Task: MathQA. Experimental groups: (1) fairshare, (2) reduced, (3) random, and (4) current pricing. 
    }
  \label{exp:pricing_math}
\end{figure}

\clearpage
\begin{figure}[t]
\centering 
    \subfloat[High-budget buyers' utilities (Pythia-1b).]{
        \adjustbox{valign=t}{\includegraphics[width=0.215\textwidth]{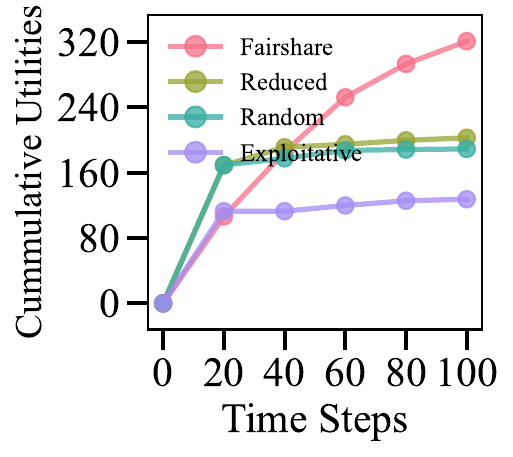}} 
        \label{fig:exp_pricing_buyer_piqa_pythia_high}
    }
    \hspace{0.01\linewidth}
    \subfloat[Low-budget buyers' utilities (Pythia-1b).]{
        \adjustbox{valign=t}{\includegraphics[width=0.215\textwidth]{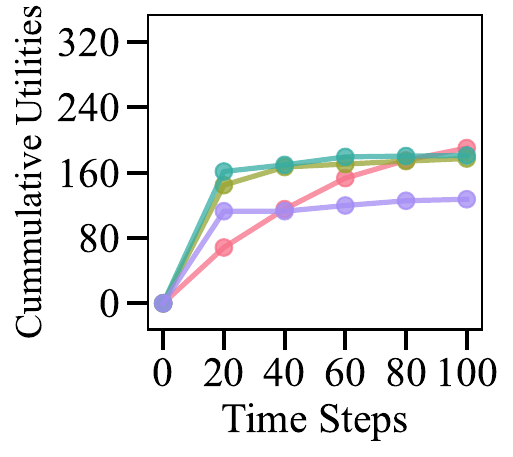}} 
        \label{fig:exp_pricing_buyer_piqa_pythia_low}
    }
    \hspace{0.01\linewidth}
    \subfloat[Sellers' profits (Pythia-1b).]{
        \adjustbox{valign=t}{\includegraphics[width=0.215\textwidth]{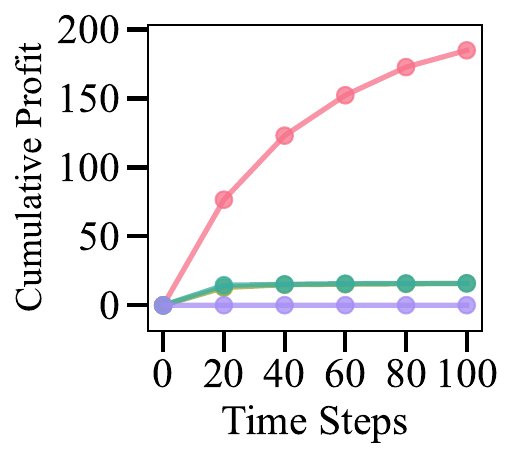}} 
        \label{fig:exp_pricing_seller_piqa_pythia_profit}
    }
    \hspace{0.01\linewidth}
    \subfloat[Sellers' participation (Pythia-1b).]{
        \adjustbox{valign=t}{\includegraphics[width=0.215\textwidth]{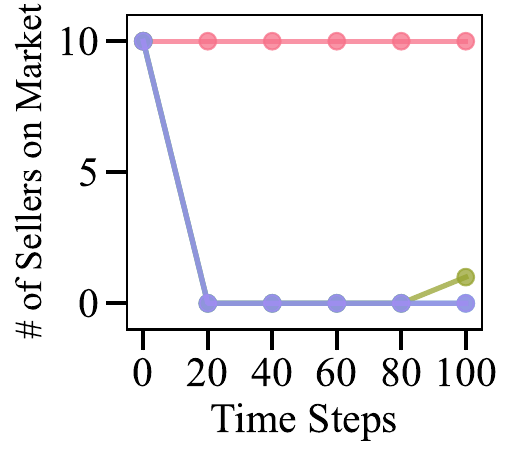}} 
        \label{fig:exp_pricing_seller_piqa_pythia_num}
    }

    \vspace{0.2cm}

    \subfloat[High-budget buyers' utilities (Pythia-410m).]{
        \adjustbox{valign=t}{\includegraphics[width=0.215\textwidth]{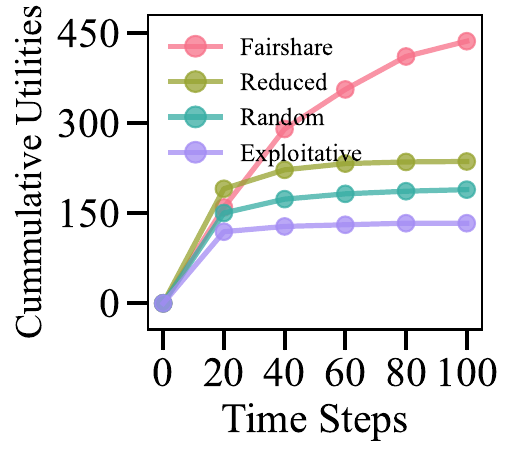}} 
        \label{fig:exp_pricing_buyer_piqa_pythia410m_high}
    }
    \hspace{0.01\linewidth}
    \subfloat[Low-budget buyers' utilities (Pythia-410m).]{
        \adjustbox{valign=t}{\includegraphics[width=0.215\textwidth]{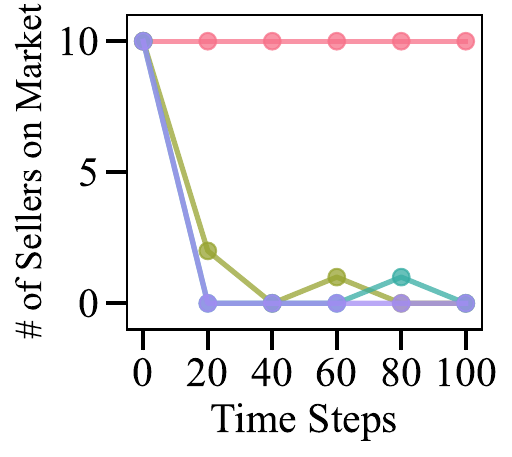}} 
        \label{fig:exp_pricing_buyer_piqa_pythia410m_high_low}
    }
    \hspace{0.01\linewidth}
    \subfloat[Sellers' profits (Pythia-410m).]{
        \adjustbox{valign=t}{\includegraphics[width=0.215\textwidth]{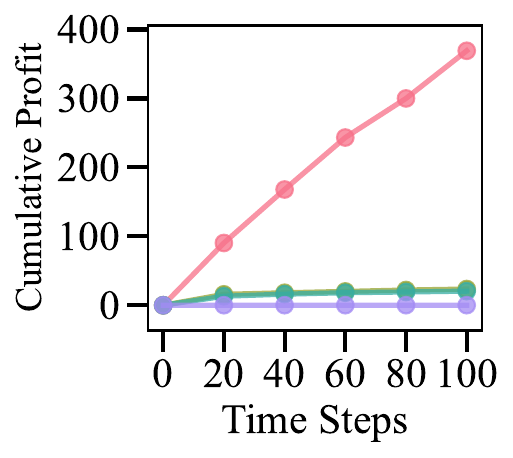}} 
        \label{fig:exp_pricing_seller_piqa_pythia410m_profit}
    }
    \hspace{0.01\linewidth}
    \subfloat[Sellers' participation (Pythia-410m).]{
        \adjustbox{valign=t}{\includegraphics[width=0.215\textwidth]{graphs/Seller_piqa_pythia410m_2.pdf}} 
        \label{fig:exp_pricing_seller_piqa_pythia410m_num}
    }

    \vspace{0.2cm}

    \subfloat[High-budget buyers' utilities (Llama-3.2-Inst.-1b).]{
        \adjustbox{valign=t}{\includegraphics[width=0.215\textwidth]{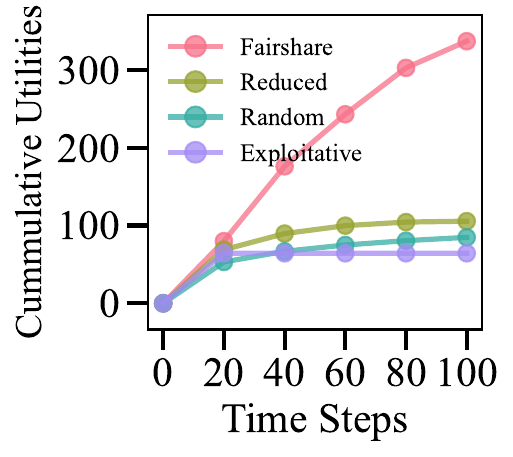}} 
        \label{fig:exp_pricing_buyer_piqa_llama_high}
    }
    \hspace{0.01\linewidth}
    \subfloat[Low-budget buyers' utilities (Llama-3.2-Inst.-1b).]{
        \adjustbox{valign=t}{\includegraphics[width=0.215\textwidth]{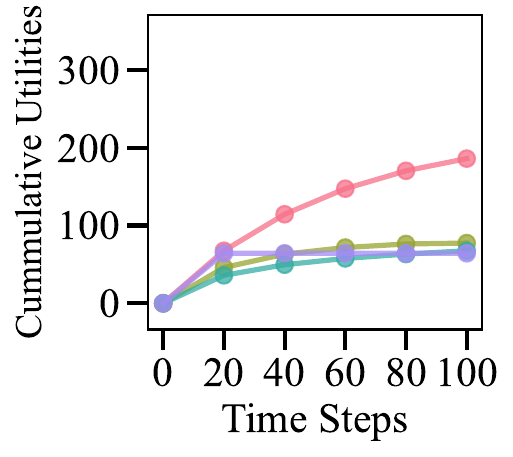}} 
        \label{fig:exp_pricing_buyer_piqa_llama_low}
    }
    \hspace{0.01\linewidth}
    \subfloat[Sellers' profits (Llama-3.2-Inst.-1b).]{
        \adjustbox{valign=t}{\includegraphics[width=0.215\textwidth]{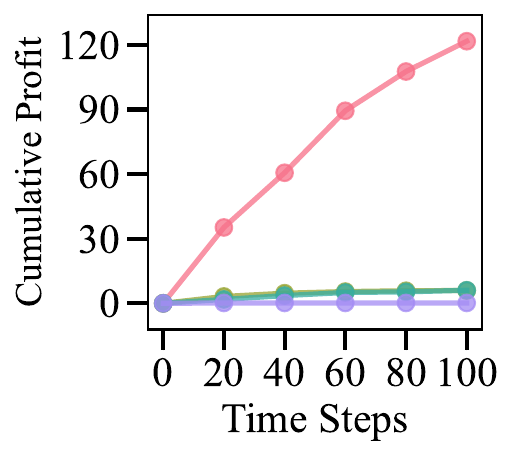}} 
        \label{fig:exp_pricing_seller_piqa_llama_profit}
    }
    \hspace{0.01\linewidth}
    \subfloat[Sellers' participation (Llama-3.2-Inst.-1b).]{
        \adjustbox{valign=t}{\includegraphics[width=0.215\textwidth]{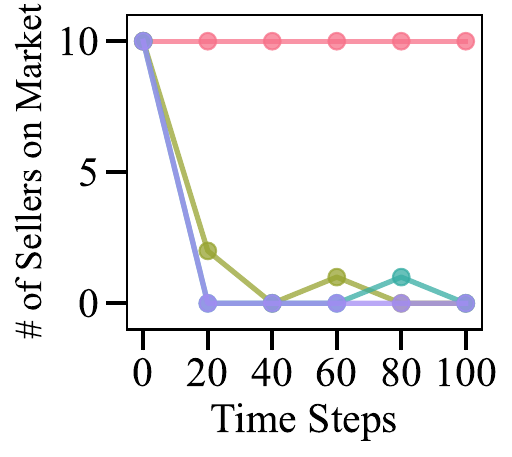}} 
        \label{fig:exp_pricing_seller_piqa_llama_num}
    }
    
    \caption{Analysis of (1) buyer's cumulative utilities with high-budget buyer (\cref{fig:exp_pricing_buyer_piqa_pythia_high,fig:exp_pricing_buyer_piqa_pythia410m_high,fig:exp_pricing_buyer_piqa_llama_high}) and low-budget buyer (\cref{fig:exp_pricing_buyer_piqa_pythia_low,fig:exp_pricing_buyer_piqa_pythia410m_high_low,fig:exp_pricing_buyer_piqa_llama_low}), and (2) sellers' average cumulative profits (\cref{fig:exp_pricing_seller_piqa_pythia_profit,fig:exp_pricing_seller_piqa_pythia410m_profit,fig:exp_pricing_seller_piqa_llama_profit}) and number of sellers in the market (\cref{fig:exp_pricing_seller_piqa_pythia_num,fig:exp_pricing_seller_piqa_pythia410m_num,fig:exp_pricing_seller_piqa_llama_num}) over time ($T = 100$). Model: Pythia-1b, Pythia-410m, and Llama-3.2-Inst.-1b; Task: PIQA. Experimental groups: (1) fairshare, (2) reduced, (3) random, and (4) current pricing. 
    }
  \label{exp:pricing_piqa}
\end{figure}

\clearpage
\captionsetup[subfloat]{position=bottom}
\begin{figure*}[th!]
\centering 
    \subfloat[High-budget buyer (MedAQ, Pythia-1b).]{
        \adjustbox{valign=t}{\includegraphics[width=0.30\textwidth]{graphs/Analysis_mechansim_medqa_pythia_purchase_1.pdf}} 
        \label{fig:Analysis_mechansim_medqa_pythia_purchase_1}
    }
    \hspace{0.01\linewidth}
    \subfloat[High-budget buyer (MedAQ, Pythia-410m).]{
        \adjustbox{valign=t}{\includegraphics[width=0.30\textwidth]{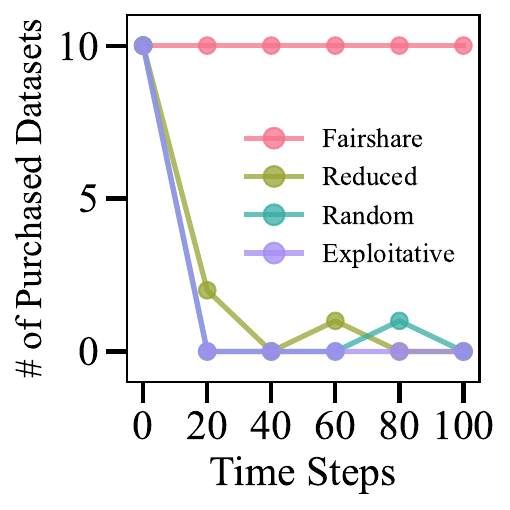}} 
        \label{fig:Analysis_mechansim_medqa_pythia410m_purchase_1}
    }
    \hspace{0.01\linewidth}
    \subfloat[High-budget buyer (MedAQ, Llama-3.2-Inst.-1b).]{
        \adjustbox{valign=t}{\includegraphics[width=0.30\textwidth]{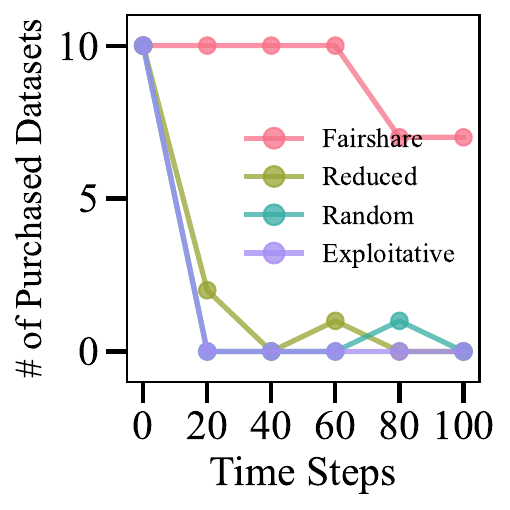}} 
        \label{fig:Analysis_mechansim_medqa_llama_purchase_1}
    }
    
    \vspace{0.2cm}
    
    \subfloat[High-budget buyer (MathAQ, Pythia-1b).]{
        \adjustbox{valign=t}{\includegraphics[width=0.30\textwidth]{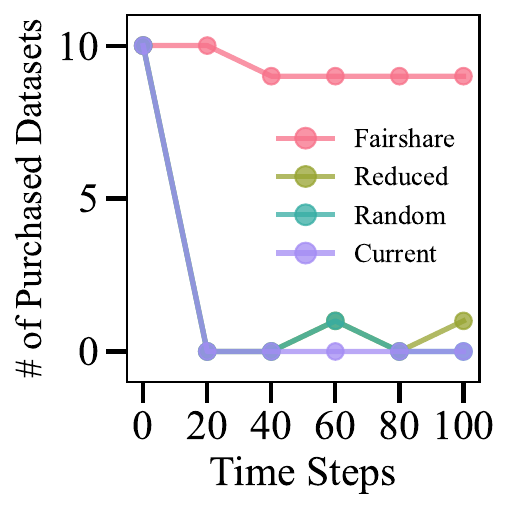}} 
        \label{fig:Analysis_mechansim_math_pythia_purchase_1}
    }
    \hspace{0.01\linewidth}
    \subfloat[High-budget buyer (MathAQ, Pythia-410m).]{
        \adjustbox{valign=t}{\includegraphics[width=0.30\textwidth]{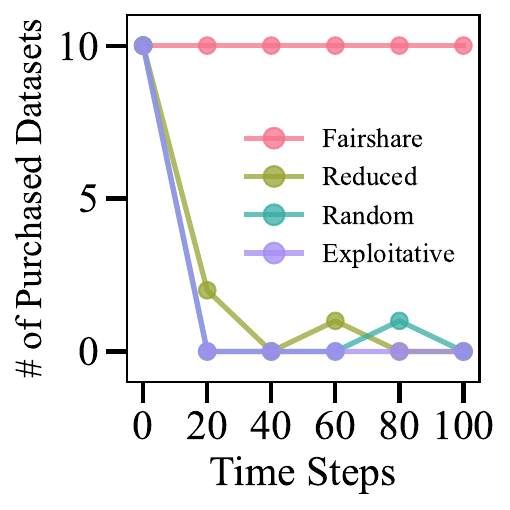}} 
        \label{fig:Analysis_mechansim_math_pythia410m_purchase_1}
    }
    \hspace{0.01\linewidth}
    \subfloat[High-budget buyer (MathAQ, Llama-3.2-Inst.-1b).]{
        \adjustbox{valign=t}{\includegraphics[width=0.30\textwidth]{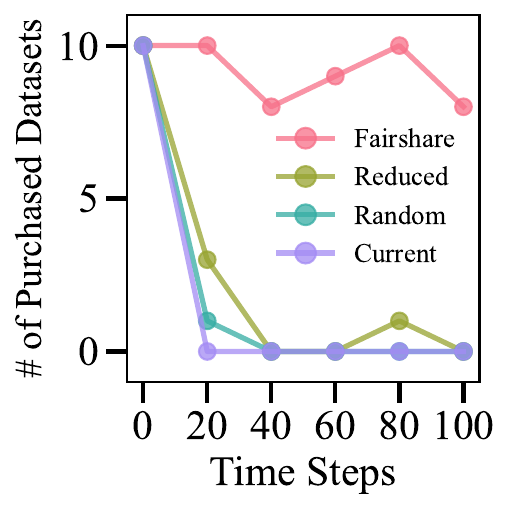}} 
        \label{fig:Analysis_mechansim_math_llama_purchase_1}
    }
    
    \vspace{0.2cm}

    \subfloat[High-budget buyer (PiQA, Pythia-1b).]{
        \adjustbox{valign=t}{\includegraphics[width=0.30\textwidth]{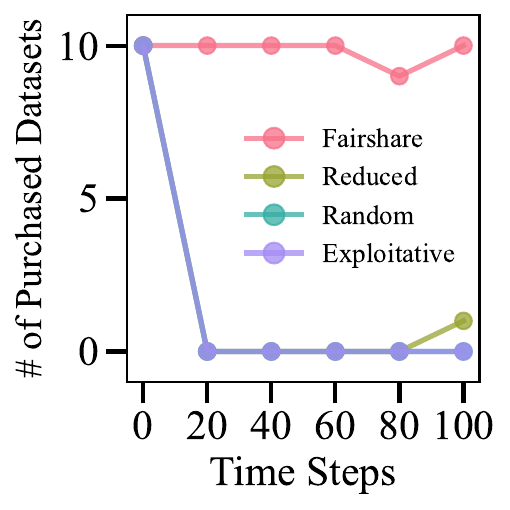}} 
        \label{fig:Analysis_mechansim_piqa_pythia_purchase_1}
    }
    \hspace{0.01\linewidth}
    \subfloat[High-budget buyer (PiQA, Pythia-410m).]{
        \adjustbox{valign=t}{\includegraphics[width=0.30\textwidth]{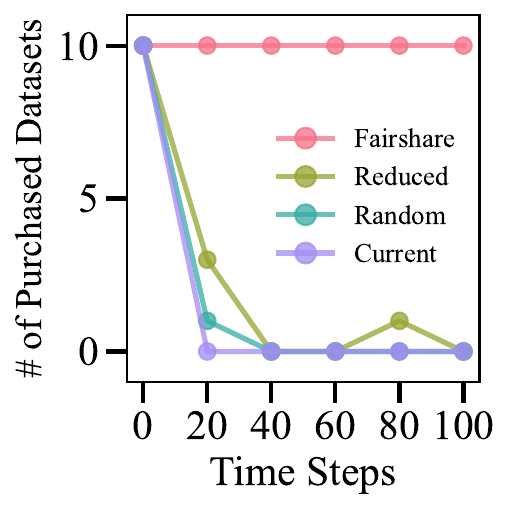}} 
        \label{fig:Analysis_mechansim_piqa_pythia410m_purchase_1}
    }
    \hspace{0.01\linewidth}
    \subfloat[High-budget buyer (PiQA, Llama-3.2-Inst.-1b).]{
        \adjustbox{valign=t}{\includegraphics[width=0.30\textwidth]{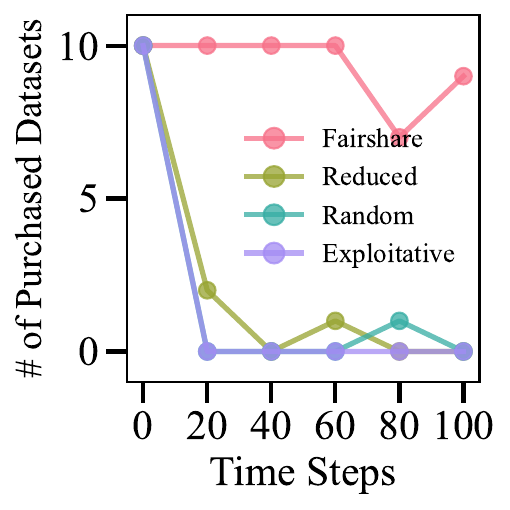}} 
        \label{fig:Analysis_mechansim_piqa_llama_purchase_1}
    }
    \vspace{0.2cm}

    \caption{Number of purchased datasets for the buyer with high budget over time periods ($T = 100$). Model: Pythia-1b, Pythia-410m, and Llama-3.2-Inst.-1b; Task: MedQA, MathQA, and PiQA. Experimental groups: (1) fairshare, (2) reduced, (3) random, and (4) current pricing. 
    }
  \label{exp:Analysis_mechansim_high}
\end{figure*}

\clearpage
\captionsetup[subfloat]{position=bottom}
\begin{figure*}[th!]
\centering 
    \subfloat[Low-budget buyer (MedAQ, Pythia-1b).]{
        \adjustbox{valign=t}{\includegraphics[width=0.30\textwidth]{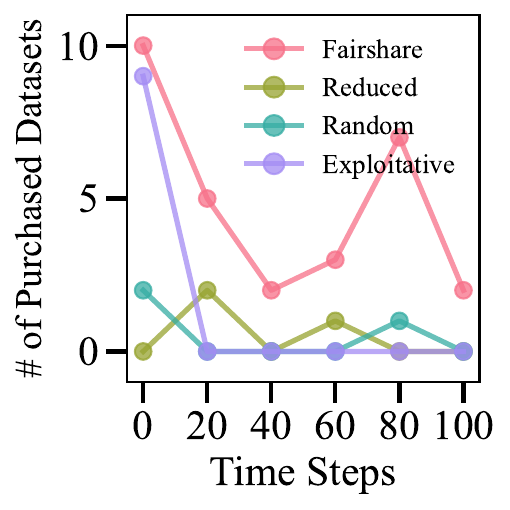}} 
        \label{fig:Analysis_mechansim_medqa_pythia_purchase_2}
    }
    \hspace{0.01\linewidth}
    \subfloat[Low-budget buyer (MedAQ, Pythia-1b).]{
        \adjustbox{valign=t}{\includegraphics[width=0.30\textwidth]{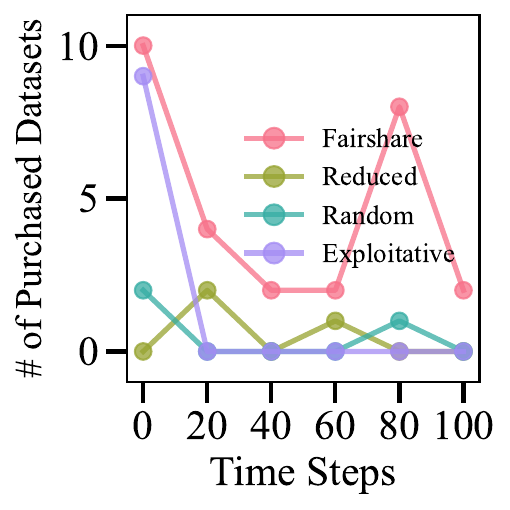}} 
        \label{fig:Analysis_mechansim_medqa_pythia410m_purchase_2}
    }
    \hspace{0.01\linewidth}
    \subfloat[Low-budget buyer (MedAQ, Llama-3.2-Inst.-1b).]{
        \adjustbox{valign=t}{\includegraphics[width=0.30\textwidth]{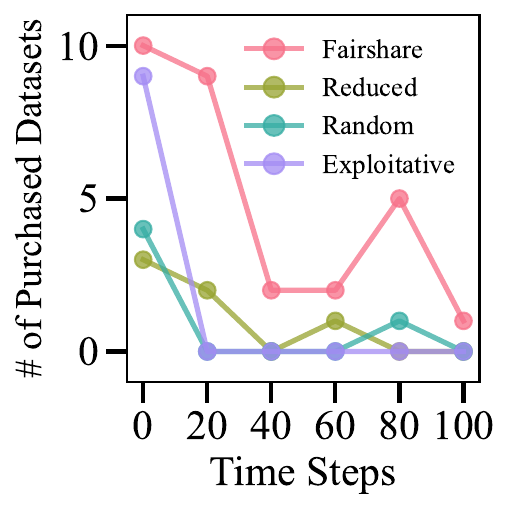}} 
        \label{fig:Analysis_mechansim_medqa_llama_purchase_2}
    }
    
    \vspace{0.2cm}
    
    \subfloat[Low-budget buyer (MathQA, Pythia-1b).]{
        \adjustbox{valign=t}{\includegraphics[width=0.30\textwidth]{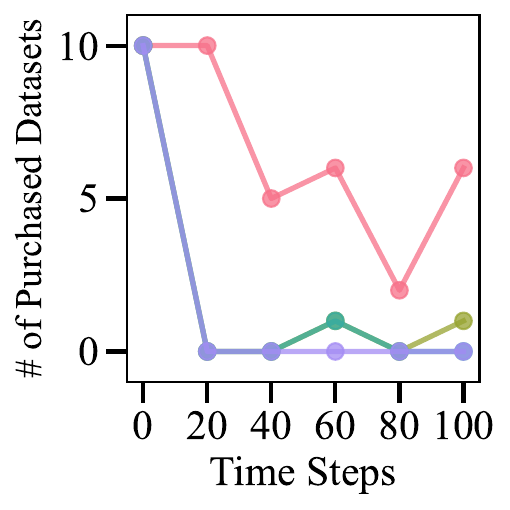}} 
        \label{fig:Analysis_mechansim_math_pythia_purchase_2}
    }
    \hspace{0.01\linewidth}
    \subfloat[Low-budget buyer (MathQA, Pythia-410m).]{
        \adjustbox{valign=t}{\includegraphics[width=0.30\textwidth]{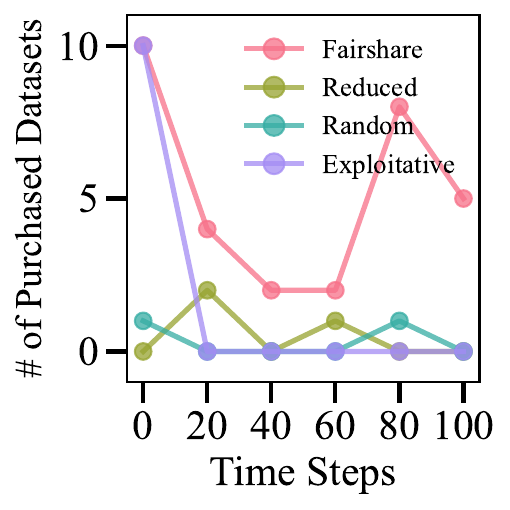}} 
        \label{fig:Analysis_mechansim_math_pythia410m_purchase_2}
    }
    \hspace{0.01\linewidth}
    \subfloat[Low-budget buyer (MathQA, Llama-3.2-Inst.-1b).]{
        \adjustbox{valign=t}{\includegraphics[width=0.30\textwidth]{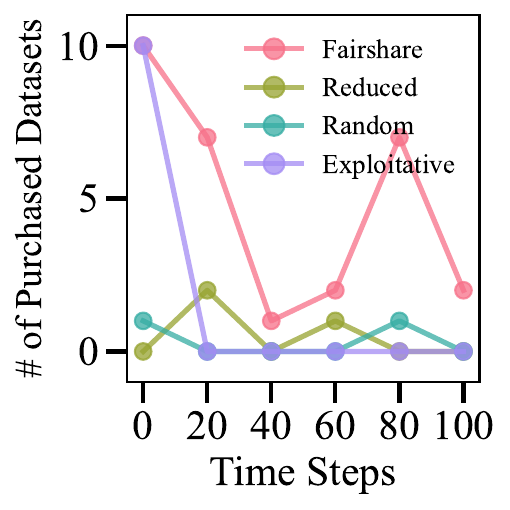}} 
        \label{fig:Analysis_mechansim_math_llama_purchase_2}
    }
    
    \vspace{0.2cm}

    \subfloat[Low-budget buyer (PiQA, Pythia-1b).]{
        \adjustbox{valign=t}{\includegraphics[width=0.30\textwidth]{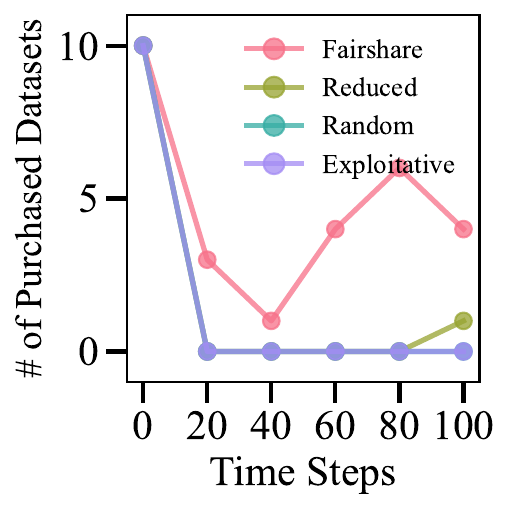}} 
        \label{fig:Analysis_mechansim_piqa_pythia_purchase_2}
    }
    \hspace{0.01\linewidth}
    \subfloat[Low-budget buyer (PiQA, Pythia-410m).]{
        \adjustbox{valign=t}{\includegraphics[width=0.30\textwidth]{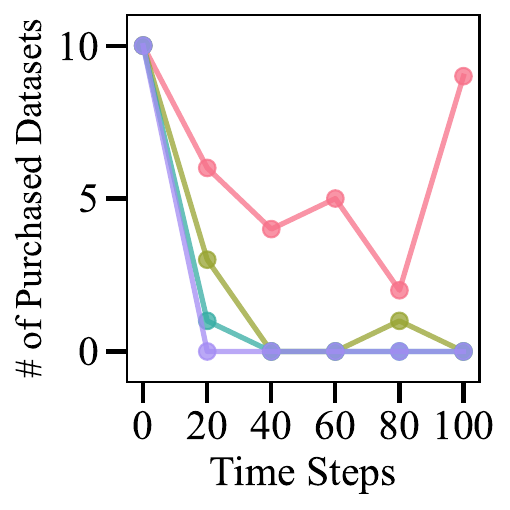}} 
        \label{fig:Analysis_mechansim_piqa_pythia410m_purchase_2}
    }
    \hspace{0.01\linewidth}
    \subfloat[Low-budget buyer (PiQA, Llama-3.2-Inst.-1b).]{
        \adjustbox{valign=t}{\includegraphics[width=0.30\textwidth]{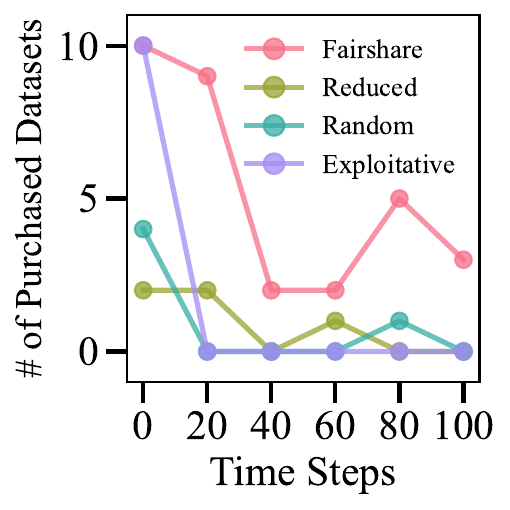}} 
        \label{fig:Analysis_mechansim_piqa_llama_purchase_2}
    }
    \vspace{0.2cm}

    \caption{Number of purchased datasets for the buyer with low budget over time periods ($T = 100$). Model: Pythia-1b, Pythia-410m, and Llama-3.2-Inst.-1b; Task: MedQA, MathQA, and PIQA. Experimental groups: (1) fairshare, (2) reduced, (3) random, and (4) current pricing. 
    }
  \label{exp:Analysis_mechansim_low}
\end{figure*}

\clearpage
\subsection{Datasets}\label{apdx:datasets}
\begin{table*}[h]
    \scriptsize
    \begin{center}
    \begin{tabular}{p{0.1\linewidth}|p{0.15\linewidth}|p{0.65\linewidth}}
    \toprule
    \textbf{Dataset} & \textbf{\# of Train/Valid/Test} & \textbf{Example} \\
    \midrule
    MathQA
    & 29837/4475/2985 & \textbf{Question:}  A train running at the speed of 48 km / hr crosses a pole in 9 seconds . what is the length of the train? a ) 140 , b ) 130 , c ) 120 , d ) 170 , e ) 160 \\
    & & \textbf{Answer:} C \\
    \midrule
    GSM8K
    & 7473/1319 & \textbf{Question:} Natalia sold clips to 48 of her friends in April, and then she sold half as many clips in May. How many clips did Natalia sell altogether in April and May? \\
    & & \textbf{Answer:} 72 \\
    \midrule
    MedQA
    & 10178/1272/1273 & \textbf{Question:} A 27-year-old man presents to the emergency room with persistent fever, nausea, and vomiting for the past 3 days. While waiting to be seen, he quickly becomes disoriented and agitated. Upon examination, he has visible signs of difficulty breathing with copious oral secretions and generalized muscle twitching. The patient’s temperature is 104°F (40°C), blood pressure is 90/64 mmHg, pulse is 88/min, and respirations are 18/min with an oxygen saturation of 90\% on room air. When the nurse tries to place a nasal cannula, the patient becomes fearful and combative. The patient is sedated and placed on mechanical ventilation. Which of the following is a risk factor for the patient’s most likely diagnosis? a) Contaminated beef b) Epiglottic cyst c) Mosquito bite d) Spelunking \\
    & & \textbf{Answer:} D \\
    \midrule
    PIQA 
    & 16000/2000 &\textbf{Question:} How do I ready a guinea pig cage for it's new occupants? a) Provide the guinea pig with a cage full of a few inches of bedding made of ripped paper strips, you will also need to supply it with a water bottle and a food dish. b) Provide the guinea pig with a cage full of a few inches of bedding made of ripped jeans material, you will also need to supply it with a water bottle and a food dish. \\
    & & \textbf{Answer:} A \\
    \bottomrule
    \end{tabular}
    \end{center}
    \caption{Dataset splits and demonstrations from the MathQA, GSM8K, MedQA, and PIQA datasets}
    \label{tab:math-task-examples}
\end{table*}

\begin{table*}[ht]
    \scriptsize
    \begin{tabular}{p{0.1\linewidth}|p{0.85\linewidth}}
    \toprule
    \textbf{Dataset} & \textbf{Prompts} \\
    \midrule
    MathQA 
    & \textbf{Question:} the banker ' s gain of a certain sum due 3 years hence at 10 \% per annum is rs . 36 . what is the present worth ? a ) rs . 400 , b ) rs . 300 , c ) rs . 500 , d ) rs . 350 , e ) none of these \\
    & \textbf{Answer:} A \\
    & \textbf{Question:} average age of students of an adult school is 40 years . 120 new students whose average age is 32 years joined the school . as a result the average age is decreased by 4 years . find the number of students of the school after joining of the new students . a ) 1200 , b ) 120 , c ) 360 , d ) 240 , e ) none of these \\
    & \textbf{Answer:} D\\
    & \textbf{Question:} sophia finished 2 / 3 of a book . she calculated that she finished 90 more pages than she has yet to read . how long is her book ? a ) 229 , b ) 270 , c ) 877 , d ) 266 , e ) 281\\
    & \textbf{Answer:} B \\
    & \textbf{Question:} 120 is what percent of 50 ? na ) 5 \% , b ) 240 \% , c ) 50\% , d ) 2 \% , e ) 500\\
    & \textbf{Answer: B} \\
    & \textbf{Question:} there are 10 girls and 20 boys in a classroom . what is the ratio of girls to boys ? a ) 1 / 2 , b ) 1 / 3 , c ) 1 / 5 , d ) 10 / 30 , e ) 2 / 5 \\
    & \textbf{Answer:} A \\
    \midrule
    MedQA & \textbf{Question:} A mother brings her 3-week-old infant to the pediatrician's office because she is concerned about his feeding habits. He was born without complications and has not had any medical problems up until this time. However, for the past 4 days, he has been fussy, is regurgitating all of his feeds, and his vomit is yellow in color. On physical exam, the child's abdomen is minimally distended but no other abnormalities are appreciated. Which of the following embryologic errors could account for this presentation? a) Abnormal migration of ventral pancreatic bud b) Complete failure of proximal duodenum to recanalize c) Abnormal hypertrophy of the pylorus d) Failure of lateral body folds to move ventrally and fuse in the midline \\
    & \textbf{Answer:} A \\
    & \textbf{Question:} A 53-year-old man comes to the emergency department because of severe right-sided flank pain for 3 hours. The pain is colicky, radiates towards his right groin, and he describes it as 8/10 in intensity. He has vomited once. He has no history of similar episodes in the past. Last year, he was treated with naproxen for swelling and pain of his right toe. He has a history of hypertension. He drinks one to two beers on the weekends. Current medications include amlodipine. He appears uncomfortable. His temperature is 37.1\u00b0C (99.3\u00b0F), pulse is 101/min, and blood pressure is 130/90 mm Hg. Examination shows a soft, nontender abdomen and right costovertebral angle tenderness. An upright x-ray of the abdomen shows no abnormalities. A CT scan of the abdomen and pelvis shows a 7-mm stone in the proximal ureter and grade I hydronephrosis on the right. Which of the following is most likely to be seen on urinalysis? a) Urinary pH: 7.3 b) Urinary pH: 4.7 c) Positive nitrites test d) Largely positive urinary protein \\
    & \textbf{Answer:} B \\
    & \textbf{Question:} A 48-year-old woman comes to the emergency department because of a photosensitive blistering rash on her hands, forearms, and face for 3 weeks. The lesions are not itchy. She has also noticed that her urine has been dark brown in color recently. Twenty years ago, she was successfully treated for Coats disease of the retina via retinal sclerotherapy. She is currently on hormonal replacement therapy for perimenopausal symptoms. Her aunt and sister have a history of a similar skin lesions. Examination shows multiple fluid-filled blisters and oozing erosions on the forearms, dorsal side of both hands, and forehead. There is hyperpigmented scarring and patches of bald skin along the sides of the blisters. Laboratory studies show a normal serum ferritin concentration. Which of the following is the most appropriate next step in management to induce remission in this patient? a) Pursue liver transplantation b) Begin oral thalidomide therapy c) Begin phlebotomy therapy d) Begin oral hydroxychloroquine therapy  \\
    & \textbf{Answer:} C\\
    & \textbf{Question:} A 23-year-old pregnant woman at 22 weeks gestation presents with burning upon urination. She states it started 1 day ago and has been worsening despite drinking more water and taking cranberry extract. She otherwise feels well and is followed by a doctor for her pregnancy. Her temperature is 97.7\u00b0F (36.5\u00b0C), blood pressure is 122/77 mmHg, pulse is 80/min, respirations are 19/min, and oxygen saturation is 98\% on room air. Physical exam is notable for an absence of costovertebral angle tenderness and a gravid uterus. Which of the following is the best treatment for this patient? a) Ampicillin b) Ceftriaxone c) Doxycycline d) Nitrofurantoin \\
    & \textbf{Answer:} D \\
    \bottomrule
    \end{tabular}
    \caption{Demonstrations included for 5-shot evaluation on the MathQA dataset and for 4-shot evaluation on the MedQA dataset. Demostrations were randomly selected from their respective dataset's training sets.}
    \label{tab:prompt-demostrations}
\end{table*}

\clearpage
\section{Limitations and Impact}\label{appendix:limitations_impact}
This paper addresses the critical issue of fairshare pricing in the data market for large language models (LLMs) by proposing a framework and methodologies for fair compensation of datasets from LLM developers to data annotators. Our work directly tackles the ethical and societal challenges in the current data market, where many data annotators are underpaid and receive compensation significantly disconnected from the true economic value their contributions bring to LLMs.

\subsection{Limitations}
Since our work proposes a novel fairshare framework, there are several lines of future research that can investigate future adjustments to this framework, which lie beyond the scope of our paper. For instance, a large-scale simulation of this market with a wider range of datasets and models is one possibility. In addition, running the simulation with human buyers/sellers is another avenue. Finally, there are several other diverse market dynamics (e.g., incomplete information between buyers/sellers) that can be explored with our proposed framework.

\subsection{Impact Statement}
From ethical and societal perspectives, our framework prioritizes the welfare of both data annotators and LLM developers. Our methodology ensures that data annotators are fairly compensated for their labor, promoting equity and fairness in the data ecosystem. This contributes to mitigating the exploitation of vulnerable annotators in the data market and aligns the incentives of stakeholders toward a more ethical and sustainable practice. In addition, our framework also benefits LLM developers, by demonstrating that our framework maximizes their utilities and welfare in the long term. Fair compensation encourages ongoing participation of data annotators in the market, ensuring a steady supply of diverse, high-quality datasets essential for LLM development. By addressing existing inequities, our work lays the foundation for a more sustainable, equitable, and mutually beneficial ecosystem for all stakeholders in the LLM data market.

\newpage
\section*{NeurIPS Paper Checklist}

\begin{enumerate}

\item {\bf Claims}
    \item[] Question: Do the main claims made in the abstract and introduction accurately reflect the paper's contributions and scope?
    \item[] Answer: \answerYes{} 
    \item[] Justification:  We base on abstract/introduction on the experiments, analysis, and main findings in our paper. Our theoretical and empirical results support the framework we propose.
    \item[] Guidelines:
    \begin{itemize}
        \item The answer NA means that the abstract and introduction do not include the claims made in the paper.
        \item The abstract and/or introduction should clearly state the claims made, including the contributions made in the paper and important assumptions and limitations. A No or NA answer to this question will not be perceived well by the reviewers. 
        \item The claims made should match theoretical and experimental results, and reflect how much the results can be expected to generalize to other settings. 
        \item It is fine to include aspirational goals as motivation as long as it is clear that these goals are not attained by the paper. 
    \end{itemize}

\item {\bf Limitations}
    \item[] Question: Does the paper discuss the limitations of the work performed by the authors?
    \item[] Answer: \answerYes{} 
    \item[] Justification: We discuss our limitations in \cref{appendix:limitations_impact}
    \item[] Guidelines: 
    \begin{itemize}
        \item The answer NA means that the paper has no limitation while the answer No means that the paper has limitations, but those are not discussed in the paper. 
        \item The authors are encouraged to create a separate "Limitations" section in their paper.
        \item The paper should point out any strong assumptions and how robust the results are to violations of these assumptions (e.g., independence assumptions, noiseless settings, model well-specification, asymptotic approximations only holding locally). The authors should reflect on how these assumptions might be violated in practice and what the implications would be.
        \item The authors should reflect on the scope of the claims made, e.g., if the approach was only tested on a few datasets or with a few runs. In general, empirical results often depend on implicit assumptions, which should be articulated.
        \item The authors should reflect on the factors that influence the performance of the approach. For example, a facial recognition algorithm may perform poorly when image resolution is low or images are taken in low lighting. Or a speech-to-text system might not be used reliably to provide closed captions for online lectures because it fails to handle technical jargon.
        \item The authors should discuss the computational efficiency of the proposed algorithms and how they scale with dataset size.
        \item If applicable, the authors should discuss possible limitations of their approach to address problems of privacy and fairness.
        \item While the authors might fear that complete honesty about limitations might be used by reviewers as grounds for rejection, a worse outcome might be that reviewers discover limitations that aren't acknowledged in the paper. The authors should use their best judgment and recognize that individual actions in favor of transparency play an important role in developing norms that preserve the integrity of the community. Reviewers will be specifically instructed to not penalize honesty concerning limitations.
    \end{itemize}

\item {\bf Theory Assumptions and Proofs}
    \item[] Question: For each theoretical result, does the paper provide the full set of assumptions and a complete (and correct) proof?
    \item[] Answer: \answerYes{} 
    \item[] Justification: All assumptions and proofs are discussed in the main paper and in the appendix.
    \item[] Guidelines:
    \begin{itemize}
        \item The answer NA means that the paper does not include theoretical results. 
        \item All the theorems, formulas, and proofs in the paper should be numbered and cross-referenced.
        \item All assumptions should be clearly stated or referenced in the statement of any theorems.
        \item The proofs can either appear in the main paper or the supplemental material, but if they appear in the supplemental material, the authors are encouraged to provide a short proof sketch to provide intuition. 
        \item Inversely, any informal proof provided in the core of the paper should be complemented by formal proofs provided in appendix or supplemental material.
        \item Theorems and Lemmas that the proof relies upon should be properly referenced. 
    \end{itemize}

    \item {\bf Experimental Result Reproducibility}
    \item[] Question: Does the paper fully disclose all the information needed to reproduce the main experimental results of the paper to the extent that it affects the main claims and/or conclusions of the paper (regardless of whether the code and data are provided or not)?
    \item[] Answer: \answerYes{} 
    \item[] Justification: We provide details on datasets, dataset splits, models, and training procedure/hyperparameters in the paper. In addition, all resources use for our experiments are open-sourced.
    \item[] Guidelines:
    \begin{itemize}
        \item The answer NA means that the paper does not include experiments.
        \item If the paper includes experiments, a No answer to this question will not be perceived well by the reviewers: Making the paper reproducible is important, regardless of whether the code and data are provided or not.
        \item If the contribution is a dataset and/or model, the authors should describe the steps taken to make their results reproducible or verifiable. 
        \item Depending on the contribution, reproducibility can be accomplished in various ways. For example, if the contribution is a novel architecture, describing the architecture fully might suffice, or if the contribution is a specific model and empirical evaluation, it may be necessary to either make it possible for others to replicate the model with the same dataset, or provide access to the model. In general. releasing code and data is often one good way to accomplish this, but reproducibility can also be provided via detailed instructions for how to replicate the results, access to a hosted model (e.g., in the case of a large language model), releasing of a model checkpoint, or other means that are appropriate to the research performed.
        \item While NeurIPS does not require releasing code, the conference does require all submissions to provide some reasonable avenue for reproducibility, which may depend on the nature of the contribution. For example
        \begin{enumerate}
            \item If the contribution is primarily a new algorithm, the paper should make it clear how to reproduce that algorithm.
            \item If the contribution is primarily a new model architecture, the paper should describe the architecture clearly and fully.
            \item If the contribution is a new model (e.g., a large language model), then there should either be a way to access this model for reproducing the results or a way to reproduce the model (e.g., with an open-source dataset or instructions for how to construct the dataset).
            \item We recognize that reproducibility may be tricky in some cases, in which case authors are welcome to describe the particular way they provide for reproducibility. In the case of closed-source models, it may be that access to the model is limited in some way (e.g., to registered users), but it should be possible for other researchers to have some path to reproducing or verifying the results.
        \end{enumerate}
    \end{itemize}

\item {\bf Open access to data and code}
    \item[] Question: Does the paper provide open access to the data and code, with sufficient instructions to faithfully reproduce the main experimental results, as described in supplemental material?
    \item[] Answer: \answerYes{} 
    \item[] Justification: Our code/data will be openly availiable on GitHub
    \item[] Guidelines: 
    \begin{itemize}
        \item The answer NA means that paper does not include experiments requiring code.
        \item Please see the NeurIPS code and data submission guidelines (\url{https://nips.cc/public/guides/CodeSubmissionPolicy}) for more details.
        \item While we encourage the release of code and data, we understand that this might not be possible, so “No” is an acceptable answer. Papers cannot be rejected simply for not including code, unless this is central to the contribution (e.g., for a new open-source benchmark).
        \item The instructions should contain the exact command and environment needed to run to reproduce the results. See the NeurIPS code and data submission guidelines (\url{https://nips.cc/public/guides/CodeSubmissionPolicy}) for more details.
        \item The authors should provide instructions on data access and preparation, including how to access the raw data, preprocessed data, intermediate data, and generated data, etc.
        \item The authors should provide scripts to reproduce all experimental results for the new proposed method and baselines. If only a subset of experiments are reproducible, they should state which ones are omitted from the script and why.
        \item At submission time, to preserve anonymity, the authors should release anonymized versions (if applicable).
        \item Providing as much information as possible in supplemental material (appended to the paper) is recommended, but including URLs to data and code is permitted.
    \end{itemize}

\item {\bf Experimental Setting/Details}
    \item[] Question: Does the paper specify all the training and test details (e.g., data splits, hyperparameters, how they were chosen, type of optimizer, etc.) necessary to understand the results?
    \item[] Answer: \answerYes{} 
    \item[] Justification: We describe all these details in the paper.
    \item[] Guidelines:
    \begin{itemize}
        \item The answer NA means that the paper does not include experiments.
        \item The experimental setting should be presented in the core of the paper to a level of detail that is necessary to appreciate the results and make sense of them.
        \item The full details can be provided either with the code, in appendix, or as supplemental material.
    \end{itemize}

\item {\bf Experiment Statistical Significance}
    \item[] Question: Does the paper report error bars suitably and correctly defined or other appropriate information about the statistical significance of the experiments?
    \item[] Answer: \answerNo{} 
    \item[] Justification: The experiments in our paper focus on qualitative or directional insights, not formal statistical claims.
    \item[] Guidelines:
    \begin{itemize}
        \item The answer NA means that the paper does not include experiments.
        \item The authors should answer "Yes" if the results are accompanied by error bars, confidence intervals, or statistical significance tests, at least for the experiments that support the main claims of the paper.
        \item The factors of variability that the error bars are capturing should be clearly stated (for example, train/test split, initialization, random drawing of some parameter, or overall run with given experimental conditions).
        \item The method for calculating the error bars should be explained (closed form formula, call to a library function, bootstrap, etc.)
        \item The assumptions made should be given (e.g., Normally distributed errors).
        \item It should be clear whether the error bar is the standard deviation or the standard error of the mean.
        \item It is OK to report 1-sigma error bars, but one should state it. The authors should preferably report a 2-sigma error bar than state that they have a 96\% CI, if the hypothesis of Normality of errors is not verified.
        \item For asymmetric distributions, the authors should be careful not to show in tables or figures symmetric error bars that would yield results that are out of range (e.g. negative error rates).
        \item If error bars are reported in tables or plots, The authors should explain in the text how they were calculated and reference the corresponding figures or tables in the text.
    \end{itemize}

\item {\bf Experiments Compute Resources}
    \item[] Question: For each experiment, does the paper provide sufficient information on the computer resources (type of compute workers, memory, time of execution) needed to reproduce the experiments?
    \item[] Answer: \answerYes{} 
    \item[] Justification: We describe this in \cref{apdx:exp_setup_valuation}
    \item[] Guidelines:
    \begin{itemize}
        \item The answer NA means that the paper does not include experiments.
        \item The paper should indicate the type of compute workers CPU or GPU, internal cluster, or cloud provider, including relevant memory and storage.
        \item The paper should provide the amount of compute required for each of the individual experimental runs as well as estimate the total compute. 
        \item The paper should disclose whether the full research project required more compute than the experiments reported in the paper (e.g., preliminary or failed experiments that didn't make it into the paper). 
    \end{itemize}
    
\item {\bf Code Of Ethics}
    \item[] Question: Does the research conducted in the paper conform, in every respect, with the NeurIPS Code of Ethics \url{https://neurips.cc/public/EthicsGuidelines}?
    \item[] Answer: \answerYes{} 
    \item[] Justification: We have reviewed the NeurIPS Code of Ethics and confirms that our research conforms with the NeurIPS Code of Ethics.
    \item[] Guidelines:
    \begin{itemize}
        \item The answer NA means that the authors have not reviewed the NeurIPS Code of Ethics.
        \item If the authors answer No, they should explain the special circumstances that require a deviation from the Code of Ethics.
        \item The authors should make sure to preserve anonymity (e.g., if there is a special consideration due to laws or regulations in their jurisdiction).
    \end{itemize}

\item {\bf Broader Impacts}
    \item[] Question: Does the paper discuss both potential positive societal impacts and negative societal impacts of the work performed?
    \item[] Answer: \answerYes{ }
    \item[] Justification: We discuss this in \cref{appendix:limitations_impact}.
    \item[] Guidelines:
    \begin{itemize}
        \item The answer NA means that there is no societal impact of the work performed.
        \item If the authors answer NA or No, they should explain why their work has no societal impact or why the paper does not address societal impact.
        \item Examples of negative societal impacts include potential malicious or unintended uses (e.g., disinformation, generating fake profiles, surveillance), fairness considerations (e.g., deployment of technologies that could make decisions that unfairly impact specific groups), privacy considerations, and security considerations.
        \item The conference expects that many papers will be foundational research and not tied to particular applications, let alone deployments. However, if there is a direct path to any negative applications, the authors should point it out. For example, it is legitimate to point out that an improvement in the quality of generative models could be used to generate deepfakes for disinformation. On the other hand, it is not needed to point out that a generic algorithm for optimizing neural networks could enable people to train models that generate Deepfakes faster.
        \item The authors should consider possible harms that could arise when the technology is being used as intended and functioning correctly, harms that could arise when the technology is being used as intended but gives incorrect results, and harms following from (intentional or unintentional) misuse of the technology.
        \item If there are negative societal impacts, the authors could also discuss possible mitigation strategies (e.g., gated release of models, providing defenses in addition to attacks, mechanisms for monitoring misuse, mechanisms to monitor how a system learns from feedback over time, improving the efficiency and accessibility of ML).
    \end{itemize}
    
\item {\bf Safeguards}
    \item[] Question: Does the paper describe safeguards that have been put in place for responsible release of data or models that have a high risk for misuse (e.g., pretrained language models, image generators, or scraped datasets)?
    \item[] Answer: \answerNA{}  
    \item[] Justification: All data/models used in our research has been obtained from existing open-sourced resources that researchers already use. 
    \item[] Guidelines:
    \begin{itemize}
        \item The answer NA means that the paper poses no such risks.
        \item Released models that have a high risk for misuse or dual-use should be released with necessary safeguards to allow for controlled use of the model, for example by requiring that users adhere to usage guidelines or restrictions to access the model or implementing safety filters. 
        \item Datasets that have been scraped from the Internet could pose safety risks. The authors should describe how they avoided releasing unsafe images.
        \item We recognize that providing effective safeguards is challenging, and many papers do not require this, but we encourage authors to take this into account and make a best faith effort.
    \end{itemize}

\item {\bf Licenses for existing assets}
    \item[] Question: Are the creators or original owners of assets (e.g., code, data, models), used in the paper, properly credited and are the license and terms of use explicitly mentioned and properly respected?
    \item[] Answer:  \answerYes{} 
    \item[] Justification: We provide citations and acknowledgments for all materials (datasets, models, code) used in our research.
    \item[] Guidelines:
    \begin{itemize}
        \item The answer NA means that the paper does not use existing assets.
        \item The authors should cite the original paper that produced the code package or dataset.
        \item The authors should state which version of the asset is used and, if possible, include a URL.
        \item The name of the license (e.g., CC-BY 4.0) should be included for each asset.
        \item For scraped data from a particular source (e.g., website), the copyright and terms of service of that source should be provided.
        \item If assets are released, the license, copyright information, and terms of use in the package should be provided. For popular datasets, \url{paperswithcode.com/datasets} has curated licenses for some datasets. Their licensing guide can help determine the license of a dataset.
        \item For existing datasets that are re-packaged, both the original license and the license of the derived asset (if it has changed) should be provided.
        \item If this information is not available online, the authors are encouraged to reach out to the asset's creators.
    \end{itemize}

\item {\bf New Assets}
    \item[] Question: Are new assets introduced in the paper well documented and is the documentation provided alongside the assets?
    \item[] Answer: \answerYes{} 
    \item[] Justification: We will provide an official open source repo for this paper.
    \item[] Guidelines:
    \begin{itemize}
        \item The answer NA means that the paper does not release new assets.
        \item Researchers should communicate the details of the dataset/code/model as part of their submissions via structured templates. This includes details about training, license, limitations, etc. 
        \item The paper should discuss whether and how consent was obtained from people whose asset is used.
        \item At submission time, remember to anonymize your assets (if applicable). You can either create an anonymized URL or include an anonymized zip file.
    \end{itemize}

\item {\bf Crowdsourcing and Research with Human Subjects}
    \item[] Question: For crowdsourcing experiments and research with human subjects, does the paper include the full text of instructions given to participants and screenshots, if applicable, as well as details about compensation (if any)? 
    \item[] Answer: \answerNA{} 
    \item[] Justification: We do not conduct any research with human subjects.
    \item[] Guidelines:
    \begin{itemize}
        \item The answer NA means that the paper does not involve crowdsourcing nor research with human subjects.
        \item Including this information in the supplemental material is fine, but if the main contribution of the paper involves human subjects, then as much detail as possible should be included in the main paper. 
        \item According to the NeurIPS Code of Ethics, workers involved in data collection, curation, or other labor should be paid at least the minimum wage in the country of the data collector. 
    \end{itemize}

\item {\bf Institutional Review Board (IRB) Approvals or Equivalent for Research with Human Subjects}
    \item[] Question: Does the paper describe potential risks incurred by study participants, whether such risks were disclosed to the subjects, and whether Institutional Review Board (IRB) approvals (or an equivalent approval/review based on the requirements of your country or institution) were obtained?
    \item[] Answer:  \answerNA{} 
    \item[] Justification: We do not conduct any research with human subjects.
    \item[] Guidelines:
    \begin{itemize}
        \item The answer NA means that the paper does not involve crowdsourcing nor research with human subjects.
        \item Depending on the country in which research is conducted, IRB approval (or equivalent) may be required for any human subjects research. If you obtained IRB approval, you should clearly state this in the paper. 
        \item We recognize that the procedures for this may vary significantly between institutions and locations, and we expect authors to adhere to the NeurIPS Code of Ethics and the guidelines for their institution. 
        \item For initial submissions, do not include any information that would break anonymity (if applicable), such as the institution conducting the review.
    \end{itemize}

\end{enumerate}
\end{document}